\newtheorem{theorem}{Theorem}[section]
\newtheorem{proposition}[theorem]{Proposition}
\newtheorem{corollary}[theorem]{Corollary}
\newtheorem{lemma}[theorem]{Lemma}
\theoremstyle{definition}
\newtheorem{definition}{Definition}[section]
\newtheorem{example}{Example}
\newtheorem{remark}{Remark}
\newtheorem*{thm*}{Theorem}
\newtheorem*{lem*}{Lemma}
\newtheorem*{prop*}{Proposition}
\newcommand{\set}[1]{\left\{#1\right\}}
\newcommand{\abs}[1]{\left|#1\right|}
\newcommand{\paren}[1]{\left(#1\right)}
\newcommand{\R}{\mathbb{R}}
\newcommand{\eng}{\mathscr{E}}
\newcommand{\Eng}{{\breve{\mathscr{E}}}}
\renewcommand{\Eng}{\eng}
\newcommand{\edge}{E}
\newcommand{\trace}{\operatorname{Tr}}
\newcommand{\eps}{\varepsilon}
\renewcommand{\graph}{\mathsf{G}}
\newcommand{\gmet}{\graph^{\text{met}}}
\newcommand{\res}{r}
\newcommand{\dom}{\operatorname{dom}}
\renewcommand{\dom}{\operatorname{Dom}}
\newcommand{\Dom}{\operatorname{Dom}}
\newcommand{\F}{\mathcal{F}}
\newcommand{\Xm}{X_{\Am}}
\newcommand{\Gmet}{G^{\text{met}}}
\newcommand{\haus}{\mathcal{H}^1}
\DeclareMathOperator{\const}{const}
\DeclareMathOperator{\Diam}{Diam}
\DeclareMathOperator{\diam}{diam}
\newcommand{\Ka}{K_{\alpha}}
\newcommand{\A}{\mathcal{A}}
\newcommand{\An}{\mathcal{A}^n}
\newcommand{\Am}{\mathcal{A}^m}
\newcommand{\N}{\mathbb{N}}
\newcommand{\supp}{supp}
\providecommand{\abs}[1]{\left\vert#1\right\vert}
\providecommand{\norm}[1]{\left\Vert#1\right\Vert}
\def\s-s{self-similar}
\numberwithin{equation}{section}
\title[Hanoi-type fractal quantum graphs]{Energy and Laplacian on Hanoi-type fractal quantum graphs}
\author[P.~Alonso-Ruiz]{Patricia Alonso-Ruiz}
\address[P.~Alonso-Ruiz]{Institute of Stochastics, Ulm University, Ulm, Germany}
\email[P.~Alonso-Ruiz]{patricia.alonso@uni-ulm.de}
\urladdr{www.uni-ulm.de/en/mawi/institute-of-stochastics/mitarbeiter/patricia-alonso-ruiz.html}
\author[D.~J.~Kelleher]{Daniel J.~Kelleher}
\address[D.~J.~Kelleher]{Department of Mathematics, Purdue University
150 N. University Street, West Lafayette, IN 47907-2067, USA}
\email[D.~J.~Kelleher]{dkellehe@purdue.edu}
\urladdr{http://web.ics.purdue.edu/~dkellehe/}
\author[A.~Teplyaev]{Alexander Teplyaev}
\address[A.~Teplyaev]{Department of Mathematics, University of Connecticut, Storrs, CT 06269-3009, USA}
\email[A.~Teplyaev]{teplyaev@uconn.edu}
\urladdr{http://homepages.uconn.edu/teplyaev/}
\thanks{Research supported in part by NSF grant  DMS 1106982}
\keywords{fractal quantum graphs, spectral asymptotics}
\begin{document}

\begin{abstract}
This article studies potential theory and spectral analysis on compact metric spaces, which we refer to as fractal quantum graphs. These spaces can be represented as a (possibly infinite) union of 1-dimensional intervals and a totally disconnected (possibly uncountable) compact set, which roughly speaking represents the set of junction points. 
Classical quantum graphs and fractal spaces such as the Hanoi attractor are included among them. 
We begin with proving the existence of a resistance form on the Hanoi attractor, and go on to establish heat kernel estimates and upper and lower bounds on the eigenvalue counting function of Laplacians corresponding to weakly self-similar measures on the Hanoi attractor.
These estimates and bounds rely heavily on the relation between the length and volume scaling factors of the fractal.
We then state and prove a necessary and sufficient condition for the existence of a resistance form on a general fractal quantum graph. Finally, we extend our spectral results to a large class of weakly self-similar fractal quantum graphs. 
\setcounter{tocdepth}{1}
\tableofcontents
\end{abstract}\date{\today}
\maketitle
\section{Introduction}

The quantum graphs provide important examples in mathematical physics literature, see \cite{BK13} and references therein. 
However the notion of fractal quantum graphs has eluded the researches in mathematics and physics for long time. 
This paper presents new results 
on 
the existence of a well defined energy (resistance/Dirichlet) form 
on fractal quantum graphs. 
In order to show our construction in detail, we begin with a concrete example embedded in $\mathbb R^2$, and also present a general theory.   
In addition, we show how to 
 obtain the spectral asymptotics, i.e. asymptotic estimates of the eigenvalue counting function  and the spectral dimension in particular, for 
a 
 class of weakly self-similar examples. 
 Our interest in fractal quantum graphs spectral problems is 
 motivated by recent physics applications, see \cite{Akkermans, Dunne, DGV}.

 By introducing the concept of fractal quantum graphs we would like to connect physics li\-te\-ra\-ture on fractals (see e.g.~\cite{BCD+08,ADT10,ABD+12}) with quantum graphs. The mo\-dern theory of quantum graphs and its connection to quantum chaos was started in~\cite{KS97} and has been discussed in~\cite{KS02,KS03,Ku04}, see also~\cite{GS06,BK13} for an exhaustive review on this field. \color{black}Fractal networks in particular have been of interest in the study of superconductivity \cite{Ale83, AH83}. Our work in a sense partially generalizes the dendrite fractals considered in \cite{Kig95}, see also \cite{Croydon} and references therein. Note also recent topological results on very similar spaces in~\cite{Geo14} as well as construction of Brownian motion on them in~\cite{GK14}. In our work we attempt to appeal to two different communities that in present have small intersection: the fractal analysis community and the quantum graph community, hopefully generating a bidirectional flow of ideas from both fields.
 
%
%
%
%

Resistance forms, see Definition \ref{def-res}, have been very useful in the study of analysis on fractals from an intrinsic point of view, in particular with Jun Kigami's work on post-critically finite self-similar fractals in~\cite{Kig93} and \cite[Chapter 2]{Kig01}. 
Resistance forms are bilinear forms which induce an effective resistance  between points, analogous to that of electrical networks, and this resistance defines a me\-tric on the space. 
With an appropriate measure, a resistance form is a Dirichlet form on the $L^2$ space associated with that measure. In this way a choice of measure also induces a self-adjoint operator and a symmetric Markov process. The behavior of the eigenvalue counting function  determines the spectral dimension of these fractals while the heat kernel estimates indicate the behavior of the stochastic process. These objects also determine physical aspects of the space \cite{ADT09,bAV00}. 
Heat kernel estimates for 
resistance forms and other related questions are discussed in~\cite{Kig12}. 
In~\cite{HT12,IRT12} a general theory of intrinsic geometric analysis is developed for Dirichlet spaces in general, in~\cite{HKT13} this is applied to resistance forms and length structures, and to  differential equations on these spaces.
%
 Many examples of resistance forms come from finitely ramified, mostly self-similar, cases~\cite{HMT06,FST06,BCF+07,Pei08,MST04,Tep08}. By important technical reasons, these results are not directly applicable to fractal quantum graphs, mainly because it is more natural to approximate these spaces by quantum graphs rather than discrete networks. However, we are able to prove the existence of a resistance form on some large class of  spaces which have no a priori self-similarity, and give a concrete description of its domain.


\begin{figure}[h!tpb]
\includegraphics[width=3.3cm,height=2.8cm]{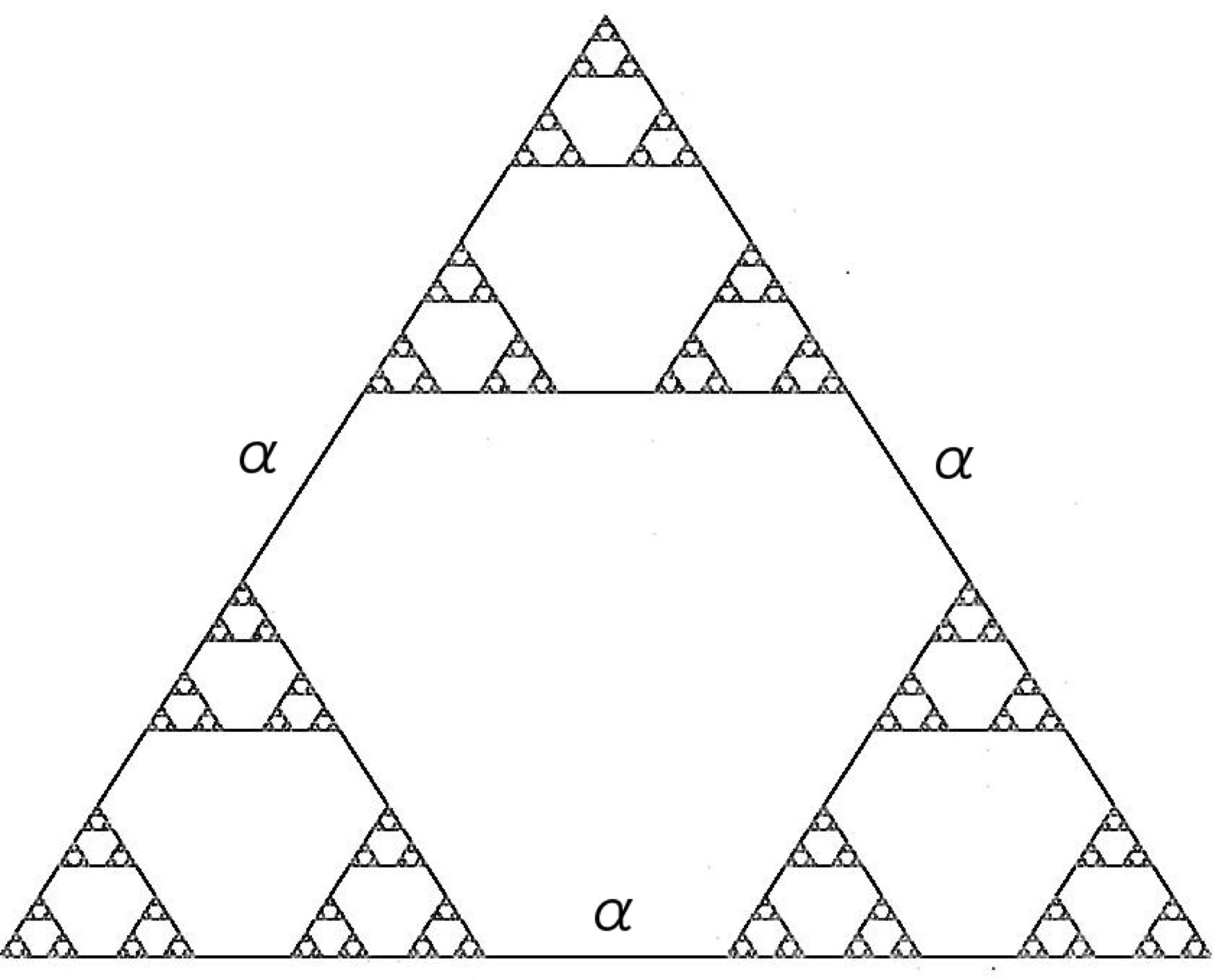}
\caption{The Hanoi attractor.}
\label{HanoiAtt}
\end{figure}

We start by analyzing a particular quantum graph, the \textit{Hanoi attractor of parameter} $\alpha$ that we denote by $\Ka$ (see Figure~\ref{HanoiAtt}). This is a non self-similar fractal, where the parameter $\alpha$ can be understood as a length scaling parameter as it is the length of the three longest segments joining copies of $\Ka$. 
This parameter substantially affects the properties of $\Ka$: when $\alpha=0$, $\Ka$ coincides with the Sierpi\'nski gasket; if $\alpha\in(0,1/3)$, then $\Ka$ has fractional Hausdorff dimension $\ln 3/\ln 2-\ln (1-\alpha)$; if $\alpha\in [1/3, 1)$ we obtain a 1-dimensional object; and if $\alpha=1$, then $\Ka$ is an equilateral triangle. The geometric properties of $\Ka$ were studied in~\cite{ARF12}. 

To study spectral asymptotics, it is necessary to consider spaces which are self-similar in a weak sense, such as the Hanoi attractors and their higher dimensional generalizations. For the Hanoi attractor with parameter $\alpha\in (1/3,1)$, the choice of measure is naturally the 1-dimensional Hausdorff measure, i.e. length measure.  If $\alpha\in (0,1/3)$, having Hausdorff dimension strictly greater than 1 complicates the analysis, although every point has a neighborhood isometric to an interval with the exception of a totally disconnected (i.e. topologically 0 dimensional) set. To deal with these issues we introduce finite self-similar measures on $\Ka$.
The main technique that  we use to obtain the spectral asymptotics  is the 
standard Dirichlet-Neumann bracketing, see~\cite{KL93,Kaj10}. These arguments, informally speaking, use the fact that small-scale metric pro\-per\-ties correspond to larger eigenvalues. Weak self-similarity is therefore critical in achieving spectral asymptotics, as it allows us to infer pro\-per\-ties of the fractal at arbitrarily small scales.

\section{Main Results}

After recalling some basics of the theory of metric and quantum graphs in Section 3, Section 4 is devoted to the approximation of any Hanoi attractor $X:= \Ka$ by metric graphs. 

Let $\ell_1,\ell_2,\ell_3\subset X$ denote the line segments of length $\alpha\in(0,1]$, and let $F_1(X),F_2(X),F_3(X)$ denote the first-level copies of $X$. The scaling length of these copies is
\begin{equation}\label{e-r}
r:=\frac{1-\alpha}{2}
\end{equation} 
 see Figure \ref{FirstLevelCopies}({\sc a}).\begin{figure}[h!tpb]
\begin{subfigure}{.4\linewidth}
\centering
\includegraphics[width=3.2cm,height=2.8cm]{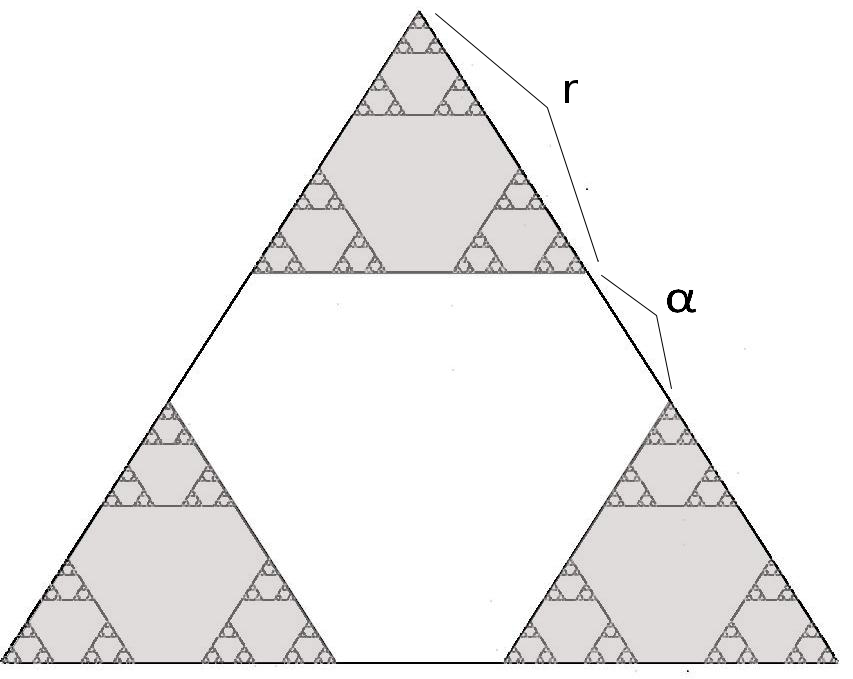}
\caption{Length scaling parameters}
\label{fig:sub1-A}
\end{subfigure}%
\begin{subfigure}{.4\linewidth}
\centering
\includegraphics[width=3.2cm,height=2.8cm]{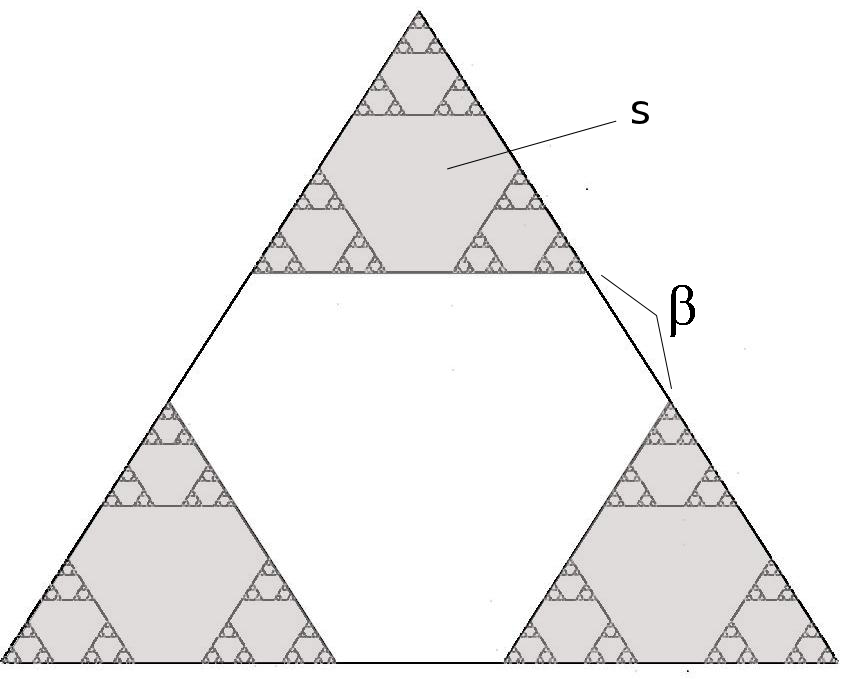}
\caption{Measure scaling parameters}
\end{subfigure}
\caption{Length and measure scaling parameters in the Hanoi attractor.}
\label{FirstLevelCopies}
\end{figure}We are interested in the resistance form, in the sense of Kigami \cite{Kig01}, that satisfies 
the natural scaling relation, see also Lemma~\ref{lem-r},  
\begin{equation}\label{e-Escaling}
\eng(u, u) = \sum_{i = 1}^3 r^{-1}\eng(u{\circ}F_i, u{\circ}F_i) + \sum_{i = 4}^6\int_{F_i(K)} (u')^2dx
\end{equation}
where contractions $F_i$ are defined in Section 4. Here, $dx$ represents the usual one dimensional integral along three straight line segments 
of length $\alpha$
in $X$, and $u'$ represent the usual derivatives along these straight line segments.
Combining the standard theory of electrical networks, including the so co called Delta-Y and Y-Delta transforms in Figure~\ref{deltay-}, with the abstract theory of Kigami's resistance forms   \cite{Kig01}, one can obtain the following standard proposition. 
\begin{prop*}
For any $\alpha\in(0,1]$ and $r$ given by the relation \eqref{e-r} there is a unique resistance form $\eng$ on $\Ka$ that satisfies  \eqref{e-Escaling}.
\end{prop*}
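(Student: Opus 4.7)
The plan is to combine Kigami's machinery for building resistance forms as limits of compatible sequences on finite networks (see \cite[Chapter~2]{Kig01}) with the classical Delta--Y and Y--Delta transforms from electrical network theory, as the statement itself suggests. Concretely, I would introduce a nested sequence of metric graph approximations $\Fa{n}$ of $\Ka$ with finite vertex sets $\Van$, equip each $\Fa{n}$ with the natural quantum-graph Dirichlet energy $\eng_n$ obtained by iterating \eqref{e-Escaling} $n$ times, and prove that the traces of these forms onto the $\Van$'s form a compatible sequence of resistance forms whose limit yields $\eng$.

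The key technical step is compatibility: the trace of $\eng_{n+1}$ onto $\Van$ must coincide with the trace of $\eng_n$. Because the construction of $\Fa{n+1}$ from $\Fa{n}$ is self-similar (each level-$n$ cell is replaced by a scaled copy of the level-$1$ configuration), compatibility at all levels reduces to a single level-$1$ computation. Each of the three first-level copies $F_i(\Ka)$ is attached to the rest of $\Fa{1}$ at three contact points; tracing its internal energy onto those three contact points produces a star (Y) network whose arms are in series with the straight segments $\ell_j$ of length $\alpha$. Applying a Y--Delta transform to each of the three such Y's converts the whole of $\Fa{1}$ into an equivalent triangle of resistors between the three outer vertices of $\Fa{0}$, and equating this triangle with the level-$0$ energy gives an algebraic relation whose unique solution is precisely $r=(1-\alpha)/2$. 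This Delta--Y calculation is the main obstacle; once it is verified, the same scaling makes every pair $\eng_n,\eng_{n+1}$ trace-compatible.

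With compatibility established, the standard theory of \cite[Chapter~2]{Kig01} produces a unique resistance form $(\eng,\Dom\eng)$ on $\Vas=\bigcup_n \Van$ via
\[
\eng(u,u)=\lim_{n\to\infty}\eng_n(u|_{\Van},u|_{\Van}),
\]
which extends to $\Ka$ by effective-resistance completion. The self-similar identity \eqref{e-Escaling} holds at every finite level by the construction of $\eng_n$ and passes to the limit. Uniqueness follows because any other form $\eng'$ satisfying \eqref{e-Escaling} is forced by the same iterated Delta--Y reduction to have traces equal to $\eng_n|_{\Van}$ on every $\Van$, so $\eng'=\eng$ on the dense set $\Vas$ and hence on all of $\Dom\eng$.
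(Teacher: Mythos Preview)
Your overall framework—build a compatible sequence of finite resistance forms and invoke Kigami's limit theorem—is precisely what the paper has in mind. But the central computation is misidentified, and this derails the construction as you have described it.

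You write that the Delta--Y reduction ``gives an algebraic relation whose unique solution is precisely $r=(1-\alpha)/2$.'' This is backwards. The number $r=(1-\alpha)/2$ is a \emph{given input}: it is the geometric length scaling of the similitudes $F_1,F_2,F_3$, fixed before any energy is defined. What the Delta--Y calculation actually produces is a renormalization map on effective resistances. If the trace of $\eng$ on $V_0$ is the symmetric triangle with edge resistance $\mathcal R$, then the trace on $V_0$ of the level-$1$ configuration (three $r$-scaled copies joined by segments of resistance $\alpha$) is the triangle with edge resistance $\tfrac53 r\mathcal R+\alpha$; this is exactly the recursion the paper records in Figure~\ref{deltay-} and uses in the proof of Theorem~\ref{lemR}. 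Compatibility is the fixed-point equation $\mathcal R=\tfrac53 r\mathcal R+\alpha$, whose unique solution $\mathcal R^*=3\alpha/(3-5r)$ exists precisely because $\tfrac53 r<1$ for every $\alpha\in(0,1]$. The unknown is $\mathcal R$, not $r$; indeed the paper remarks that the construction goes through for any $r\in[0,\tfrac35)$, so nothing in the network reduction can single out $r=(1-\alpha)/2$.

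This error is not cosmetic, because it affects how you initialize the sequence. The ``natural quantum-graph Dirichlet energy'' on $\Fa{0}$ traces to $V_0$ as a triangle with edge resistance $1$, which is \emph{not} the fixed point $\mathcal R^*$, so the quantum-graph energies on $\Fa{n}$ are not trace-compatible as you assert. The fix is either (i) to start from the fixed-point form on $V_0$ and iterate (then compatibility is automatic by construction), or (ii) to observe, as the paper does, that the affine map $z\mapsto\tfrac53 rz+\alpha$ is a strict contraction, so the traces converge to $\mathcal R^*$ from any starting data. Once this is corrected, the remainder of your outline—passing to the limit via \cite[Chapter~2]{Kig01}, verifying \eqref{e-Escaling} in the limit, and deducing uniqueness from equality of traces on every $\Van$—is sound and matches the paper's sketch.
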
In Sections 4, 5, 6 of our paper we obtain much more information about this resistance form. For instance, the effective resistance between the corner points of the triangle is given by \eqref{e-Ra}.

\begin{remark}

We would like to emphasize that the relation \eqref{e-r} is dictated by the Euclidean geometry of $\mathbb R^2$. 
If one considers $\Ka$ as  an abstract topological space, 
then for any $\alpha>0,R>0$ one can build a unique resistance form  that   satisfies a relation  
$$ 
\eng(u, u) = \sum_{i = 1}^3 r^{-1}\eng(u{\circ}F_i, u{\circ}F_i) + R^{-1}\sum_{i = 4}^6\int_{F_i(K)} (u')^2dx
$$ 
if and only if $r\in[0,\frac35)$, which is a larger range than that dictated by \eqref{e-r}. Such questions 
are discussed in \cite{Kigami-note} and 
will be subject of future work.  
In Sections 4 and 5 we focus on the relation of $\eng$ and the Euclidean geometry, 
and in Section 8 we 
explain how to generalize it to a wider class of examples, which do not have to be self-similar in any sense. 
\end{remark}


We 
relate the 
resistance form $\eng$ with   the energy on $X$ that comes from the expression
\begin{equation} \label{e-one}
\eng(u,v):=\int_X u'v'\,dx 
\end{equation} 
%
for continuous functions which are differentiable when restricted to line segments. 
Here $dx$ represents the usual one dimensional integral along the countably many straight line segments in $X$, and $u',v'$ represent the usual derivatives along these straight line segments.

\begin{definition} We say  that $u\in H^1(X)$ if and only if $u\in C(X)$, the restriction of $u$ to any straight line segment is an $H^1$ function on that segment, and 
formula \eqref{e-one} gives $\eng(u,u)<\infty$. \end{definition} 

\begin{theorem}\label{thm resistance form}
$(\eng, H^1(X))$ is the resistance form on $X$ that satisfies \eqref{e-Escaling}.
\end{theorem}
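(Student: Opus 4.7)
The plan is to invoke the uniqueness statement in the Proposition preceding this Theorem: it suffices to prove that $(\eng, H^1(X))$ is a resistance form on $X$ satisfying the scaling relation \eqref{e-Escaling}.

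For \eqref{e-Escaling}, I would exploit the decomposition $X=F_1(X)\cup F_2(X)\cup F_3(X)\cup\ell_1\cup\ell_2\cup\ell_3$, whose pieces intersect only at a finite set of junction points of linear measure zero on every line segment. The integral $\int_X(u')^2\,dx$ therefore splits into the six corresponding integrals. On each segment $\ell_i$ this is already in the form of the second sum of \eqref{e-Escaling}. On each copy $F_i(X)$ for $i=1,2,3$ the map $F_i$ is a Euclidean similarity of ratio $r$, so the pullback $v:=u\circ F_i$ satisfies $v'=r\,(u'\circ F_i)$ along line segments while arc length transforms by the factor $r$. Change of variables then yields $\int_{F_i(X)}(u')^2\,dx=r^{-1}\eng(u\circ F_i,u\circ F_i)$ and simultaneously shows that $u\circ F_i\in H^1(X)$; this establishes \eqref{e-Escaling}.

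To verify the resistance-form axioms (RF1)--(RF5) I would use the metric-graph approximation $G_n\nearrow X$ from Section~4. The restriction of any $u\in H^1(X)$ to $G_n$ belongs to the standard quantum-graph Sobolev space $H^1(G_n)$ with the same energy, and iterating \eqref{e-Escaling} identifies $\eng$ as the monotone limit of these energies. Linearity, the unit-contraction Markov property, and the identification of the constants as the null space are then inherited from the line-segment level. Completeness modulo constants follows because a Cauchy sequence in $\eng$ is Cauchy segment-by-segment in $H^1$, and the uniform bound from (RF5) upgrades pointwise convergence along a skeleton to a continuous limit in $H^1(X)$. Axiom (RF5), namely $|u(p)-u(q)|^2\le R(p,q)\,\eng(u,u)$ with $R(p,q)<\infty$, reduces via the approximation to the corresponding inequality on each $G_n$, which is standard for quantum graphs; the effective resistances stabilize as $n\to\infty$ because each descent into a deeper first-level copy amplifies resistances by the factor $r^{-1}$ only, yielding a convergent geometric series since $r<1$.

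Once $(\eng,H^1(X))$ is shown to be a resistance form satisfying \eqref{e-Escaling}, the Proposition's uniqueness closes the argument. The main obstacle is the uniform resistance estimate underlying (RF5): when $\alpha<1/3$ the junction set is uncountable and one must bound resistances between points reachable only through infinite self-similar addresses, which requires a telescoping argument down the levels of the approximation using $r^{-1}$ as the per-level amplification factor.
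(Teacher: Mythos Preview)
Your approach differs from the paper's: you attempt to verify (RF1)--(RF5) directly for $(\eng,H^1(X))$, whereas the paper first builds an abstract resistance form via Kigami's compatible-sequence theorem (applied to traces $\eng_S$ on finite sets) and only afterwards identifies its domain with $H^1(X)$. A direct verification is conceivable, but your sketch has two genuine gaps.

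First, there is a circularity between (RF2) and (RF4). You argue for completeness by saying that segment-by-segment $H^1$-limits patch into a continuous function on $X$ because of ``the uniform bound from (RF5)'' (you mean (RF4): $|u(p)-u(q)|^2\le R(p,q)\eng(u,u)$). But this bound is only useful once you know $R(p,q)<\infty$ for \emph{all} $p,q\in X$, including pairs in the totally disconnected junction set that lie on no common finite-level graph. You have not established this independently. The paper breaks the circularity by proving Theorem~\ref{lemR} first: it shows that the increasing limit $R(p,q)=\lim_n R_n(p,q)$ exists, is a genuine metric, and is bi-Lipschitz to Euclidean distance. That result uses explicit Delta--Y reductions to compute the corner-to-corner resistance (the fixed point of $z\mapsto \tfrac{5r}{3}z+\alpha$), not just a generic scaling heuristic. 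Only with $R$ in hand does the H\"older bound become available to control limits on the junction set.

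Second, your resistance estimate is misstated. Passing into a first-level copy via $F_i$ multiplies resistances by $r$, not $r^{-1}$ (it is the \emph{energy} that scales by $r^{-1}$); the telescoping sum you want is $\sum_n r^n$, which indeed converges because $r<1$, but as written the sentence is inconsistent. More importantly, the geometric-series bound alone gives you that resistance diameters of $n$-cells shrink like $r^n$, but it does not by itself yield the two-sided estimate $c^{-1}|p-q|\le R(p,q)\le c|p-q|$ that you actually need to conclude that the $\eng$-completion sits inside $C(X)$ and equals $H^1(X)$.
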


{
The properties of the domain $H^1(X)$ of $\eng$ are very delicate. For instance, if $\alpha\in (1/3, 1)$, then the restriction of any $C^1(\R^2)$ function 
 to $X$ is in $H^1(X)$. However, this is not the case when $\alpha \in(0,1/3)$ because the total length of $X$ is infinite (see Remark~\ref{remarkDef}), and so a generic $u\in C^1(\R^2)$ will have infinite energy.
}

Section~\ref{sec-asymp} deals with the behavior of the eigenvalue counting function of the Laplacian associated to the Dirichlet form induced by the resistance form $(\eng,H^1(X))$ on $L^2(X,\mu)$, where $\mu$ is the 
unique 
weakly self-similar 
re\-gu\-lar probability measure on $X$ defined 
with measure scaling weights $\beta,s$ satisfying
\begin{equation}\label{eq: mu=1}
1=\mu(X)=3\beta+3s,
\end{equation}
where $\beta:=\mu(\ell_i)$ and $s:=\mu(F_i(X))$, $i=1,2,3$; 
$\mu$ is a scaled version of Lebesgue measure on $\ell_i$ and a scaled copy of itself on $F_i(X)$. In this way, $s^n=\mu(F_w(X))$ for any $n$-level copy of $X$, $F_w(X)$.
From~\eqref{eq: mu=1} we obtain that 
\begin{equation*}\label{eq: def of s}
s=\frac{1-3\beta}{3}
\end{equation*}
and hence $\beta\in(0,1/3)$. Note that if $\beta=1/3$, then $\mu(F_i(X))=0$, and thus the support of $\mu$ would not be all of $X$. If $\beta = 0$, then $\mu$ is the restriction of the $-\log 3/\log r$-dimensional Hausdorff measure on $\R^2$ to $X$. In this situation, the measure of any line segment is $0$, which is also undesirable. These assumptions will be briefly recalled at the beginning of Section 6.

Among our main  results are constructive polynomial estimates of the \textit{eigenvalue counting function} of the Laplacian associated to the Dirichlet form induced by $(\eng, H^1(X))$ under Dirichlet --resp. Neumann-- boundary conditions. As boundary of $X$ we consider the set $V_0$ which consists of the three vertices of the equilateral triangle where $X$ is embedded.

\begin{theorem}\label{theorem: d_S X}
Let $rs=\frac{1}{6}(1-\alpha)(1-3\beta)$, where $\alpha$ is the length scaling factor of $X$, and $\beta$ is the volume scaling factor of $\mu$. There exist constants $C_1,C_2>0$ and $x_0>0$ such that
\begin{itemize}
\item[(i)] if $0<rs<\frac{1}{9}$, then
\[C_1x^{\frac{1}{2}}\leq N_D(x)\leq N_N(x)\leq C_1 x^{\frac{1}{2}},\]
\item[(ii)] if $rs=\frac{1}{9}$, then
\[C_1x^{\frac{1}{2}}\log x\leq N_D(x)\leq N_N(x)\leq C_2 x^{\frac{1}{2}}\log x,\]
\item[(iii)] if $\frac{1}{9}<rs<\frac{1}{6}$, then
\[C_1x^{\frac{\log 3}{-\log (rs)}}\leq N_D(x)\leq N_N(x)\leq C_2 x^{\frac{\log 3}{-\log (rs)}}\]
\end{itemize}
for all $x>x_0$.
%
%
%

In particular, 
\begin{equation*}
d_S X=\left\{\begin{array}{rl}
						1,& 0<rs\leq \frac{1}{9},\\ 
						\frac{\log 9}{-\log (rs)},&\frac{1}{9}<rs<\frac{1}{6}.
\end{array}\right.
\end{equation*}
\end{theorem}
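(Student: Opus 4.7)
My plan is to use Dirichlet--Neumann bracketing applied to the iterated self-similar decomposition of $X$ dictated by \eqref{e-Escaling}, following the scheme of Kigami--Lapidus \cite{KL93} and Kajino \cite{Kaj10}. For each $n\in\mathbb{N}$, iterating \eqref{e-Escaling} produces the cover
\[
X \;=\; \bigcup_{|w|=n} F_w(X)\;\cup\; \bigcup_{k=0}^{n-1}\bigcup_{|w|=k}\bigcup_{j=1}^{3} F_w(\ell_j),
\]
whose pieces meet only at the finite junction set $\bigcup_{|w|\le n} F_w(V_0)$. Writing $N_D^{(n)}$ and $N_N^{(n)}$ for the counting functions of the problems obtained by imposing Dirichlet, respectively Neumann, at every vertex of this level-$n$ approximation (so each piece becomes independent), the min--max principle yields the chain
\[
N_D^{(n)}(x)\;\le\;N_D(x)\;\le\;N_N(x)\;\le\;N_N^{(n)}(x),
\]
so it suffices to bracket $N_\bullet^{(n)}$ itself at the rates in the statement.

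Next I would evaluate $N_\bullet^{(n)}$ using the scaling. Because $\eng$ scales by $r^{-1}$ and $\mu$ by $s$ under each $F_i$, the spectral problem on a copy $F_w(X)$ with $|w|=n$ is conjugate to the one on $X$ via dilation of eigenvalues by $(rs)^{-n}$, so its counting function equals $N_\bullet^X\bigl((rs)^n x\bigr)$. A level-$k$ segment $F_w(\ell_j)$ is a one-dimensional interval of length $\alpha r^k$ carrying uniform mass $\beta s^k$, so its Sturm--Liouville counting function is $\pi^{-1}\sqrt{\alpha\beta(rs)^k x}+O(1)$. Summing over the $3^n$ copies and the $3^{k+1}$ level-$k$ segments for $0\le k\le n-1$ yields
\[
N_\bullet^{(n)}(x)\;=\;3^n\,N_\bullet^X\bigl((rs)^n x\bigr)\;+\;\frac{3\sqrt{\alpha\beta x}}{\pi}\sum_{k=0}^{n-1}\bigl(3\sqrt{rs}\bigr)^{k}\;+\;O(3^n),
\]
where the $O(3^n)$ absorbs the floor-function corrections on each segment.

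I would then choose $n=n(x):=\lceil \log x / (-\log rs)\rceil$, so that $(rs)^n x$ stays bounded (hence $N_\bullet^X((rs)^n x)=O(1)$) and $3^n\asymp x^{\log 3/(-\log rs)}$. The trichotomy is now dictated by whether the geometric ratio $3\sqrt{rs}$ is less than, equal to, or greater than $1$, equivalently $rs<1/9$, $rs=1/9$, or $rs>1/9$. In case (i) the series converges absolutely and the boundary contribution is $\asymp\sqrt{x}$, which dominates $3^n=o(\sqrt{x})$; in case (ii) the series sums to $n\asymp\log x$, producing $\sqrt{x}\log x$, matching $3^n\asymp\sqrt{x}$; in case (iii) the series is driven by its last term $(3\sqrt{rs})^n=3^n(rs)^{n/2}$, so the boundary contribution $\sqrt{x}\,(3\sqrt{rs})^n$ and the copy contribution $3^n$ both equal $\asymp x^{\log 3/(-\log rs)}$. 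The matching lower bounds come from applying the same arithmetic to $N_D^{(n)}$, noting that even under Dirichlet the segment part alone already produces the correct rate. Finally, $d_S X$ is read off from $N(x)\asymp x^{d_S X/2}$, the logarithmic correction at the critical ratio being absorbed so that $d_S X=1$ throughout $0<rs\le 1/9$.

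The main technical obstacle is justifying the bracketing step itself in this fractal quantum graph setting: one must verify that trial functions in $H^1(X)$ can be legitimately restricted to, and decoupled counterparts assembled from, each piece of the level-$n$ decomposition, so that the min--max characterization of eigenvalues applies and produces the two-sided bracketing. This hinges on the explicit description of $H^1(X)$ from Theorem~\ref{thm resistance form} and on the continuity of point evaluation on the domain of a resistance form (Definition~\ref{def-res}), which makes the vertex conditions at each junction point well posed. Once this is in place, the remainder is the bookkeeping described above.
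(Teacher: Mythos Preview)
Your proposal is correct and follows the same Dirichlet--Neumann bracketing strategy as the paper: decompose $X$ at level $n$ into the $3^n$ copies $F_w(X)$ and the collection of connecting segments, bracket $N_D\le N_N$ between the decoupled Dirichlet and Neumann problems, evaluate the segment part via the one-dimensional Weyl law, choose $n\sim \log x/(-\log rs)$, and read off the trichotomy from the geometric ratio $3\sqrt{rs}$.

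The one genuine methodological difference is how the contribution of the copies is controlled. You invoke the exact scaling $N_\bullet^{F_w(X)}(x)=N_\bullet^X((rs)^n x)$ and then use that $N_N^X$ is finite-valued (compact resolvent on a compact resistance space) to get $3^n N_N^X((rs)^n x)=O(3^n)$. The paper instead avoids this self-reference: for the upper bound it appeals to a uniform Poincar\'e inequality (Lemma~\ref{lemma: upper bound part I}, following \cite{Kaj10}) to show directly that the $(3^m{+}1)$-st eigenvalue of the decoupled Neumann problem on $\bigsqcup_w F_w(X)$ exceeds $C_P(rs)^{-m}$, hence its counting function is at most $3^m$; for the lower bound it builds an explicit test function on each cell to show $\lambda_1(X_w^0)\le C_D(rs)^{-m}$ (Lemma~\ref{lemma: lower bound part I}). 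Your route is shorter and more in the spirit of the Kigami--Lapidus renewal argument; the paper's route is closer to Kajino's framework and yields somewhat more explicit constants without presupposing discreteness of the spectrum. Either way the segment analysis and the final arithmetic are identical, and your identification of the bracketing step (domain restriction/assembly at the junction vertices, handled in the paper via Lemmas~\ref{lemma: Neum decomp} and \ref{lemma: Dirichlet decomp}) as the only nontrivial analytic point is accurate.
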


This result shows us that both the metric and the measure parameter strongly affect the spectral properties of the operator.

Another way of understanding $X$ is as a graph-directed fractal, introduced in~\cite{MW88} and treated analytically in~\cite{HN03}. Limiting spectral asymptotics, and in particular the spectral dimension, for $X$ can be deduced from~\cite{HN03}. However, the above theorem provides estimates for $N_D$, which is a stronger result. 

The approach here is different than in~\cite{ARF14}, where the resistance form was based on a totally disconnected fractal subset of $X$ (a kind of ``fractal dust'') connected by inserting one dimensional conductances. The main term in the spectrum was that of the ``fractal dust'' and in a sense equivalent to the usual Sierpinski gasket. In our current analysis we do not consider energy supported on any zero-dimensional fractal part but just quantum graph edges, providing anything else with measure and resistance zero.
\color{black} 

Section \ref{sec HKEs} discusses the behavior of the heat kernel with respect to various measures. If $\Delta$ is the generator of a Dirichlet form $\eng$, then the heat kernel is the integral kernel of the heat semi-group. More explicitly, the function $p: \R^+ \times X\times X \to \R$ is the heat kernel of $\Delta$ if the heat equation
\[
\begin{cases}
\Delta u(t,x) = \frac{du}{dt}(t,x) \\ 
u(x,0) = f(x) 
\end{cases}
\]
is solved by $u(t,x) = \int_X p(t,x,y) f(y) \ dy$.

When $1/3<\alpha < 1$, $X$ has finite length and thus $(\eng,H^1(X))$ is a Dirichlet form with respect to the Hausdorff $1$-measure $\haus$. If $p$ is the heat kernel for this Dirichlet form, $p$ satisfies Gaussian estimates
\[
{c_1}{t^{-1/2}} \exp\paren{-\frac{c_2|x-y|^2}{t}} \leq p(t,x,y) \leq {c_3}{t^{-1/2}} \exp\paren{-\frac{c_4|x-y|^2}{t}}
\]
for some $c_1,c_2,c_3,$ and $c_4$. Note that Theorem~\ref{theorem: d_S X}~(i) is applicable here. This is contrasted with the sub-Gaussian estimates associated with many fractal spaces, see \cite{BN98,Kig12}.

Section 8 answers the question about existence of resistance forms in a more general framework. Here, a~\textit{fractal quantum graph} consists of a separable compact connected locally connected metric space $(X,d)$ together with a sequence of lengths $\{\ell_k\}_{k=1}^{\infty}\subset (0,\infty)$ and isometries $\Phi_k\colon[0,\ell_k]~\to~X$ such that
\[
X\setminus\bigcup_{k=1}^{\infty}\Phi_k((0,\ell_k))
\]
is totally disconnected.
Conditions are given that ensure the existence of a resistance form on $X$ that behaves like the 1-dimensional Dirichlet energy on each sub-interval. Quantum graphs, Hanoi attractors and generalized Hanoi-type quantum graphs satisfy these assumptions.


Last section presents the so--called \textit{generalized Hanoi-type quantum graphs} $X_{N_0,\alpha}$. In this case, $N_0$ can be understood as a ``dimension parameter'' because $\dim_H X_{N_0,\alpha}\leq N_0-1$, while $\alpha$ is again the length of the longest segments in $X_{N_0,\alpha}$. This parameter will be chosen to lie in the interval $\big(0,\frac{N_0-2}{N_0}\big)$ so that we deal with a fractal object.

The construction of the resistance form $(\eng,\Dom\eng)$ in this case is carried out in the same way as in Section 5. In order to get a Dirichlet form out of it, we introduce a measure on $X_{N_0,\alpha}$ depending again on a parameter $\beta$ that measures the masses of segments of length $\alpha$.


By analogous arguments as in Section 5, 6 and 7, we obtain the following spectral asymptotics of the Laplacian associated to the Dirichlet form induced by $(\eng,\Dom\eng)$.

\begin{theorem}\label{theorem: d_S X_N_0}
Let $r=\frac{1-\alpha}{2}$ and $s=\frac{2-N_0(N_0-1)\beta}{2N_0}$. There exist constants $C_1,C_2>0$ and $x_0>0$ such that
\begin{itemize}
\item[(i)] if $0<rs<\frac{1}{N_0^2}$, then
\[C_1x^{\frac{1}{2}}\leq N_D(x)\leq N_N(x)\leq C_1 x^{\frac{1}{2}},\]
\item[(ii)] if $rs=\frac{1}{N_0^2}$, then
\[C_1x^{\frac{1}{2}}\log x\leq N_D(x)\leq N_N(x)\leq C_2 x^{\frac{1}{2}}\log x,\]
\item[(iii)] if $\frac{1}{N_0^2}<rs<\frac{1}{2N_0}$, then
\[C_1x^{\frac{\log N_0}{-\log (rs)}}\leq N_D(x)\leq N_N(x)\leq C_2 x^{\frac{\log N_0}{-\log (rs)}}\]
\end{itemize}
for all $x>x_0$.
%
%

In particular, 
\begin{equation*}
d_S X=\left\{\begin{array}{rl}
						1,& 0<rs\leq \frac{1}{N_0^2},\\ 
						\frac{\log N_0^2}{-\log (rs)},&\frac{1}{N_0^2}<rs<\frac{1}{2N_0}.
\end{array}\right.
\end{equation*}
\end{theorem}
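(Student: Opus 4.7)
The plan is to mimic the Dirichlet--Neumann bracketing argument used for the Hanoi attractor (Theorem \ref{theorem: d_S X}, whose proof occupies Section \ref{sec-asymp}), replacing the combinatorial constants $3$ and $\binom{3}{2}=3$ by $N_0$ and $\binom{N_0}{2}=N_0(N_0-1)/2$ respectively. The normalization $3\beta+3s=1$ becomes $\binom{N_0}{2}\beta+N_0 s=1$, which is precisely what yields the stated formula $s=\frac{2-N_0(N_0-1)\beta}{2N_0}$, and the constraint $rs<\frac{1}{2N_0}$ is simply the range of admissible $(r,s)$ coming from $r\in(0,\tfrac12)$ and $s\in(0,\tfrac{1}{N_0})$.

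First, I would invoke the general existence theorem in Section 8 to obtain the resistance form $(\eng,\dom\eng)$ on $X_{N_0,\alpha}$ with the self-similar scaling
\[
\eng(u,u)=r^{-1}\sum_{i=1}^{N_0}\eng(u\circ F_i,u\circ F_i)+\sum_{\text{edge }\ell}\int_{\ell}(u')^2\,dx,
\]
which is the direct analog of \eqref{e-Escaling}. Combined with the weakly self-similar measure $\mu$ satisfying $\mu(F_i(X_{N_0,\alpha}))=s$ and $\mu(\ell)=\beta$ on each of the $\binom{N_0}{2}$ connecting segments, the Rayleigh-quotient transport $u\mapsto u\circ F_i^{-1}$ multiplies energy by $r^{-1}$ and $L^2$-norm-squared by $s$, so eigenvalues on each first-level copy are $(rs)^{-1}$ times the corresponding eigenvalues on the whole $X_{N_0,\alpha}$. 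This yields the two-sided renewal inequality
\[
N_0\,N_D\!\left((rs)x\right)+N^{\mathrm{seg}}_D(x)\le N_D(x)\le N_N(x)\le N_0\,N_N\!\left((rs)x\right)+N^{\mathrm{seg}}_N(x)+c_0,
\]
where the segment contribution comes from $\binom{N_0}{2}$ intervals of length $\alpha$ carrying mass $\beta$ and satisfies $N^{\mathrm{seg}}_{D/N}(x)=c_1\sqrt{x}+O(1)$ by the elementary one-dimensional Weyl law.

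Second, I would iterate this inequality. Writing $f(x)$ for either $N_D(x)$ or $N_N(x)$, the recursion $f(x)\asymp N_0 f((rs)x)+c_1\sqrt{x}$ is the standard discrete renewal equation treated in \cite{KL93,Kaj10}, whose solution splits according to the sign of $\rho-1$ with $\rho:=N_0(rs)^{1/2}$. Case $\rho<1$ is $rs<1/N_0^{2}$ and the one-dimensional forcing term dominates, giving $f(x)\asymp x^{1/2}$. Case $\rho=1$ is the critical case $rs=1/N_0^{2}$, producing the logarithmic correction $f(x)\asymp x^{1/2}\log x$ by the standard renewal-with-resonance analysis. Case $\rho>1$ is $rs>1/N_0^{2}$, in which the self-similar growth dominates and $f(x)\asymp x^{d/2}$ where $d$ is the unique exponent with $N_0(rs)^{d/2}=1$, i.e.\ $d=\log N_0^{2}/(-\log rs)$; this is the spectral dimension stated in the theorem.

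The main obstacle will be bookkeeping the boundary effects at junction vertices when opening/closing the $N_0$ first-level copies and the $\binom{N_0}{2}$ connecting edges, since they share endpoints. This is handled exactly as in Section \ref{sec-asymp}: Dirichlet conditions at the $|V_0|$ junction points of each level give the lower bracketing (after subtracting a rank-$O(N_0^n)$ perturbation at level $n$) while Neumann conditions give the upper bracketing; because $|V_0|=N_0$ is fixed and finite, the resulting additive discrepancies are absorbed into the constants $C_1,C_2$ and $x_0$. Tracking the leading-order asymptotics through the iteration gives the three regimes in the theorem statement, and reading off the exponents yields the spectral dimension formula.
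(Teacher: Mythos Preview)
Your proposal is correct and follows essentially the same Dirichlet--Neumann bracketing strategy as the paper, which simply says to repeat Section~\ref{sec-asymp} with $3$ replaced by $N_0$. The one cosmetic difference is packaging: you write a one-step renewal inequality $f(x)\asymp N_0 f((rs)x)+c_1\sqrt{x}$ and appeal to the standard renewal dichotomy from \cite{KL93,Kaj10}, whereas the paper (Lemmas~\ref{lemma: Neum decomp}--\ref{lemma: lower bound part II}) unrolls the iteration explicitly by splitting at a level $m=m(x)$ chosen so that $(rs)^{-m}\asymp x$, then bounds the ``fractal'' part $N(x;\eng_{X_m})$ by $N_0^m$ via a Poincar\'e inequality (Lemma~\ref{lemma: upper bound part I}) and the segment part by a finite geometric sum $\sum_{n\leq m}(N_0^2 rs)^{n/2}\sqrt{x}$. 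Both routes are equivalent and give the same trichotomy on $N_0^2 rs$.
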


\subsection*{Acknowledgments} 
The authors thank Pavel Kurasov 
and David Croydon for  helpful input concerning quantum graphs.
DJK thanks Leonard Wilkins for useful conversations. 


\section{Abstract quantum graph basics}\label{quantumgraphs}

A graph $\graph = (V,\edge, \partial)$ is a finite set of vertices $V$ with a finite set of edges $\edge$ and a map $\partial: \edge \to V\times V$ given by $\partial e := (\partial_- e ,\partial_+e)$. A weighted graph has the additional structure of $\res:\edge\to (0,\infty)$. The weight, or conductance, of an edge $e$ is the quantity $1/r(e)$, thus $r(e)$ is the resistance of the edge $e$.
A metric graph $\gmet$ is the CW 1-complex with set of $0$-cells $V$ and the set of $1$-cells indexed by the edges with endpoints given by $\partial_\pm$. $\gmet$ is covered by the maps
$\Phi_e:I_e\to \gmet$, $I_e = [0,r(e)]$, $e\in E$, such that 
$$
{\Phi_e|_{(0,r(e))}: (0,r(e))
\to \Phi_e\big((0,r(e))\big)
}
$$
is a homeomorphism onto its image, and $\Phi_e(I_e)$ is the $1$-cell associated to the edge $e$.  $\gmet$ is given a metric and a measure $m$ which is induced by $\Phi_e$. 

The space of $L^p$ functions on $\gmet$ is defined by 
\[
L^p(\gmet) := \bigoplus_{e\in\edge} L^p(I_e),
\]
where $L^p(I_e)$ is the classical $L^p$ space on $I_e$ with respect to the Lebesgue measure.  We identify $L^p(\gmet)$ with functions on $\gmet$ by the maps $\Phi_e$ (notice that $V$ is a set of measure $0$).

The Sobolev space on $\gmet$ is defined by
\[
\quad H^n(\gmet) := C(\gmet)\cap\bigoplus_{e\in\edge} H^n(I_e),
\]
where $H^n(I_e)$ is the classical Sobolev space $H^n$ on the interval $I_e$, i.e. $f\in H^n(\gmet)$ if and only if $f\in C(\gmet)$ and $f\circ \Phi_e \in H^n(I_e)$ for all $e\in E$. In particular, $H^1(\gmet)$ is the domain of the Dirichlet energy with standard boundary conditions,
\[
\eng_\gmet(f,g) := \sum_{e\in E} \int_0^{r(e)} \paren{f\circ\Phi_e}'\paren{g\circ\Phi_e}' \ dt.
\]
%
%
A quantum graph is a metric graph with either the above energy form, or the associated self-adjoint (Laplacian) operator on $\gmet$. 

Further details on metric and quantum graphs can be found in the book~\cite{BK13}.
{Note that in our paper we, for the sake of convenience, mostly consider 
quantum graphs embedded in an Euclidean space. However in Section~\ref{sec-frg} we study a more abstract setup. }

\section{Definitions of Hanoi attractors}\label{sect HanoiAtts}
In this section we briefly recall the definition of Hanoi attractors and approximate them by quantum graphs. 


Let $\alpha\in(0,1/3)$ and let $p_1,\ldots p_6\in\R^2$ be the 
fixed points of the mappings
\begin{align*}\label{eq: defHanoiAtt}
F_{\alpha,i}\colon&\R^2\longrightarrow\hspace*{0.3cm}\R^2&\nonumber\\
& x\hspace*{0.2cm}\longmapsto A_i(x-p_i)+p_i,&i=1,\ldots 6,
\end{align*}
where 
\begin{align*}
&A_1=A_2=A_3=\frac{1-\alpha}{2}\begin{pmatrix} 1&0\\0&1\end{pmatrix},&
&A_4=\frac{\alpha}{4}\left( \begin{array}{lr}
                           1&-\sqrt{3}\\
			   -\sqrt{3}&3
                           \end{array}\right),\\
&A_5=\alpha\left( \begin{array}{lr}
                            1&0\\
			    0&0
                            \end{array}\right),&
&A_6=\frac{\alpha}{4}\left( \begin{array}{lr}
                            1&\sqrt{3}\\
			    \sqrt{3}&3
                            \end{array}\right).
\end{align*}
Since the contraction ratios $r_i$ of each $F_{\alpha,i}$ satisfy $r_i\in(0,1)$, $\{F_{\alpha,i}\}_{i=1}^6$ is a family of contractions and for the iterated function system $\{\R^2;F_{\alpha,1},\ldots F_{\alpha,6}\}$ there exists a unique non-empty compact set $\Ka\subset\R^2$ such that
\[\Ka=\bigcup_{i=1}^6 F_{\alpha,i}(\Ka).\]
This set is called the \textit{Hanoi attractor of parameter} $\alpha$. The pa\-ra\-me\-ter $\alpha$ will be arbitrary but fixed, thus to simplify notation we will write $X:=\Ka$ and $F_i:=F_{\alpha,i}$ for each $i=1,\ldots ,6$. $X$ is not strictly self-similar because the contractions $F_4$, $F_5$ and $F_6$ are not similitudes. However, this fractal still has some weak self-similarity due to the similitudes $F_1$, $F_2$ and $F_3$.

Let us denote by $\A$ the alphabet on the symbols $1,2,3$. For each word $w=w_1\cdots w_n\in\An$, $n\in\N$, we define 
\[F_w (x):=F_{w_1}\circ F_{w_2}\circ\cdots \circ F_{w_n}(x),\qquad\,x\in\R^2,\]
and $F_{w_0}:=\operatorname{id}_{\R^2}$ for the empty word $w_0$. For any $w \in \An$, $F_w(X)$ is homeomorphic to $X$.

In a natural sense
, we approximate $X$ by the metric graphs $\Gmet$ determined by $(V_n,E_n,\partial,r)$ and defined below. 

\begin{definition}\label{def ApproxMetricGraphs}
For any $n\in\N_0$, we define the vertex set
\begin{equation*}
V_n:=\bigcup_{w\in\An} F_{w}(\{p_1,p_2,p_3\})
\end{equation*}
and the edge set $E_n:=T_n\cup J_n$, where
\begin{align*}
T_n&:=\{ \{x,y\}\,\vert\,\exists\,w\in\An\text{ s.t. }x,y\in F_w(V_0)\},\\
J_n&:=\{ \{x,y\}\,\vert\,\exists\,0<k<n, w\in\A^{k-1}\text{ s.t. }x=F_{wj}(p_i),y=F_{wi}(p_j),\,i,j\in\A,i\neq j\}.
\end{align*}
Moreover, let $r\colon E_n\to (0,\infty)$ be the weight function given by the edge length, i.e.
\begin{equation*}\label{eq: weightfunction}
r(e):=\left\{\begin{array}{rl}
		\left(\frac{1-\alpha}{2}\right)^n,&\text{for } e\in T_n,\\
		\alpha\left(\frac{1-\alpha}{2}\right)^k,&\text{for }e\in J_k\setminus J_{k-1},~1\leq k\leq n.
\end{array}\right.
\end{equation*}
$G_n := (V_n,E_n,\partial,r)$ is a weighted graph with any orientation $\partial$ and we define the metric graph $\Gmet_n$ associated to $G_n$ as a subset of $X$ where $\Phi_e:I_e \to \R^2$ is given by
\[
\Phi_e(t) = t\partial_+e - (r(e)-t)\partial_-e.
\]
Notice that $\Gmet_n$ is a subset of $\Gmet_{n+1}$ and $V_n\subset V_{n+1}$, however $E_n$ is not a subset of $E_{n+1}$ and thus $G_n$ is not a subgraph of $G_{n+1}$. 
\end{definition}

In the set $E_n$ we distinguish two different types of edges: on one hand, $T_n$ contains ``triangle-type'' edges, i.e. edges building a triangle. On the other hand, $J_n$ denotes the set of ``joining-type'' edges, which join the triangles built by the edges in $T_n$.

We equip these graphs with the measure $m$ introduced in Section~\ref{quantumgraphs}, which coincides with the $1$-dimensional Hausdorff measure. Hence, $(\Gmet_n)_{n\in\N_0}$ is a sequence of metric graphs that approximates $X$ as Figure~\ref{approx} suggests 
{in the sense that}
\[X=\operatorname{cl}\left(\bigcup_{n\in\N_0}\Gmet_n\right)=\operatorname{cl}\left(\bigcup_{n\in\N_0} \bigcup_{e\in E_n}\Phi_e(I_e) \right),\]
where $\operatorname{cl}(\cdot)$ means closure with respect to the Euclidean metric. Later on we will show in Theorem~\ref{lemR} that on $X$, the Euclidean and the effective resistance topology coincide. 

\begin{figure}[h!tpb]
\begin{tabular}{ccccc}
\begin{tikzpicture}[scale=0.35]
\coordinate (p_1) at (0,0);
\coordinate (p_2) at (3,5.1961);
\coordinate (p_3) at (6,0);

\draw (p_1) -- (p_2) -- (p_3) -- cycle;
\end{tikzpicture}
&
\begin{tikzpicture}[scale=0.35]
\coordinate (p_1) at (0,0);
\coordinate (p_2) at (3,5.1961);
\coordinate (p_3) at (6,0);
\coordinate (p_12) at (1,1.732);
\coordinate (p_21) at (2,3.4641);
\coordinate (p_13) at (2,0);
\coordinate (p_31) at (4,0);
\coordinate (p_23) at (4,3.4641);
\coordinate (p_32) at (5,1.732);

\draw (p_1) -- (p_2) -- (p_3) -- cycle;
\draw (p_12) -- (p_13)  (p_32) -- (p_31)   (p_23) -- (p_21);

\end{tikzpicture}
&
\begin{tikzpicture}[scale=0.35]
\coordinate (p_1) at (0,0);
\coordinate (p_2) at (3,5.1961);
\coordinate (p_3) at (6,0);
\coordinate (p_12) at (1,1.732);
\coordinate (p_21) at (2,3.4641);
\coordinate (p_13) at (2,0);
\coordinate (p_31) at (4,0);
\coordinate (p_23) at (4,3.4641);
\coordinate (p_32) at (5,1.732);
\coordinate (p_112) at (1/3,1.732/3);
\coordinate (p_121) at (2/3,3.4641/3);
\coordinate (p_113) at (2/3,0);
\coordinate (p_131) at (4/3,0);
\coordinate (p_123) at (4/3,3.4641/3);
\coordinate (p_132) at (5/3,1.732/3);
\coordinate (p_212) at (1/3+2,1.732/3+3.4641);
\coordinate (p_221) at (2/3+2,3.4641/3+3.4641);
\coordinate (p_213) at (2/3+2,3.4641);
\coordinate (p_231) at (4/3+2,3.4641);
\coordinate (p_223) at (4/3+2,3.4641/3+3.4641);
\coordinate (p_232) at (5/3+2,1.732/3+3.4641);
\coordinate (p_312) at (1/3+4,1.732/3);
\coordinate (p_321) at (2/3+4,3.4641/3);
\coordinate (p_313) at (2/3+4,0);
\coordinate (p_331) at (4/3+4,0);
\coordinate (p_323) at (4/3+4,3.4641/3);
\coordinate (p_332) at (5/3+4,1.732/3);

\draw (p_1) -- (p_2) -- (p_3) -- cycle;
\draw (p_12) -- (p_13)  (p_21) -- (p_23)   (p_31) -- (p_32);
\draw (p_112) -- (p_113)  (p_121) -- (p_123)   (p_131) -- (p_132)
(p_212) -- (p_213)  (p_221) -- (p_223)   (p_231) -- (p_232)
(p_312) -- (p_313)  (p_321) -- (p_323)   (p_331) -- (p_332);
\end{tikzpicture}
&
${\cdots}$
&
\includegraphics[width=2.4cm,height=2cm]{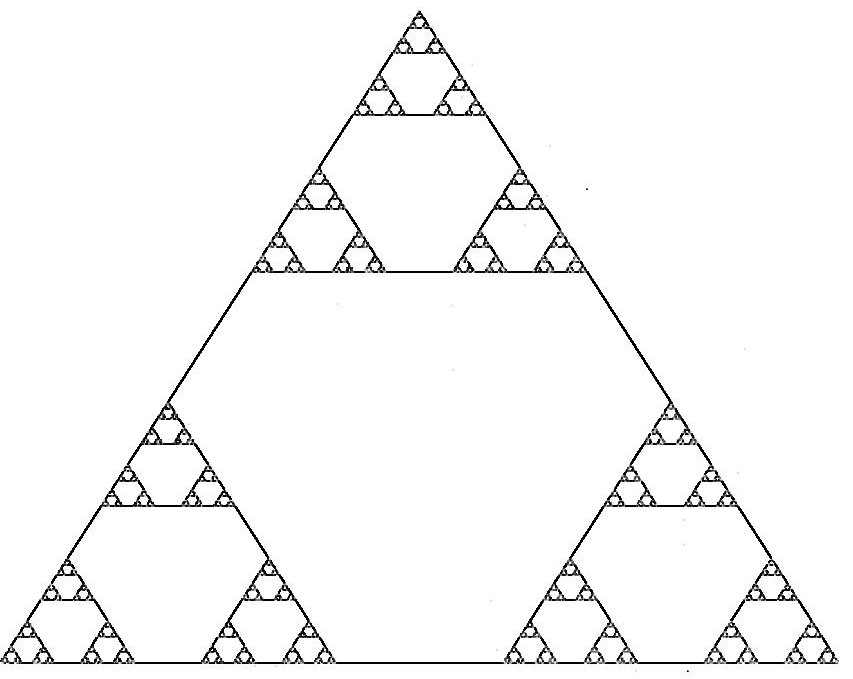}
\end{tabular}
\caption{Metric graphs $\Gmet_0$, $\Gmet_1$, $\Gmet_2$... approximating the Hanoi attractor $X$.}
\label{approx}
\end{figure}


\begin{remark}\label{remarkDef}


The space $(X,m)$ is not finite if $\alpha\in(0,1/3]$ because
\begin{equation*}
m(X)=
{\lim_{n\to\infty}}
\sum_{e\in E_n}m(I_e)
\geq 
{\lim_{n\to\infty}}
\sum_{e\in J_n}m(I_e)
=
3\alpha\sum_{n=1}^{\infty}3^n\left(\frac{1-\alpha}{2}\right)^n=+\infty.
\end{equation*}
Recall that $I_e=[0,r(e)]$ is the interval associated with the edge $e$.

\end{remark}


In order to get a quantum graph out of the metric graph $\Gmet_n$, we consider next a metric graph energy which we denote $\eng_{\Gmet_n}$. 
It is crucial to choose domains $\mathcal{F}_n$, whose functions are everywhere constant except in finitely many ``joining-type'' edges. 
%
Note that 
\[
X\setminus \big(\bigcup\limits_{e\in J_n} \Phi_e(I_e^\circ)\big)=\bigcup\limits_{w\in\A^n}F_w(X),
\]
where $I_e^\circ:=(0,r(e))$ is the interior of $I_e$.
\begin{definition}\label{def LevelEnergy}
We define the domain of functions 
\begin{equation*}
\F_n:=\left\{u\colon X\to\R~\vert~u_{\vert_{\Gmet_n}}\in H^1(\Gmet_n)\text{ and }u_{\vert_{\triangle}}\equiv c_{w}\; \text{for any }\triangle=F_w(X)\right\},
\end{equation*}
where $c_{w}$ are constants that only depend on $\triangle=F_w(X)$, an arbitrary triangular cell indexed by $w\in\A^n$. Note that the non-negative symmetric bilinear form ${\Eng} (u,v)$ given by~\eqref{e-one} is well defined for $
u,v\in \mathcal{F}_*:=\bigcup_{n\in\N_0}\mathcal{F}_n\subsetneqq\Dom \Eng=H^1(X)
$.
We also define the non-negative symmetric bilinear form  
\begin{equation*}\label{e-one-n}
\eng_{\Gmet_n}\colon H^1(\Gmet_n)\times H^1(\Gmet_n)\to\R,
\ \ \ \ \ \qquad 
\eng_{\Gmet_n}(u,v):=\int_{\Gmet_n} u'v'\,dx, 
\end{equation*}
 and call it the standard \textit{energy form on $\Gmet_n$}. 
\end{definition}

\begin{remark}
The formulas for $\Eng$ and $\eng_{\Gmet_n}$ are very similar, but differ in their domains of definition. This will be crucial in the following analysis. We shall use the suggestive notation $\Eng_n$ for the following expressions  
\[
\Eng_n(u,v):=\Eng(u,v)=\eng_{\Gmet_n}(u|_{\Gmet_n},v|_{\Gmet_n}),\qquad u,v\in\mathcal{F}_n,
\]
which are well defined and equal  for all $n\in\N_0$. Again, notice that the only difference between $\eng_n$ and $\eng$ or $\eng_{\Gmet_n}$ is the domain.
\end{remark}

\section{Energy on Hanoi attractor is a resistance form}\label{resistance}
In this section we prove that $(\Eng,H^1(X))$ is a resistance form on $X$ in the sense of~\cite{Kig12}:

\begin{definition}\label{def-res}
Let $X$ be a set. A pair $(\mathcal{E},\Dom\mathcal{E})$ is called a resistance form if 
\begin{enumerate}[(RF 1)]
\item $\mathcal{E}$ is a non-negative symmetric bilinear form on $\Dom \mathcal{E}$, a linear subspace of $\ell(X):=\{u\colon X\to \R\}$ that contains constants, and $\mathcal{E}(u,u) = 0$ if and only if $u$ is constant on $X$.
\item If $\sim$ is the equivalence relation in $\Dom \mathcal{E}$ where $u \sim v$ iff $u-v$ is constant, then $(\Dom\mathcal{E}/_{\sim},\mathcal{E})$ is a Hilbert space.
\item\label{iRF3} For any two points $p\neq q$ in $X$, there exists $u\in\Dom \mathcal{E}$ such that $u(p)\neq u(q)$.
\item For any $p,q\in X$,
\[
\sup\set{\frac{(u(p)-u(q))^2}{\mathcal{E}(u,u)}~\colon~u\in\Dom\mathcal{E}, \mathcal{E}(u,u)\neq0}<\infty
\]
We denote this supremum by $R_{\mathcal{E}}(p,q)$ and call it the effective resistance between $p$ and $q$.
\item (Markov property) For any $u\in\Dom\mathcal{E}$, $\overline{u}\in \Dom\mathcal{E}$ and $\mathcal{E}(\overline{u},\overline{u})\leq\mathcal{E}(u,u)$, where
\[
\overline{u}(p) := \begin{cases}
0 & \text{if }u(p)\leq 0,\\
u(p) & \text{if }0<u(p)<1,\\
1 & \text{if }u(p) \geq 1.
\end{cases}
\]
\end{enumerate}
\end{definition}

Note that $\eng_{\Gmet_n}$ is  a resistance form on $\Gmet_n$. We would also like to point out that if the condition~(RF \ref{iRF3}) is not satisfied, then $R_{\mathcal{E}}(p,q)$ may equal $0$ even if $p\neq q$. In such a situation the effective resistance is defined but it is not necessarily a metric, yet it can be a pseudometric. This fact is important because when restricted to the functions $u\in\F_n$,  $\Eng_n$  satisfies all the conditions except (RF3). We can define effective resistances $R_n(p,q)$ with respect to $\eng_n$ using the same definition from (RF4) despite the fact that they are not metrics because they are not positive definite. Ne\-ver\-theless they do still satisfy the triangle inequality and hence build a nondecreasing sequence of pseudometrics on $X$. In a certain sense, $\eng_n$ is equivalent to a resistance form on a quotient space of $\Gmet_n$ or of $X$, by identifying all points in a cell $F_w(X)$ in a single point.


\begin{definition}
Let $(\mathcal{E},\Dom\mathcal{E})$ be a resistance form on $X$ and let $S$ be a finite subset of $X$. The resistance form $\trace_S\mathcal{E}\colon\ell(S)\times\ell(S)\to\R$ is given by
\[
\trace_S\mathcal{E}(u,u) := \inf\set{\mathcal{E}(v,v):~v\in\Dom\mathcal{E}, v_{|_S}= u}.
\]
For any $u,v\in\ell(S)$, $\trace_S\mathcal{E}(u,v)$ is defined by applying the polarization identity.
\end{definition}

\subsection{Metric observations}\label{subsec metric obs}
This section establishes the metric properties of $\eng_n$, $\eng_{\Gmet_n}$ and $\eng$, starting with the following simple but important technical observation concerning the resistance form $\eng_{\Gmet_n}$ on ${\Gmet_n}$.  

\begin{lemma}\label{engestimate}
For any points $p,q\in \Gmet_n$ and for any function $u\in H^1(\Gmet_n)$,
\[
|u(p)-u(q)|^2  \leq d_n(p,q) \eng_{\Gmet_n}(u,u),
\] 
where $d_n$ is the intrinsic geodesic distance in $\Gmet_n$ and $\eng_{\Gmet_n}$ is defined in~\eqref{def LevelEnergy}.

Furthermore, for all $v\in \F_n$ and $p,q\in \Gmet$,
\[
|v(p)-v(q)|^2 \leq d_n(p,q) \eng_n(v,v).
\]
\end{lemma}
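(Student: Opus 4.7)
The plan is to reduce both inequalities to the one-dimensional Cauchy--Schwarz estimate $\left|\int_a^b f'\,dt\right|^2 \leq (b-a)\int_a^b (f')^2\,dt$ applied along a geodesic path, first proving the bound on $\Gmet_n$ directly and then bootstrapping to $\F_n$ via the cell-constancy property in Definition~\ref{def LevelEnergy}.

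For the first inequality, I would fix $p,q\in\Gmet_n$ and choose a unit-speed shortest path $\gamma:[0,L]\to\Gmet_n$ with $L=d_n(p,q)$. Such a $\gamma$ is a finite concatenation of sub-arcs of edges: there is a partition $0=t_0<t_1<\cdots<t_k=L$ and edges $e_1,\ldots,e_k\in E_n$ such that $\gamma$ traces a sub-arc of $\Phi_{e_j}$ on each subinterval $[t_{j-1},t_j]$. Since $u\circ\Phi_{e_j}\in H^1(I_{e_j})$ for every $j$, the composition $u\circ\gamma$ is absolutely continuous on $[0,L]$ with square-integrable derivative. The fundamental theorem of calculus and Cauchy--Schwarz then give
\[
|u(p)-u(q)|^2 = \left|\int_0^L (u\circ\gamma)'(s)\,ds\right|^2 \leq L\int_0^L \big((u\circ\gamma)'(s)\big)^2 ds \leq d_n(p,q)\,\eng_{\Gmet_n}(u,u),
\]
where the final step uses that a geodesic traverses each edge at most once, so the edge-wise energy contributions pulled back along $\gamma$ sum to at most the full $\eng_{\Gmet_n}(u,u)$.

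For the second inequality, observe that by Definition~\ref{def LevelEnergy} any $v\in\F_n$ satisfies $v|_{\Gmet_n}\in H^1(\Gmet_n)$ and $\eng_n(v,v)=\eng_{\Gmet_n}(v|_{\Gmet_n},v|_{\Gmet_n})$. When $p,q\in\Gmet_n$ the bound follows immediately from the first part applied to $v|_{\Gmet_n}$. If instead $p,q$ lie in generic cells $F_w(X)$ and $F_{w'}(X)$ of $X$, cell-constancy gives $v(p)=v(p^\ast)$ and $v(q)=v(q^\ast)$ for any vertices $p^\ast\in F_w(V_0)$, $q^\ast\in F_{w'}(V_0)$; selecting these vertices to minimize $d_n(p^\ast,q^\ast)$ and invoking the first estimate then produces the stated inequality. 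The one point requiring care is the no-backtracking property of geodesics in $\Gmet_n$ (so that no edge contributes twice to the energy sum), but this is automatic because an edge traversed twice could always be replaced by a strictly shorter detour, contradicting minimality of $\gamma$.
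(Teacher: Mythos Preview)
Your proposal is correct and follows essentially the same approach as the paper: both prove the first inequality by applying Cauchy--Schwarz along a length-minimizing path in $\Gmet_n$ (you do it as a single integral along the parametrized geodesic, the paper breaks it into edge-wise pieces and then applies Cauchy--Schwarz to the sum, but the content is identical), and both derive the second inequality immediately from the first via $\eng_n(v,v)=\eng_{\Gmet_n}(v|_{\Gmet_n},v|_{\Gmet_n})$. Your extra paragraph handling the case $p,q\notin\Gmet_n$ via cell-constancy is not needed---the statement is intended for $p,q\in\Gmet_n$, as the paper's one-line derivation makes clear---and it quietly assumes that $d_n(p,q)$ is meaningful for such points, which the paper does not define; you can safely drop it.
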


\begin{proof} The second inequality follows from the first and the fact and that fact that $\eng_n(v,v) = \eng_{\Gmet_n}(v|_{\Gmet_n}, v|_{\Gmet_n})$.

If $p,q$ are both on the same edge, which is a one dimensional straight line segment in $\Gmet_n$, then
\[
|u(p)-u(q)|^2 = \abs{\int_p^q u'(x)  \ dx}^2 \leq \abs{\int_p^q |u'(x)|^2 \ dx}|p-q| \leq \Eng_{\Gmet_n}(u,u)|p-q|.
\]
Here, again, $dx$ represents the usual one dimensional integral along the straight line segments in $X$ and $u'$ and $v'$ represent the usual derivatives along these straight line segments. 
If $p$ and $q$ are not on the same edge, then there are $x_0,\ldots,x_m\in \Gmet_n$ such that $p=x_0$, $q=x_m$, and $x_i$ and $x_{i+1}$ belong to the same edge (these are the vertices which a path from $p$ to $q$ would pass through). 
Then it is easy to see that  
\[
|u(p)-u(q)|^2 \leq \Eng_{\Gmet_n}(u,u)\paren{\sum_{i=0}^{m-1}{|x_{i}-x_{i+1}|} }
\]
by the Cauchy--Schwarz inequality. If we assume that $x_i$ are the vertices traversed by the length minimizing path from $p$ to $q$, then we get the inequality in the lemma. 
\end{proof}

\begin{theorem}\label{lemR}
\begin{enumerate}
\item \label{lemR1}
For any $n\in\N$ and any $p,q\in X$ it holds that $R_{n+1}(p,q)\geq R_n(p,q)$. Moreover, we have the nondecreasing limit 
\begin{equation*}\label{eRs}
0<R(p,q):=\lim_{n\to\infty}R_n(p,q) =\sup_{n}R_n(p,q)<\infty
\end{equation*}
for any distinct $p,q\in X$.  Thus $R$ is a metric on $X$.

\medskip

\item For for any $n\in\N$ and $p,q\in\Gmet_n$, $R_{\Gmet_{n+1}}(p,q)\leq R_{\Gmet_n}(p,q)$. Here we formally define $R_{\Gmet_n}$ to be infinite for points not in $\Gmet_n$. Furthermore, we have a nonincreasing limit 
\begin{equation*}\label{eRi}
0<R(p,q)= \lim_{n\to\infty}R_{\Gmet_n}(p,q) =\inf_{n}R_{\Gmet_n}(p,q)<\infty
\end{equation*}
for any distinct $p,q\in \bigcup_n \Gmet_n$. In particular $(\Gmet_n,R_{\Gmet_n})$ converges to $(X,R)$ in the Gromov--Hausdorff sense.

\medskip

\item\label{i3} There exists a constant $c>1$ such that
\[
\frac{1}{c}\abs{p-q}\leq R(p,q)\leq c\abs{p-q}
\]
for any $p,q\in X$.
\end{enumerate}
\end{theorem}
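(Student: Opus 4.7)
The plan is to establish the three parts in order, exploiting the inclusion $\F_n\subset\F_{n+1}$ on one side and the inclusion $\Gmet_n\subset\Gmet_{n+1}$ on the other, and to sandwich the limit metric $R$ between the two resulting monotone sequences of resistances.

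For (1), the key observation is that $\F_n\subset\F_{n+1}$ with $\eng_n(u,u)=\eng_{n+1}(u,u)$ for every $u\in\F_n$: any such $u$ is constant on each level-$n$ cell $F_w(X)$ and hence on every level-$(n+1)$ sub-cell and on every new edge of $\Gmet_{n+1}$, all of which lie inside some level-$n$ cell and thus contribute nothing to the energy integral. Monotonicity $R_{n+1}(p,q)\geq R_n(p,q)$ is then immediate from the variational definition of effective resistance. Since $\F_n\subset H^1(X)$ with $\eng|_{\F_n}=\eng_n$, we also obtain $R_n(p,q)\leq R_{\eng}(p,q)<\infty$, where $R_{\eng}$ is the effective resistance of the form $(\eng,H^1(X))$ furnished by the Proposition. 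Positivity $R(p,q)>0$ for $p\neq q$ is obtained by constructing an explicit $u\in\F_n$ separating $p$ and $q$ once $n$ is large enough that they lie in distinct level-$n$ cells.

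For (2), Rayleigh's principle gives $R_{\Gmet_{n+1}}(p,q)\leq R_{\Gmet_n}(p,q)$: any $v\in H^1(\Gmet_{n+1})$ restricts to $v|_{\Gmet_n}\in H^1(\Gmet_n)$ with the same values at $p,q\in\Gmet_n$ and smaller or equal energy, so the supremum can only be larger on $H^1(\Gmet_n)$. The two limits are matched by sandwiching $R_n(p,q)\leq R_{\eng}(p,q)\leq R_{\Gmet_n}(p,q)$, the upper inequality following analogously by restricting $u\in H^1(X)$ to $\Gmet_n$. The gap $R_{\Gmet_n}(p,q)-R_n(p,q)$ is then controlled by the $R_{\Gmet_n}$-diameter of a single level-$n$ cell, which is $O(r^n)\to 0$ by the self-similar scaling \eqref{e-Escaling}; this forces both sequences to converge to $R_{\eng}$. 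Gromov--Hausdorff convergence of $(\Gmet_n,R_{\Gmet_n})$ to $(X,R)$ follows because every point of $X$ is within $R$-distance $O(r^n)$ of some vertex of $\Gmet_n$.

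For (3), the upper bound $R(p,q)\leq c|p-q|$ combines (2) with Lemma~\ref{engestimate}: for $p,q\in\Gmet_n$ one has $R(p,q)\leq R_{\Gmet_n}(p,q)\leq d_n(p,q)$, and a path in $\Gmet_n$ of Euclidean length $\leq c|p-q|$ can be constructed by descending to the smallest common cell containing both points. The lower bound $R(p,q)\geq|p-q|/c$ requires a test function $u\in\bigcup_n\F_n$ with $(u(p)-u(q))^2\geq|p-q|^2/c^2$ and $\eng(u,u)$ bounded independently of the pair $(p,q)$: choose $n$ with $r^n\asymp|p-q|$ so that $p$ and $q$ lie in distinct level-$n$ cells, define $u$ to be piecewise constant on the level-$n$ cells with values approximating a linear projection onto the direction of $q-p$, and extend harmonically on the joining edges. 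A recursive application of \eqref{e-Escaling} bounds the energy by a constant depending only on $\alpha$. The main obstacle will be this lower bound: when $\alpha\in(0,1/3)$ the total one-dimensional length of $X$ is infinite (Remark~\ref{remarkDef}), so a naive global coordinate projection is not in $H^1(X)$ and the test function must be localized to the scale $|p-q|$ using the weak self-similarity of $X$.
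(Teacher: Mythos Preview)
Your monotonicity arguments in (1) and the first half of (2) are essentially the same as the paper's. The real divergence begins when you try to match the two limits $\lim_n R_n$ and $\lim_n R_{\Gmet_n}$.

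\medskip

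\textbf{Circularity.} You repeatedly invoke the resistance metric $R_{\eng}$ of the form $(\eng,H^1(X))$ ``furnished by the Proposition''. The Proposition in Section~2 only asserts the abstract existence of \emph{some} resistance form satisfying the scaling relation; the identification of its domain with $H^1(X)$ and of its metric with the one you are constructing is exactly Theorem~\ref{thm resistance form}, whose proof \emph{uses} Theorem~\ref{lemR}. In particular your sandwich $R_n\leq R_{\eng}\leq R_{\Gmet_n}$ requires knowing that functions in $\Dom\eng$ restrict to $H^1(\Gmet_n)$ with no increase of energy, which presupposes the concrete description of the domain. The paper avoids this: it proves Theorem~\ref{lemR} from scratch using only Lemma~\ref{engestimate} and direct electrical-network computations, and only afterwards builds the resistance form in Subsection~5.2.

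\medskip

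\textbf{The missing computation in (2).} Your assertion that ``the gap $R_{\Gmet_n}(p,q)-R_n(p,q)$ is controlled by the $R_{\Gmet_n}$-diameter of a single level-$n$ cell, which is $O(r^n)$'' is not justified and in fact gives the wrong rate. The paper's argument is explicit: write $\mathcal R_n$ for the edge resistance in the $\Delta$-network equivalent (at the three corners $p_1,p_2,p_3$) to either $R_n$ or $R_{\Gmet_n}$. A Delta--Y reduction (Figure~\ref{deltay-}) shows that in \emph{both} cases $\mathcal R_{n+1}=\tfrac{5}{3}r\,\mathcal R_n+\alpha$, a contraction with fixed point $6\alpha/(1+5\alpha)$; hence both sequences converge to the same limit regardless of initial condition, and $R(p_i,p_j)=4\alpha/(1+5\alpha)$. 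Kirchhoff's laws then propagate this equality to all points of $\bigcup_{e\in J_n}\Phi_e(I_e)$, and density finishes the job. Without this (or an equivalent explicit computation) your sandwich does not close.

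\medskip

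\textbf{The lower bound in (3).} Your projection test function is both more complicated than needed and, as you yourself note, breaks down for $\alpha\in(0,1/3)$; the ``localization'' you allude to is never carried out. The paper's device is much simpler and uniform in $\alpha$: if $n_0$ is the smallest level at which $p$ and $q$ lie in distinct cells, then $R(p,q)\geq R_{n_0}(p,q)=R_{n_0}(p',q')$ where $p',q'$ are the endpoints of the single edge of length $\alpha r^{n_0-1}$ joining those cells (the equality holds because $\F_{n_0}$ collapses each cell to a point). One is now in the edge case, where an explicit tent function with $u(p')=0$, $u(q')=1$ and energy at most $3/|p'-q'|$ gives $R\geq|p'-q'|/3\geq\tfrac{\alpha}{3}|p-q|$. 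The explicit value $R(p_i,p_j)=4\alpha/(1+5\alpha)$ from part~(2), together with the scaling $\diam_R F_w(X)=r^{|w|}\diam_R X$, then supplies the matching upper bound via the triangle inequality $R(p,q)\leq R(p,p')+R(p',q')+R(q',q)$.
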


\begin{proof}
(1) Since $\F_n\subseteq\F_{n+1}$ and $\Eng_n(u,u)=\Eng_{n+1}(u,u)$ for all $u\in\F_n$ we have that
\begin{align*}
R_n(p,q)&
=\sup\Big\{\frac{\abs{u(p)-u(q)}^2}{\Eng_{n+1}(u,u)}:~u\in\F_n, \Eng_n(u,u)\neq0\Big\}\\
&\leq\sup\Big\{\frac{\abs{u(p)-u(q)}^2}{\Eng_{n+1}(u,u)}:~u\in\F_{n+1}, \Eng_{n+1}(u,u)\neq0\Big\}=R_{n+1}(p,q).
\end{align*}
The fact that $0<R(p,q)$ follows from Lemma~\ref{engestimate} and the fact that $\cup_{n=1}^\infty \F_n$ separates points of $X$. $R(p,q)<\infty$ because 
\[
R_n(p,q) = R_n(p',q')=\sup\Big\{\frac{\abs{u(p')-u(q')}^2}{\eng_{\Gmet_n}(u\vert_{\Gmet_n},u\vert_{\Gmet_n})}:~u\in\F_n, \Eng_n(u,u)\neq0\Big\}\leq R_{\Gmet_n}(p',q'),
\]
where $p'$ is chosen to be $p$ is $p\in \Gmet_n$ or any point in $F_w(X)\cap \Gmet_n$ for $w$ being the word of length $n$ such that $p\in F_w(X)$ and $q'$ defined in a similar manner. 
\medskip

(2) Recall  that $R_{\Gmet_n}(p,q):=\sup\big\{\frac{\abs{u(p)-u(q)}^2}{\eng_{\Gmet_n}(u,u)}:~u\in H^1(\Gmet_n), \eng_{\Gmet_n}(u,u)\neq0\big\}$, where
\[
\eng_{\Gmet_n}(u,u)=\sum_{e\in E_n}\int_0^{r(e)}(u\circ\Phi_e)'(x) dx.
\]
Given any function $u\in H^1(\Gmet_{n+1})$, $u\vert_{\Gmet_n}\in H^1(\Gmet_n)$ and $\Eng_{\Gmet_{n+1}}(u,u)\geq\Eng_{\Gmet_n}(u\vert_{\Gmet_n},u\vert_{\Gmet_n})$. Hence
\begin{equation*}\label{eq lemma4 (2)}
\frac{\abs{u(p)-u(q)}^2}{\eng_{\Gmet_{n+1}}(u,u)}\leq\frac{\abs{u(p)-u(q)}^2}{\eng_{\Gmet_n}(u\vert_{\Gmet_n},u\vert_{\Gmet_n})}\qquad\forall\,u\in H^1(\Gmet_{n+1}).
\end{equation*}
Moreover, since any function in $H^1(\Gmet_n)$ can be extended to a function in $H^1(\Gmet_{n+1})$ by in\-ter\-po\-la\-ting on new ``interior'' edges, any function in $H^1(\Gmet_n)$ can be obtained as a restriction of a function in $H^1(\Gmet_{n+1})$ and thus
\begin{multline*}
R_{\Gmet_{n+1}}(p,q)=\sup\Big\{\frac{\abs{u(p)-u(q)}^2}{\eng_{\Gmet_{n+1}}(u,u)}~\colon~u\in H^1(\Gmet_{n+1}), \eng_{\Gmet_{n+1}}(u,u)\neq0\Big\} \\
\leq\sup\Big\{\frac{\abs{u(p)-u(q)}^2}{\eng_{\Gmet_n}(u\vert_{\Gmet_n},u\vert_{\Gmet_n})}~\colon~u\in H^1(\Gmet_n), \eng_{\Gmet_n}(u,u)\neq0\Big\}=R_{\Gmet_n}(p,q).
\end{multline*}
for any $p,q\in X$, and the limit exists.

It remains to be proved that in fact $R(p,q)=\lim\limits_{n\to\infty}R_{\Gmet_n}(p,q)$ for any $p,q\in\bigcup_n \Gmet_n$. 
%
%
If $\mathcal R_n$ is the resistance of a wire in a triangle network such that the restistance between the corners is either $R_n(p_i,p_j)$ or $R_{\Gmet_n}(p_i,p_j)$, with $i,j\in\{1,2,3\}$, $i\neq j$, and $p_1,p_2,p_3$ being the corners of $X$. In either case, the sequence $\set{R_n}_{n=1}^\infty$ satisfies the recurrence relation $\frac53 r\mathcal R_n + \alpha = \mathcal R_{n+1}$. 
\begin{figure}[tbh]
\begin{subfigure}{.25\linewidth}
\centering
\begin{tikzpicture}[scale=2]
\draw (90:1)--(210:1)-- node[below] {$\alpha$}(330:1)--cycle;
\draw ($(90:1/3)+(90:2/3)$)--($(210:1/3)+(90:2/3)$)-- node[below] {$r\mathcal R_n$}($(330:1/3)+(90:2/3)$)--cycle;
\draw ($(90:1/3)+(210:2/3)$)--($(210:1/3)+(210:2/3)$)--($(330:1/3)+(210:2/3)$)--cycle;
\draw ($(90:1/3)+(330:2/3)$)--($(210:1/3)+(330:2/3)$)--($(330:1/3)+(330:2/3)$)--cycle;
\end{tikzpicture}
\caption{}
\label{fig:sub1}
\end{subfigure}%
\begin{subfigure}{.5\linewidth}
\centering
\begin{tikzpicture}[scale=2]
\draw ($(210:1/3)+(90:2/3)$)--($(90:1/3)+(210:2/3)$) ;
\draw ($(330:1/3)+(210:2/3)$) -- node[below] {$\alpha$}($(210:1/3)+(330:2/3)$);
\draw ($(90:1/3)+(330:2/3)$)-- ($(330:1/3)+(90:2/3)$);

\draw ($(90:1/3)+(90:2/3)$)-- node[right] {$\frac{r\mathcal R_n}{3}$} (90:2/3)
	  ($(210:1/3)+(90:2/3)$)-- (90:2/3)
	  ($(330:1/3)+(90:2/3)$)--(90:2/3);
\draw ($(90:1/3)+(210:2/3)$)--(210:2/3)
      ($(210:1/3)+(210:2/3)$)--(210:2/3) 
      ($(330:1/3)+(210:2/3)$)--(210:2/3);
\draw ($(90:1/3)+(330:2/3)$)--(330:2/3)
	  ($(210:1/3)+(330:2/3)$)--(330:2/3)
	  ($(330:1/3)+(330:2/3)$)--(330:2/3);
\end{tikzpicture}
\caption{}
\label{fig:sub2}
\end{subfigure}
\begin{subfigure}{.25\linewidth}
\centering
\begin{tikzpicture}
\draw (90:1)--(210:1)-- node[below] {$\alpha+\frac{2r\mathcal R_n}{3}$}(330:1)--cycle
	  (90:1)-- node[right] {$\frac{r\mathcal R_n}{3}$} (90:2)
	  (210:1)--(210:2)
	  (330:1)--(330:2);
\end{tikzpicture}
\caption{}
\end{subfigure}
\begin{subfigure}{.5\linewidth}
\centering
\begin{tikzpicture}
\draw	  (0,0)-- node[right] {$\frac{5r\mathcal R_n}{9}+\frac{\alpha}{3}$} (90:2)
	  (0,0)--(210:2)
	  (0,0)--(330:2);
\end{tikzpicture}
\caption{}
\end{subfigure}\\[1ex]
\caption{Reduction of the first level approximation network of the Hanoi attractor.}
\label{deltay-}
\end{figure}
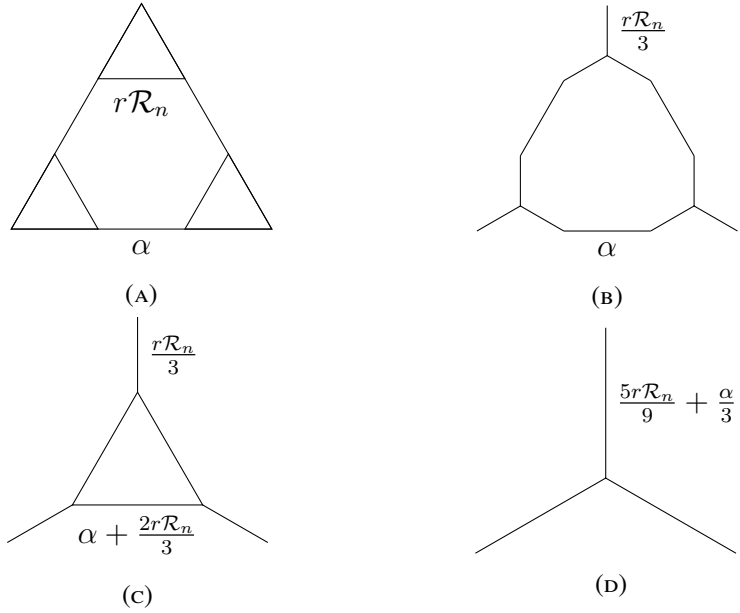
 This can be seen by means of the Delta-Y transform as illustrated in Figure~\ref{deltay-}. Although the Delta-Y transform is classical, one can find the background related to fractal networks in~\cite{BCF+07,IKM+15,MST04,Str06,Tep08}.
The limit must be $2/3$ times the fixed point of the function  $f(z) = \dfrac{5rz}{3}+\alpha $ which is $z_0 = 6\alpha/(1+5\alpha)$ and is thus independent of our choice of sequence. Therefore, the limits coincide and \begin{equation}\label{e-Ra}
R(p_i,p_j) = 2z_0/3 = \dfrac{4\alpha}{1+5\alpha}
\end{equation} 

Applying Kirchhoff's laws, one can now compute the effective resistance between any two points in $\bigcup_{e\in J_n}\Phi_e(I_e)$ for any $n$, and so the limits must coincide for these points as well. Since $\bigcup_{e\in J_n} \Phi_e(I_e)$ becomes uniformly dense in $(\Gmet_n,R_{\Gmet_n})$, this that these metric spaces converge to $(X,R)$ in the Gromov--Hausdorff sense, see for example~\cite[Proposition 7.4.12]{BBI01}.

(3) First, assume that $p,q\in\Phi_e(I_e)$, where $e\in J_n$ and $n$ is the smallest such integer. Then, $e$ is adjacent to $F_w(X)$ for some $w\in\A^n$ and we may assume without loss of generality that $q$ is closer to $F_w(X)$ than $p$ is. Because $n$ is the smallest such that $e\in J_n$, $e$ is the shortest such edge.

This allows us to construct a function $u$ with $u(p) = 0$, $u(q)=1$, interpolating linearly between $p$ and $q$ and staying constant outside. Moreover, $u|_{F_w(X)}\equiv 1$, and it linearly decays from $1$ to $0$ on the other (at most two) edges adjacent to $F_w(X)$. Finally, set $u$ to be constant zero everywhere else. Then, $\Eng_n(u,u)\leq 3/|p-q|$ for all $n\geq n_0$, which implies that $R_n(x,y)\geq |p-q|/3$ and thus $\frac{1}{3}\abs{p-q}\leq R_n(p,q)\leq R(p,q)\leq \abs{p-q}$, where the upper bound comes from Lemma~\ref{engestimate}.

Now suppose that $p,q$ do not belong to such an edge for any $n\in\N$ and let $n_0\in\N$ be the smallest integer such that $p$ and $q$ belong to different $n_0-$cells. Formally, there exist $w_1,w_2\in\A^{n_0}$ with $p\in F_{w_1}(X)$, $q\in F_{w_2}(X)$ and $F_{w_1}(X)\cap F_{w_2}(X)=\emptyset$.
\begin{figure}[h!tpb]
\begin{tikzpicture}[scale=0.55]
\coordinate (p_1) at (0,0);
\coordinate (p_2) at (3.5,6.0621);
\coordinate (p_3) at (7,0);
\coordinate [label=left: {$q'$}](p_12) at (1.5,2.59805);
\fill (p_12) circle (2.5pt);
\coordinate [label=left: {$p'$}](p_21) at (2,3.46405);
\fill (p_21) circle (2.5pt);
\coordinate (p_13) at (3,0);
\coordinate (p_31) at (4,0);
\coordinate (p_23) at (5,3.46405);
\coordinate (p_4) at (5.25,3);
\coordinate (p_32) at (5.5,2.59805);
\draw[draw, fill=black] (2.2,.5) circle (2pt) node[left]{\small{${q}$}};
\draw[draw, fill=black] (3.7,5) circle (2pt) node[left] {\small{${p}$}};
\draw[magenta] (p_1) -- (p_12) -- (p_13) -- cycle;
\draw[blue] (p_2) -- (p_21) -- (p_23) -- cycle;
\draw (p_3) -- (p_32) -- (p_31) -- cycle;
\draw (p_21) -- (p_12)   (p_23) -- (p_32) (p_31) -- (p_13);
\end{tikzpicture}
\caption{{\color{blue}$F_{w_1}(X)$} and {\color{magenta}$F_{w_2}(X)$}}\label{cells}
\end{figure}
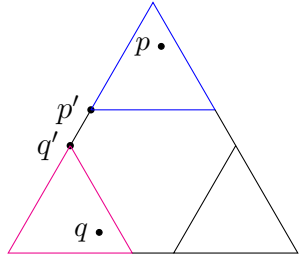
Take $p' \in F_{w_1}(X)$ and $q'\in F_{w_2}(X)$ to be the endpoints of the line segment connecting $F_{w_1}(X)$ and $F_{w_2}(X)$ (see Figure~\ref{cells}). Such points exist because $F_{w_1}(X)$ and $F_{w_2}(X)$ are the largest cells for which $p,q$ are in different cells. Then, $|p-q| \geq |p'-q'| = \alpha\paren{\frac{1-2\alpha}{2}}^{n_0}$ because $p'$ and $q'$ attain the minimum of the (Euclidean) distance between elements in $F_{w_1}(X)$ and $F_{w_2}(X)$. By the triangular inequality
\[
R(p,q)\leq R(p,p')+R(p',q')+R(q',q).
\]
Since $p',q'$ belong to an edge $e\in J_{n_0}$, it follows from Lemma~\ref{engestimate} that $R(p',q')\leq\abs{p'-q'}$. Applying Delta-Y transform we have that
\[
R(p,p')\leq \diam_RF_{w_1}(X)\leq \paren{\frac{1-\alpha}{2}}^{n_0}\diam_R X = \frac23\frac{4\alpha}{1+5\alpha}\paren{\frac{1-\alpha}{2}}^{n_0}=\frac2{3\alpha}\frac{4\alpha}{1+5\alpha}\abs{p-q},
\]
where the second inequality holds because otherwise $p$ and $q$ would have belonged to the same $n_0-$cell. The same holds for $R(q,q')$. 
Since $\frac{4\alpha}{1+5\alpha}<1$, we deduce that $R(p,q)\leq (1 + \frac1\alpha+ \frac1\alpha)\abs{p-q} \leq \frac3\alpha|p-q|$.

On the other hand, $\abs{p-q}<\left(\frac{1-\alpha}{2}\right)^{n_0-1}$ because otherwise $p$ and $q$ could have been separated by $(n_0-1)-$cells and $n_0$ was chosen to be minimal with this property. Note additionally that $R_{n_0}(p,p') = R_{n_0}(q,q') = 0$. Using the bounds from (1) and the lower bound for points which share an edge from above we get
\[
R(p,q)\geq R_{n_0}(p,q) = R_{n_0}(p',q')>\frac{1}{3}\abs{p'-q'}=\frac{1}{3}\alpha\left(\frac{1-\alpha}{2}\right)^{n_0-1}>\frac{\alpha}{3}\abs{p-q}.
\]
Choosing $c=\frac{3}{\alpha}>1$ the chain of inequalities is proved.
\end{proof}


\begin{remark}\label{Eu bilip R}
Theorem~\ref{lemR}~\eqref{i3} proves that $R$ and the Euclidean distance are bi-Lipschitz equivalent, and this implies that the induced topologies on $X$ are the same. In addition, we may define for any $p,q\in X$ the geodesic distance 
\[
d_G(p,q):=\lim_{n\to\infty}d_n(p,q)=\inf_{n}d_n(p,q),
\]
with $d_n$ as in Lemma~\ref{engestimate}. Note that $$d_n(p,q)\geq d_{n+1}(p,q)\geq d_G(p,q)\geq|p-q|$$ for all $p,q\in X$, considering $d_n$ to be infinite if $p,q$ are not in $\Gmet_n$. Moreover, one can prove purely geometrically the sharp bi-Lipschitz estimates
\[
d_G(p,q)\geq|p-q|\geq\frac{1}{2}d_G(p,q),
\]
which also imply that $d_G(p,q)$ is bi-Lipschitz equivalent to $R(p,q)$: Suppose that $n_0\geq 0$ is such that $p,q\in\Gmet_{n_0}$. If $p,q$ belong to the same equilateral triangle, we know from plain geometry that $2|p-q|\geq d_n(p,q)$ for all $n\geq n_0$. If $p,q$ belong to the same hexagon with angles $2\pi/3$ we obtain from this property of equilateral triangles that $2|p-q|\geq d_n(p,q)$ for all $n\geq n_0$ (see Figure~\ref{hexagons}).
\begin{figure}[h!tpb]
\begin{subfigure}{.3\linewidth}
\centering
\begin{tikzpicture}[scale=0.35]
\coordinate (p_1) at (0,0);
\coordinate (p_2) at (3.5,6.0621);
\coordinate (p_3) at (7,0);
\coordinate (p_12) at (1.5,2.59805);
\coordinate [label=below: {$p$}](p_5) at (3.25,0);
\fill (p_5) circle (2.5pt);
\coordinate (p_21) at (2,3.46405);
\coordinate (p_13) at (3,0);
\coordinate (p_31) at (4,0);
\coordinate (p_23) at (5,3.46405);
\coordinate  [label=right: {$q$}](p_4) at (5.25,3);
\fill (p_4) circle (2.5pt);
\coordinate (p_32) at (5.5,2.59805);
\draw[dotted] (p_4) -- (p_5);
\draw[very thin] (p_1) -- (p_2) -- (p_3) --cycle;
\draw[thick] (p_12) -- (p_21) -- (p_23) -- (p_32) -- (p_31) -- (p_13) -- (p_12);
\end{tikzpicture}
\caption{$|p-q|\geq\frac{1}{2}d_{n-1}(p,q)$}
\end{subfigure}
\begin{subfigure}{.3\linewidth}
\centering
\begin{tikzpicture}[scale=0.35]
\coordinate (p_1) at (0,0);
\coordinate (p_2) at (3.5,6.0621);
\coordinate (p_3) at (7,0);
\coordinate (p_12) at (1.5,2.59805);
\coordinate (p_21) at (2,3.46405);
\coordinate (p_13) at (3,0);
\coordinate (p_31) at (4,0);
\coordinate (p_23) at (5,3.46405);
\coordinate (p_32) at (5.5,2.59805);
\coordinate [label=right: {$q$}](q) at (5,1.75);
\fill (q) circle (2.5pt);
\coordinate [label=below: {$p$}](p) at (3.3,0);
\fill (p) circle (2.5pt);
\coordinate [label=below: {$p'$}](p_) at (5.3,0);
\fill (p_) circle (2.5pt);
\draw[densely dashed] (q) -- (p);
\draw[dotted] (q) -- (p_);
\draw[very thin] (p_1) -- (p_2) -- (p_3) --cycle;
\draw[thick] (p_12) -- (p_21) -- (p_23) -- (p_32) -- (p_31) -- (p_13) -- (p_12);
\end{tikzpicture}
\caption{$|p-q|=|p'-q|\geq\frac{1}{2}d_{n-1}(p,q)$}
\end{subfigure}
\begin{subfigure}{.65\linewidth}
\centering
\begin{tikzpicture}[scale=0.35]
\coordinate (p_1) at (0,0);
\coordinate (p_2) at (3.5,6.0621);
\coordinate (p_3) at (7,0);
\coordinate (p_12) at (1.5,2.59805);
\coordinate [label=below: {$p\;$}](p_5) at (3.5,0);
\fill (p_5) circle (2.5pt);
\coordinate (p_21) at (2,3.46405);
\coordinate (p_13) at (3,0);
\coordinate [label=below: {$\;a$}](p_31) at (4,0);
\fill (p_31) circle (2.5pt);
\coordinate [label=right: {$c$}](p_23) at (5,3.46405);
\fill (p_23) circle (2.5pt);
\coordinate [label=above: {$q$}](p_4) at (3.5,3.46405);
\fill (p_4) circle (2.5pt);
\coordinate [label=right: {$b$}](p_32) at (5.5,2.59805);
\coordinate [label=left: {$r$}](r) at (3.5,0.8);
\fill (r) circle (2.5pt);
\draw[densely dashed] (p_4) -- (p_5);
\draw[very thin] (p_1) -- (p_2) -- (p_3) --cycle;
\draw[thick] (p_12) -- (p_21) -- (p_23) -- (p_32) -- (p_31) -- (p_13) -- (p_12);
\draw[dotted] (p_12) -- (p_32) (p_21) -- (p_31) (p_13) -- (p_23);
\end{tikzpicture}
\caption{$|p-q|=|p-r|+|r-q|\geq\frac{1}{2}(|p-a|+|a-r|+|r-c|+|c-q|)=\frac{1}{2}d_n(p,q)$}
\end{subfigure}
\caption{Possible configurations of points in a convex hexagon}
\label{hexagons}
\end{figure}
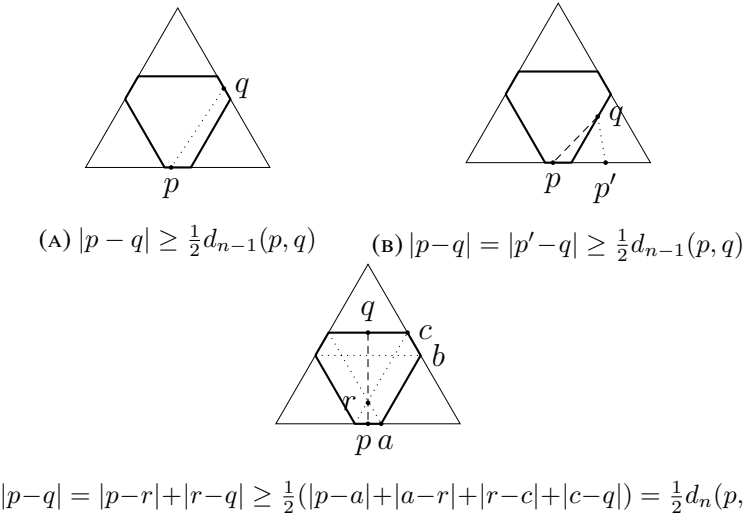
If $p,q\in\Gmet_{n_0}$ are not as in the previous cases, consider the straight line segment connecting $p$ and $q$. It crosses convex sets that are hexagons with angles $2\pi/3$ or equilateral triangles at points $x_1,\ldots,x_m$. If the segment connecting $x_i$ and $x_{i+1}$ lies inside an hexagon, replace it by the piecewise-geodesic going around it (see Figure~\ref{geodesic}). By this procedure we obtain a path inside $\Gmet_{n_0}$ whose length is at most twice $|p-q|$ in view of the previous step. 
Thus we have $2|p-q|\geq d_n(p,q)$ for all $n\geq n_0$ and, conversely, $|p-q|\leq d_G(p,q)$ by definition of geodesic distance.
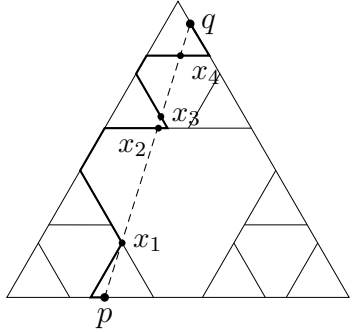
\begin{figure}[h!tpb]
\begin{tikzpicture}[scale=0.65]
\coordinate (p_1) at (0,0);
\coordinate (p_2) at (3.5,6.0621);
\coordinate (p_3) at (7,0);
\coordinate (p_12) at (1.5,2.59805);
\coordinate (p_21) at (2,3.46405);
\coordinate (p_13) at (3,0);
\coordinate (p_31) at (4,0);
\coordinate (p_23) at (5,3.46405);
\coordinate (p_32) at (5.5,2.59805);
\coordinate (p_112) at (1.5*3/7,2.59805*3/7);
\coordinate (p_121) at (2*3/7,3.46405*3/7);
\coordinate (p_113) at (9/7,0);
\coordinate (p_131) at (12/7,0);
\coordinate (p_123) at (15/7,3.4605*3/7);
\coordinate (p_132) at (5.5*3/7,2.59805*3/7);
\coordinate (p_212) at (2+1.5*3/7,3.46405+2.59805*3/7);
\coordinate (p_221) at (2+2*3/7,3.46405+3.46405*3/7);
\coordinate (p_213) at (2+3*3/7,3.46405);
\coordinate (p_231) at (2+4*3/7,3.46405);
\coordinate (p_223) at (2+5*3/7,3.46405+3.46405*3/7);
\coordinate (p_232) at (2+5.5*3/7,3.46405+2.59805*3/7);
\coordinate (p_312) at (4+1.5*3/7,2.59805*3/7);
\coordinate (p_321) at (4+2*3/7,3.46405*3/7);
\coordinate (p_313) at (4+3*3/7,0);
\coordinate (p_331) at (4+4*3/7,0);
\coordinate (p_323) at (4+5*3/7,3.46405*3/7);
\coordinate (p_332) at (4+5.5*3/7,2.59805*3/7);

\coordinate [label=below:{$p$}] (p) at (2,0);
\fill (p) circle (2.5pt);
\coordinate [label=right:{\small{$x_1$}}] (x_1) at (5.5*3/7,2.59805*3/7);
\fill (x_1) circle (2pt);
\coordinate [label=below left:{\small{$x_2$}}] (x_2) at (3.1,3.46405);
\fill (x_2) circle (2pt);
\coordinate [label=right:{\small{$x_3$}}] (x_3) at (3.15,3.7);
\fill (x_3) circle (2pt);
\coordinate [label=below right:{\small{$x_4$}}] (x_4) at (3.55,3.46405+3.46405*3/7);
\fill (x_4) circle (2pt);
\coordinate [label=right:{$q$}] (q) at (3.75,5.6);
\fill (q) circle (2.5pt);
\draw[very thin] (p_1) -- (p_2) -- (p_3) -- cycle;
\draw[very thin] (p_12) -- (p_13)  (p_21) -- (p_23)   (p_31) -- (p_32);
\draw[very thin] (p_112) -- (p_113)  (p_121) -- (p_123)   (p_131) -- (p_132)
(p_212) -- (p_213)  (p_221) -- (p_223)   (p_231) -- (p_232)
(p_312) -- (p_313)  (p_321) -- (p_323)   (p_331) -- (p_332);
\draw[densely dashed] (p) -- (q);
\draw[thick] (p) -- (p_131) -- (x_1) -- (p_12) -- (p_21) -- (x_2) -- (p_213) -- (x_3) -- (p_212) -- (p_221) -- (x_4) -- (p_223) -- (q);
\end{tikzpicture}
\caption{Piecewise-geodesic and Euclidean paths from $p$ to $q$.}\label{geodesic}
\end{figure}
\end{remark}

\begin{remark}
Let $\Omega_{d_G}$ be defined as the completion of $ \bigcup_{n\in\N_0}\Gmet_n$ with respect to $d_G$. Then  $\Omega_{d_G}$ can be naturally and homeomorphically identified with $X$ in such a way that it is bi-Lipschitz equivalent to both $R$ and the Euclidean metric.
\end{remark}

\begin{remark} 
The metric $d_G$ is partially self-similar on $X$ in that $$d_G(F_w(x),F_w(y)) =\left(\frac{1-\alpha}{2}\right)^nd_G(x,y)$$ for any $w\in\A^n$. To see this, note that for any $k>0$ there is a bijection between paths in $\Gmet_k$ from $x$ to $y$, and paths in $F_w(\Gmet_k)\subset \Gmet_{k+n}$. It is easy to see that a minimizing path will not leave $F_w(\Gmet_k)$, and so this implies that $d_{n+k}(F_w(x),F_w(y))= \left(\frac{1-\alpha}{2}\right)^nd_k(x,y)$.  Self-similarity follows by passing to the limit. 
In addition, from the following picture 
\begin{figure}[h!tpb]
\begin{tikzpicture}[scale=0.55]
\coordinate [label=left: {$
p_1$}](p_1) at (0,0);
\fill (p_1) circle (2.5pt);
\coordinate [label=above: {$p_2$}] (p_2) at (3.5,6.0621);
\fill (p_2) circle (2.5pt);
\coordinate (p_3) at (7,0);
\coordinate (p_12) at (1.5,2.59805);
\coordinate [label=left: {$F_2(p_1)$}](p_21) at (2,3.46405);
\fill (p_21) circle (2.5pt);
\coordinate (p_13) at (3,0);
\coordinate (p_31) at (4,0);
\coordinate [label=right:{$F_2(p_3)$}] (p_23) at (5,3.46405);
\fill (p_23) circle (2.5pt);
\coordinate [label=below left:{$p_4$}] (p_4) at (5.25,3);
\fill (p_4) circle (2.5pt);
\coordinate (p_32) at (5.5,2.59805);
\draw (p_1) -- (p_13) node[midway, above] {\small{$F_1(X)$}};
\draw (p_21) -- (p_23) node[midway, above] {\small{$F_2(X)$}};
\draw (p_31) -- (p_3) node[midway, above] {\small{$F_3(X)$}};
\draw (p_12) -- (p_13)  (p_32) -- (p_31)   (p_23) -- (p_21) (p_13) -- (p_31);
\draw (p_1) -- (p_2) -- (p_3) -- cycle;
\end{tikzpicture}
\end{figure}
one can conclude that the geodesic diameter of $X$ is the distance from 
$p_1$ to $p_4$. Here $p_4$ is the fixed point of $F_4$, i.e. the midpoint of the line segment connecting $p_2$ and $p_3$.
\end{remark}

\begin{remark}
Since the Euclidean metric, $d_G$ and $R$ are all equivalent metrics, the Hausdorff dimension of $X$ with respect to any of these metrics is the same value, in particular 
$$\dim(X)=\max\left\{1,\ \frac{\ln3}{\ln2-\ln(1-\alpha)}\right\}.
$$ 
\end{remark}

\subsection{Proof of Theorem~\ref{thm resistance form}}
This subsection proves Theorem~\ref{thm resistance form} using the results of Section~\ref{subsec metric obs}. The subsections \ref{subsubsec:traces} and \ref{subsubsec:resistanceform} establish that $\eng$ can be extended to a resistance form on $X$ using techniques from \cite{Kig12}. In Subsection \ref{subsubsec:dom_characterization} it is shown that the domain of this resistance form is $H^1(X)$.  

\subsubsection{Finite dimensional resistance forms on $X$}\label{subsubsec:traces}
The first part of the proof relies on~Theorem~\ref{lemR}~\eqref{lemR1}. Here we do not provide the domain $\Dom\eng$ of $\eng$ explicitly but instead use an abstract result of 
Kigami concerning compatible sequences of resistance forms~\cite[Theorem 3.13]{Kig12}. 

For any nonempty finite subset $S\subset X$ and any function $u\in\ell(S)$ we define 
\begin{equation*}\label{eES}
\eng_S(u,u)=\inf_n \inf_v\set{\Eng_n(v,v):v\big|_S=u, v\in\F_n}.
\end{equation*}
As a consequence of Theorem~\ref{lemR}~\eqref{lemR1},
we have the following facts: 
Each biniliarized form $\eng_S(u,v)$ is a resistance form on the finite set $S$, and in particular $\eng_S$ vanishes only on constants. 

(RF1) Clearly $\ell(S)$ is a linear subspace of itself and if $u\equiv\const$ then $\eng_S(u,u)=0$. Conversely, if $\eng_S(u,u)=0$ then also $u$ is constant because if $u$ were nonconstant, there would be $x\neq y\in S$ with $u(x)\neq u(y)$ and thus
\[
\eng_n(v,v)\geq \frac{|u(x)-u(y)|^2}{d_n(x,y)}>0
\]
for all $v\in\F_n$ with $v\vert_S\equiv u$ by Lemma~\ref{engestimate}, implying $\eng_S(u,u)>0$.

(RF2) $(\ell(S)/_{\sim},\eng_S^{1/2})$ is Hilbert because $\ell(S)$ is finite dimensional, and thus (R1) implies $\eng_S$ is an inner product on the quotient space. 

(RF3) $\ell(S)$  separates points.

(RF4) Define $R_S(p,q):=\sup\big\{\frac{|u(p)-u(q)|^2}{\eng_S(u,u)}\colon~u\in\ell(S),~\eng_S(u,u)\neq 0\big\}$ for any $p,q\in S$.
\begin{align*}
R_S(p,q)&=\sup_{u\in\ell(S)}\frac{|u(p)-u(q)|^2}{\inf\limits_n\inf\limits_{\substack{v\in\F_n\\v|_S=u}}\eng_n(v,v)}=\sup_{u\in\ell(S)}\sup_n\sup\limits_{\substack{v\in\F_n\\v|_S=u}}\frac{|u(p)-u(q)|^2}{\eng_n(v,v)}\\
&=\sup_n\sup_{u\in\ell(S)}\sup\limits_{\substack{v\in\F_n\\v|_S=u}}\frac{|v(p)-v(q)|^2}{\eng_n(v,v)}
{=\sup_n\sup_{v\in\F_n}\frac{|v(p)-v(q)|^2}{\eng_n(v,v)}}\\
&=\sup_n R_n(p,q)=R(p,q)<\infty,
\end{align*}
where $R(p,q)$ was defined in Theorem~\ref{lemR}~\eqref{lemR1}. Interchanging of suprema is possible because 
 $\sup_n\sup\{\frac{|u(p)-u(q)|^2}{\eng_n(v,v)}~\colon v\in\F_n,~v|_S=u\}$ is uniformly bounded by $d_G(p,q)$ by Lemma~\ref{engestimate}.

(RF5) Consider $u\in\ell(S)$ and $\overline{u}:=0\vee u\wedge 1$. On the one hand, $\eng_n(\overline{v},\overline{v})\leq \eng_n(v,v)$ for any $v\in\F_n$, which implies
\[
\inf\set{\eng_n(\overline{v},\overline{v})~\colon~v\in\F_n,~v|_S=u}\leq\inf\set{\eng_n(v,v)~\colon~v\in\F_n,~v|_S=u}=\eng_S(u,u).
\]
On the other hand, for all $n\geq 1$, $\{\overline{v}~\colon~v\in\F_n,~v|_S=u\}\subseteq\{v~\colon~v\in\F_n,~v|_S=\overline{u}\}$ 
and thus
\[
\eng_S(\overline{u},\overline{u})=\inf\set{\eng_n(v,v)~\colon~v\in\F_n,~v|_S=\overline{u}}\leq\inf\set{\eng_n(\overline{v},\overline{v})~\colon~v\in\F_n,~v|_S=u}.
\]

\subsubsection{Compatible sequences of finite dimensional resistance forms}\label{subsubsec:resistanceform}
In this section, we prove that the bilinear form $\eng$ can be extended to a resistance form. To do this, we show that the family of resistance forms $\set{\eng_S,S\subset X}$ is compatible in the sense of~\cite[Def. 3.12]{Kig12}. For a sequence of finite sets satisfying $S_1\subset S_2\subset\ldots\subset S_k\subset\ldots$ we prove that for any $k\in\N$ and any $u\in\ell(S_k)$
\begin{equation*}\label{eq comp sequence}
\eng_{S_k}(u,u)=\inf\set{\eng_{S_{k+1}}(v,v)~:~v\in\ell(S_{k+1}),~v|_{S_k}=u}.
\end{equation*}
Indeed,
\begin{align*}
\inf\limits_{\substack{v\in\ell(S_{k+1})\\ v|_{S_k}=u}}\eng_{S_{k+1}}(v,v)&=\inf\limits_{\substack{v\in\ell(S_{k+1})\\ v|_{S_k}=u}}\inf\limits_n\inf\limits_{\substack{w\in\F_n\\ w|_{S_{k+1}}=v}}
 \eng_n(w,w)\\
 &=\inf\limits_n\inf\limits_{\substack{v\in\ell(S_{k+1})\\ v|_{S_k}=u}}\inf\limits_{\substack{w\in\F_n\\ w|_{S_{k+1}}=v}}\eng_n(w,w)
 {=\inf_n\inf\limits_{\substack{w\in\F_n\\ w|_{S_k}=u}}\eng_n(w,w)}=\eng_{S_k}(u,u).
\end{align*}

From~\cite[Theorem 3.13]{Kig12} we obtain the existence of a resistance form $(\eng',\Dom\eng')$ given by $\eng'(u,u)=\lim_{k\to\infty}\eng_{S_k}(u|_{S_k},u|_{S_k})$. This is a resistance form on the closure of $\bigcup_k S_k$ w.r.t. the effective resistance metric $R'$ of $\eng'$. From the proof of (RF4) we have that the metrics $R$ and $R'$ coincide on $S_k$ for any $k$. Since the sequence $(S_k)_k$ converges to a dense set in $X$ with respect to $R$ and $X$ is complete, the completion of $\cup_k S_k$ with respect to $R'$ is the completion with respect to $R$. In particular, $(\eng',\Dom\eng')$ is a resistance form on $X$ and $R' = R$ on all of $X$. 

In order to show that $\eng'$ is an extension of $\eng$, we prove that $\eng'(u) = \eng_k(u)$ for all $u\in\F_k$.  Without loss of generality, choose
\[
S_k:=\left\{\Phi_e\left(\frac{r(e)m}{2^k}\right)~\colon~0\leq m\leq 2^k,\,e\in J_k\right\}.
\]
This choice is important because $S_k$ becomes dense in a uniform way.

For any $u\in \F_n$ and $n\in\N$, $u\in \dom \eng'$ because $\eng_{S_k} (u|_{S_k},u|_{S_k}) \leq \eng_n(u,u)$ for all $k$, hence $\eng'(u,u) \leq \eng_n(u,u) = \eng(u,u)$. On the other hand, $\eng(v,v)$ with $v|_{S_k}=u|_{S_k}$ is minimized by a function $u_k$ such that $ u_k \circ \Phi_e: I_e \to \R$ is a piecewise linear function that interpolates between values of $u$ on points in $\Phi^{-1}_e(S_k)$. 
Further, since $S_k$ includes the endpoints of $I_e$, the function which extends these values of $v_k$ to $J_n^c$ by constants is well defined and it will be the minimizer. In particular, $v_k\in\F_n$ and it is constant on all edges where $u$ is constant. Thus,
\[
\eng'(u,u) =\lim_{k\to\infty} \eng_{S_k}(u|_{S_k},u|_{S_k})= \lim_{k\to\infty} \eng_k(v_k,v_k) = \eng(u,u)
\]
because the points in $S_k$ become uniformly dense in $\Phi_e(I_e)$ and hence
 \[
\lim_{k\to\infty} \int_0^{\ell_k} ((v_k\circ\Phi_e)'(t))^2 \ dt = \int_0^{\ell_k} (u\circ\Phi_e)'(t))^2 \ dt.
 \]
%

This implies that $\eng(u,u) = \eng'(u,u)$ for any $u\in \cup_n \F_n$, so that the resistance form $\eng'$ is an extension of $\eng$, and from now on we shall refer to $(\eng',\dom\eng')$ as $(\eng,\dom \eng)$. Note that $\cup_{n=1}^\infty\F_n\subsetneq \dom\eng$. This also implies that the construction is in fact independent on which dense countable subset is chosen.

Moreover, it was shown above that for any $v\in\ell(S_k)$ and any $k$, there is some $n$ and $u\in \F_n$ such that $\eng_{S_k}(v,v) = \eng_n(u,u)$ and $u(x)= v(x)$ for all $x\in S_k$.  Thus, by~\cite[Theorem 3.13]{Kig12} and property (RF4) of resistance forms, a function $u\in \ell(X)$ is in $\Dom\eng$ if and only if there exists a sequence $(u_n)_n$, $u_n\in\F_n$ such that
\[
\norm{u_n-u}_{\infty}\to 0\qquad\text{and}\qquad (u_n)_n\text{ is }\eng-\text{Cauchy}.
\]

\subsubsection{Characterization of \ $\Dom\eng$}\label{subsubsec:dom_characterization}
After having established that $\eng$ can be extended to a resistance form, the final step in proving Theorem~\ref{thm resistance form} is showing that $\dom\eng = H^1(X)$. This requires the full strength of Theorem~\ref{lemR} and approximation by quantum graphs.  

In particular, we get that $u\in H^1(X)$ if and only if there is a sequence $u_n\in\F_n$ such that 
\[
u_n\to u\text{\ uniformly on $X$}
\]
and 
\[
(u_n)_n\text{\ is an $\eng$-Cauchy sequence.} 
\]
Then we have 
\begin{equation*}\label{eng as limit}
\eng(u,u)=\lim\eng(u_n,u_n)<\infty, 
\end{equation*}
where $\eng$ is defined in~\eqref{e-one}. 
To see that $H^1(X)$ is closed under the above type of convergence, consider a sequence $(u_n)_n\subset H^1(X) \subset C(X)$ is given. Its pointwise limit $u$ belongs to $H^1(X)$ because on any interval $I$ contained in $X$, $u_n$ restricted to that interval will be in the classical Sobolev space $H^1(I)$, which is closed under the above limits.

It is easy to see that $\F_n\subset H^1(X)$ for all $n$, and because $\dom \eng$ is the closure of $\cup_n\F_n$ under the above kind of limit, this implies that $\dom \eng\subset H^1(X)$.

Given a function $u\in H^1(X)$, we construct and $\eng$-Cauchy sequence $(u_n)_n$ with $u_n\in \F_{n}$ and $u_n \to u$ uniformly. Without loss of generality, we can assume that $u$ is linear on all straight line segments $\Phi_e(I_e)$ because the energy orthogonal complement of such functions are those which vanish at all the endpoints of $\Phi_e(I_e)$, and such functions are easily approximated by elements of $\F_*$. If $u$ is linear on each straight line segment in $\Phi_e(I_e)$ and $n\in\N$ is fixed, we approximate $u$ by averaging on the cells $F_w(X)$ for $w\in\A^n$ and interpolating linearly on the segments $\Phi_e(I_e)$, $e\in J_n$. 
It is elementary to prove that such as sequence $(u_n)_n$ is an $\eng$-Cauchy sequence. The key observation is that, according to Theorem~\ref{lemR}, the effective resistance diameter of the cells $F_w(X)$ is controlled by the resistance of the segments $\Phi_e(I_e)$ for $e\in J_n$, and so the energy of the difference between $u$ and $u_{n+1}$ is controlled by the energy of $u$ contained inside the cells $F_w(X)$, which vanishes as $n\to\infty$. 

This concludes the proof of Theorem \ref{thm resistance form}. 

\subsection{Approximation by quantum graphs $\Gmet_n$}

Having proven Theorem~\ref{thm resistance form} we end this section with another useful characterization of $\dom\eng = H^1(X)$.

\begin{proposition}
A function $u\in C(X)$ belongs to $H^1(X)$ if and only if the restriction of $u$ to any $\Gmet_n$ is a finite energy function and
\[
\sup_n \eng_{\Gmet_n}(u\big|_{\Gmet_n},u\big|_{\Gmet_n}) < \infty.
\]
In this case, the sequence $\eng_{\Gmet_n}(u\big|_{\Gmet_n},u\big|_{\Gmet_n})$ is non-decreasing and
\[
\eng(u,u)=\lim_{n\to\infty}\eng_{\Gmet_n}(u\big|_{\Gmet_n},u\big|_{\Gmet_n})=\sup_n \eng_{\Gmet_n}(u\big|_{\Gmet_n},u\big|_{\Gmet_n}).
\]
\end{proposition}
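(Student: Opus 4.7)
The plan is to reduce the proposition to a monotone-convergence argument for the one-dimensional Hausdorff integral, leveraging the inclusion $\Gmet_n \subset \Gmet_{n+1}$ already recorded in Section~\ref{sect HanoiAtts}. Writing $\Gmet_\infty := \bigcup_n \Gmet_n$, density of $\Gmet_\infty$ in $X$ combined with the fact that $X\setminus \Gmet_\infty$ is totally disconnected (and so contains no straight line segment) means that the energy $\eng(u,u) = \int_X (u')^2\,dx$ from~\eqref{e-one} coincides with $\int_{\Gmet_\infty}(u')^2\,d\haus$. This is the unifying identification: once it is in hand, the entire proposition is a monotone-convergence computation.

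The key steps I would carry out are as follows. First, monotonicity: each edge $e\in E_n$ is exactly partitioned by finitely many edges of $E_{n+1}$ (a $T_n$-edge on a side of $F_w(V_0)$ decomposes into two $T_{n+1}$-subedges plus a middle $J_{n+1}$-edge, and each $J_n$-edge is also a $J_{n+1}$-edge), so additivity of the integral together with continuity of $u$ at the junctions yields
\[
\eng_{\Gmet_{n+1}}(u|_{\Gmet_{n+1}},u|_{\Gmet_{n+1}}) - \eng_{\Gmet_n}(u|_{\Gmet_n},u|_{\Gmet_n}) = \int_{\Gmet_{n+1}\setminus \Gmet_n}(u')^2\,d\haus \geq 0,
\]
with the surplus carried by the inner sides of the newly introduced sub-cells $F_{wi}(V_0)$. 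For the forward implication, the inclusion $\Gmet_n \subset X$ directly gives $\eng_{\Gmet_n}(u|_{\Gmet_n},u|_{\Gmet_n})\leq \eng(u,u)<\infty$ for every $n$. For the reverse implication, with $M := \sup_n \eng_{\Gmet_n}(u|_{\Gmet_n},u|_{\Gmet_n})<\infty$, monotone convergence together with the identification above gives
\[
\eng(u,u) = \int_{\Gmet_\infty}(u')^2\,d\haus = \lim_n \int_{\Gmet_n}(u')^2\,d\haus = M < \infty;
\]
combined with $u\in C(X)$ and the observation that every straight line segment in $X$ is a finite concatenation of edges of some $\Gmet_n$ (on each of which $u$ is $H^1$, with continuity at junctions), the definition of $H^1(X)$ from Section~\ref{sect HanoiAtts} together with Theorem~\ref{thm resistance form} places $u$ in $H^1(X) = \dom\eng$. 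The stated formula then follows.

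The main obstacle is the clean identification $\eng(u,u) = \int_{\Gmet_\infty}(u')^2\,d\haus$ underlying the description of ``countably many straight line segments in $X$'' in~\eqref{e-one}. The natural approach is to check that the joining edges $\bigcup_k J_k$ already cover $\Gmet_\infty$ up to $\haus$-measure zero: at every refinement step the middle portion of a $T_n$-edge becomes a permanent $J_{n+1}$-edge, while the two remaining end pieces are $T_{n+1}$-edges that shrink by a further factor $r = (1-\alpha)/2<1$, so their total length forms a convergent geometric series. Once this bookkeeping is recorded, the remaining arguments reduce to routine monotone convergence.
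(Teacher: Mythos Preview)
Your proposal is correct and follows essentially the same approach as the paper: both identify $\eng_{\Gmet_n}(u|_{\Gmet_n},u|_{\Gmet_n})$ as a partial sum of the nonnegative terms defining $\eng(u,u)$ and then invoke monotone convergence (the paper phrases this as ``a rearrangement of the sum $\eng(u,u)$''). You spell out the edge bookkeeping $E_n\to E_{n+1}$ and the identification $\eng(u,u)=\int_{\Gmet_\infty}(u')^2\,d\haus$ more explicitly than the paper does, while the paper additionally records (though the proposition does not require it) that $H^1(X)\subset C(X)$ via (RF4) and Theorem~\ref{lemR}.
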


\begin{proof}
On one hand, if $u\in H^1(X)$, then $|u(p)-u(q)|\leq\eng(u,u)R(p,q)$ by (RF4) and Remark~\ref{Eu bilip R}. Therefore, $u$ is continuous with respect to the effective resistance and hence continuous with respect to the Euclidean metric by Theorem \ref{lemR}~\eqref{i3}, i.e. $H^1(X)\subset C(X)$. Furthermore, it is easy to see that $\eng_{\Gmet_n}(u\big|_{\Gmet_n},u\big|_{\Gmet_n})\leq \eng(u,u)$, as the latter is the sum over a set of positive terms and the former is the sum over a subset of these terms. 


On the other hand, if $u\in C(X)$ with $\sup_n \eng_{\Gmet_n}(u\big|_{\Gmet_n},u\big|_{\Gmet_n}) < \infty,$ it can be seen that $\lim_n \eng_{\Gmet_n}(u\big|_{\Gmet_n},u\big|_{\Gmet_n}) = \eng(u,u)$ by observing that for every edge $e$ of $X$ there is $n_0\in\N$ such that $e$ is a subset of a an edge of $\Gmet_n$ for all $n> n_0$ and  thus the limit of $ \eng_{\Gmet_n}(u\big|_{\Gmet_n},u\big|_{\Gmet_n})$ is a rearrangement of the sum $\eng(u,u)$. 
\end{proof}

\section{Spectral asymptotics}\label{sec-asymp}

We know from~\cite[Chapter 9]{Kig12} that a resistance form together with a locally finite re\-gu\-lar measure induces a Dirichlet form on the corresponding $L^2$-space. By introducing an appropriate measure $\mu$ on $X$, we can therefore obtain a Dirichlet form and a Laplacian on $L^2(X,\mu)$. The spectral properties of this operator strongly depend on the measure, that we choose in a weakly self-similar manner in view of the geometric properties of $X$.


Recall from the introduction the parameters $\beta\in(0,1/3)$ and $s=\frac{1-3\beta}{3}$. For any $w\in\A^n$, $n\in\N$, we define
\[
\mu(F_w(U)):=s^{\abs{w}}\mu(U),
\]
where $\mu$ is Lebesgue measure on $I_e$ for $e\in J_1$ with $\mu(I_e) = \beta$. Notice that $\beta$ and $s$ are related in such a way that $\mu(X)=1$.

As a direct consequence of Theorem~\ref{lemR}, $X$ is compact with respect to the resistance me\-tric and it follows from~\cite[Corollary 6.4]{Kig12} that the induced Dirichlet form coincides with $(\eng, \Dom \eng)$. Next definition is a well-known fact from the theory of Dirichlet forms that can be found in~\cite[Corollary 1.3.1]{FOT11}. 


\begin{definition}
The \textit{Laplacian associated with} $(\eng, \Dom \eng)$ is the unique non-negative self-adjoint operator $\Delta_\mu\colon\Dom\Delta_\mu\to L^2(X,\mu)$ such that $\Dom\Delta_\mu$ is dense in $L^2(X,\mu)$ and 
\[\eng(u,v)=-\int_X \Delta_\mu u\cdot v\,d\mu\qquad \forall\,v\in\Dom \eng.\]
\end{definition}


Recall that $r:=\frac{1-\alpha}{2}$ denotes the scaling factor of the similitudes $F_1,F_2,F_3$ and write $I_e=[0,r^n\alpha]$ for any $e\in J_{n+1}\setminus J_n$, $n\in\N_0$.
\begin{lemma}\label{lem-r}
For any $u\in\Dom\eng$,
\[\eng(u,u)=\sum_{i=1}^3r^{-1}\eng(u\circ F_i,u\circ F_i)+\sum_{e\in J_1}\int_0^{\alpha}\abs{u'}^2dx.\]
\end{lemma}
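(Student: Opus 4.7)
The plan is to exploit the decomposition $X=\bigcup_{i=1}^{6}F_i(X)$ together with the fact that $\eng$ is defined as a one-dimensional integral along the straight line segments of $X$. Under this decomposition, $F_1(X),F_2(X),F_3(X)$ are similitude copies of $X$ at ratio $r=(1-\alpha)/2$, while $F_4(X),F_5(X),F_6(X)$ are precisely the three joining segments of length $\alpha$, i.e., the elements of $J_1$. These six pieces overlap only at the finite vertex set $V_1$, which has one-dimensional Lebesgue measure zero on every line segment in $X$. Since the defining formula \eqref{e-one} is a countable sum of one-dimensional integrals over straight line segments, we may split
\[
\eng(u,u)=\sum_{i=1}^{3}\int_{F_i(X)}(u')^2\,dx+\sum_{e\in J_1}\int_0^{\alpha}|u'|^2\,dx,
\]
where the second sum is exactly the contribution of $F_4(X),F_5(X),F_6(X)$.

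The remaining task is to prove, for each $i\in\{1,2,3\}$, the identity
\[
\int_{F_i(X)}(u')^2\,dx=r^{-1}\,\eng(u\circ F_i,u\circ F_i).
\]
For this I would first verify that $u\circ F_i\in\dom\eng=H^1(X)$: since $F_i$ is a similitude of ratio $r$, it is bi-Lipschitz from $(X,|\cdot|)$ onto its image, hence (using Theorem~\ref{lemR} to transfer bi-Lipschitz estimates to the effective resistance) the map $u\mapsto u\circ F_i$ preserves $H^1(X)$ by Theorem~\ref{thm resistance form}. Next, every straight line segment contained in $F_i(X)$ has the form $F_i(e)$ for some segment $e$ of $X$; parameterizing $e$ by arc length $\tau\in[0,|e|]$ and $F_i(e)$ by arc length $s\in[0,r|e|]$, the similitude relation gives $s=r\tau$, so that
\[
\int_0^{r|e|}(u'(s))^2\,ds=\int_0^{|e|}(u'(r\tau))^2\,r\,d\tau=r^{-1}\int_0^{|e|}\bigl((u\circ F_i)'(\tau)\bigr)^2\,d\tau,
\]
where the factor $r^{-1}$ arises from the chain rule $(u\circ F_i)'=r\cdot u'\circ F_i$ combined with the Jacobian $r$. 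Summing this identity over all straight line segments $e$ of $X$ yields the scaling identity displayed above.

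Putting the two steps together produces the asserted formula. The one subtle point — and the main thing to write out carefully — is the justification that the sum-of-integrals expression for $\eng$ on $F_i(X)$ can be reindexed via the bijection $e\mapsto F_i(e)$ between segments of $X$ and segments of $F_i(X)$; this is essentially the self-similar structure of the family of line segments in $X$ under $F_1,F_2,F_3$, together with the fact that $F_4(X),F_5(X),F_6(X)$ are themselves single segments and therefore contain no further "interior" line segments to worry about. Once this bookkeeping is in place, the identity is a direct application of the classical change of variables on each one-dimensional segment.
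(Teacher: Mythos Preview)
Your proposal is correct and follows essentially the same route as the paper: decompose the integral defining $\eng(u,u)$ according to the self-similar structure $X=\bigcup_{i=1}^6 F_i(X)$, isolate the three level-one joining segments $F_4(X),F_5(X),F_6(X)$ as the $J_1$ term, and on each similitude copy $F_i(X)$ apply the one-dimensional change of variables $x=F_i(y)$ to extract the factor $r^{-1}$. The paper organizes the same computation by writing $\eng(u,u)=\sum_{k\geq 1}\sum_{e\in J_k\setminus J_{k-1}}\int_0^{r^{k-1}\alpha}|u'|^2\,dx$ and then reindexing via $e\mapsto F_i(e)$, which is exactly the bookkeeping you mention at the end; your formulation is slightly more conceptual but the content is identical.
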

\begin{proof}
Let $u\in H^1(X)$. 
\begin{align*}
\eng(u,u)&=\sum_{k=1}^\infty\sum_{e\in J_k\setminus J_{k-1}}\int_0^{r^{k-1}\alpha}\abs{u'}^2dx\\
&=\sum_{i=1}^3\sum_{k=1}^{\infty}\sum_{e\in F_i(J_k\setminus J_{k-1})}\int_0^{r^k\alpha}\abs{u'}^2dx+\sum_{e\in J_1}\int_0^{\alpha}\abs{u'}^2dx
\end{align*}
Applying the transformation of variables $x=F_i (y)$ we get that
\begin{align*}
\eng(u,u)&=\sum_{i=1}^3\sum_{k=1}^{\infty}\sum_{e\in F_i(J_k\setminus J_{k-1})}r^{-1}\int_0^{r^{k-1}\alpha}\abs{(u\circ F_i)'}^2dy+\sum_{e\in J_1}\int_0^{\alpha}\abs{u'}^2dx\\
&=\sum_{i=1}^3r^{-1}\eng(u\circ F_i,u\circ F_i)+\sum_{e\in J_1}\int_0^{\alpha}\abs{u'}^2dx.
\end{align*}
\end{proof}

By iterating we get the following generalization of this Lemma.

\begin{corollary}\label{corollary: energy identity}
For any $u\in\Dom\eng$ and $m\in\N$,
\[\eng(u,u)=\sum_{w\in\Am}r^{-m}\eng(u\circ F_w,u\circ F_w)+\sum_{k=0}^{m-1}r^{-k}\sum_{w\in\A^k}\sum_{e\in J_1}\int_0^{\alpha}\abs{(u\circ F_w)'}^2dx.\]
\end{corollary}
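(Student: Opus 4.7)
The statement is a direct iteration of Lemma~\ref{lem-r}, so I would proceed by induction on $m$. The base case $m=1$ is exactly the conclusion of Lemma~\ref{lem-r} (using the convention that the empty word contributes $\sum_{e \in J_1} \int_0^\alpha |u'|^2\,dx$ at level $k=0$).

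For the inductive step, assume the formula holds for some $m\geq 1$. For each $w \in \mathcal{A}^m$ the function $u \circ F_w$ lies in $\Dom\eng$ (by the self-similar structure established in Section~\ref{resistance}, since $F_w(X)$ is homeomorphic to $X$ via a scaling by $r^m$), so Lemma~\ref{lem-r} applies and gives
\begin{equation*}
\eng(u\circ F_w, u\circ F_w) = \sum_{i=1}^3 r^{-1}\,\eng(u\circ F_w \circ F_i, u\circ F_w \circ F_i) + \sum_{e\in J_1} \int_0^\alpha |(u\circ F_w)'|^2\,dx.
\end{equation*}
Substituting this into the inductive hypothesis, using $F_w \circ F_i = F_{wi}$ together with the bijection $\mathcal{A}^m \times \mathcal{A} \to \mathcal{A}^{m+1}$, $(w,i) \mapsto wi$, the first term becomes
\begin{equation*}
\sum_{w\in\mathcal{A}^m} r^{-m} \sum_{i=1}^3 r^{-1}\,\eng(u\circ F_{wi}, u\circ F_{wi}) = \sum_{w'\in\mathcal{A}^{m+1}} r^{-(m+1)}\,\eng(u\circ F_{w'}, u\circ F_{w'}),
\end{equation*}
while the remaining contribution reorganizes to
\begin{equation*}
r^{-m} \sum_{w\in\mathcal{A}^m}\sum_{e\in J_1}\int_0^\alpha |(u\circ F_w)'|^2\,dx + \sum_{k=0}^{m-1} r^{-k} \sum_{w\in\mathcal{A}^k} \sum_{e\in J_1} \int_0^\alpha |(u\circ F_w)'|^2\,dx = \sum_{k=0}^{m} r^{-k} \sum_{w\in\mathcal{A}^k}\sum_{e\in J_1}\int_0^\alpha |(u\circ F_w)'|^2\,dx,
\end{equation*}
which is exactly the desired identity at level $m+1$.

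There is no real obstacle here; the only thing to be careful about is the bookkeeping of the telescoping sum of boundary terms and the fact that Lemma~\ref{lem-r} can legitimately be reapplied to $u\circ F_w$ for every $w \in \mathcal{A}^m$. The latter is automatic because $F_w$ is a similitude of ratio $r^m$ mapping $X$ homeomorphically onto $F_w(X) \subset X$, and membership in $\Dom\eng$ is preserved under such maps in view of the domain characterization $H^1(X)$ from Theorem~\ref{thm resistance form}.
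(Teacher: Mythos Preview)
Your proof is correct and is exactly the approach the paper intends: the paper states only ``By iterating we get the following generalization of this Lemma,'' and your induction on $m$ is precisely that iteration spelled out. Your observation that $u\circ F_w\in H^1(X)$ whenever $u\in H^1(X)$ (needed to reapply Lemma~\ref{lem-r}) is the only point requiring comment, and your justification via the domain characterization in Theorem~\ref{thm resistance form} is adequate.
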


The \textit{eigenvalue counting function} of $\Delta_{\mu}$ subject to Neumann (resp. Dirichlet) boundary conditions is defined as
\[N_N(x):=\#\{\lambda\;\text{Neumann eigenvalue of }\Delta_{\mu}~\colon~\lambda\leq x\},\]
respectively
\[N_D(x):=\#\{\lambda\;\text{Dirichlet eigenvalue of }\Delta_{\mu}~\colon~\lambda\leq x\}\]
counted with multiplicity. 
In our particular case, the boundary of $X$ is the set $V_0=\{p_1,p_2,p_3\}$.

This function can also be defined for Dirichlet forms by considering that $\lambda\in\R$ is an eigenvalue of $\eng$ if and only if there exists $u\in\Dom\eng$ such that $\eng(u,v)=\lambda\int_X uv\,d\mu$ $\forall\,v\in\Dom\eng$. In this case the eigenvalue counting function
\[N(x;\eng,\Dom\eng):=\#\{\lambda\;\text{eigenvalue of }\eng~\colon~\lambda\leq x \}\]
coincides with $N_N(x)$ (see~\cite[Proposition 4.1]{Lap91}). Analogously it holds that
\[N_D(x)=N(x;\eng^0,\Dom \eng^0),\] 
where $\Dom \eng^0:=\{u\in\Dom\eng~\colon~u_{\vert_{V_0}}\equiv 0\}$ and $\eng^0:=\eng_{\vert_{\Dom \eng^0\times\Dom \eng^0}}$.

The asymptotic behaviour of the eigenvalue counting function is described by the so-called \textit{spectral dimension} of $X$, that is the non-negative number $d_S$ such that
\begin{equation*}\label{eq: def d_S  X}
\lim_{x\to\infty}N_{N/D}(x) x^{-d_S /2}=C<\infty.
\end{equation*}

The expression $N_{N/D}(x)$ means that a property holds for both $N_N(x)$ and $N_D(x)$ and we will use it in the following to simplify notation.

The main result of this section is Theorem~\ref{theorem: d_S X}, which indicates the value of the spectral dimension of $X$. The proof of this theorem is divided into several lemmas that estimate the eigenvalue counting functions $N_N(x)$ and $N_D(x)$ and it mainly follows ideas of~\cite{Kaj10}, that can be applied due to the choice of the measure $\mu$.

We introduce the norm on $\Dom\eng$ given by
\[
\norm{u}_{\eng^{(1)}}:=\left(\eng(u,u)+\norm{u}_{L^2(X,\mu)}^2\right)^{1/2}.
\]

\subsection*{Upper bound}

Let us write $X_w:=F_w(X)$ for each $w\in\A^n$, $n\in\N$, and define $\Xm :=\bigcup_{w\in\Am}X_w$ and $I_m:=X\setminus \Xm $ for each $m\in\N$. 

On the one hand, we consider the pair $(\eng_{I_m},\Dom\eng_{I_m})$ given by
\begin{equation}\label{def eng_I_m}
\left\{\begin{array}{l}
\Dom\eng_{I_m}:= \bigoplus\limits_{e\in J_m}H^1(I_e,\mu_{\vert_{I_e}}),\\
\eng_{I_m}(u):=\sum\limits_{e\in J_m}\int_{I_e} ((u\circ\phi_e)')^2 dx,
\end{array}\right.
\end{equation}
which is a Dirichlet form on an $L^2$ space that can be identified with $\bigoplus\limits_{e\in J_m}L^2(I_e,\mu_{\vert_{I_e}})$.

On the other hand, we consider the Dirichlet form $(\eng_{\Xm },\Dom\eng_{\Xm })$ in $L^2(\Xm ,\mu_{\vert_{\Xm }})$ constructed following Section~\ref{sect HanoiAtts} and Section~\ref{resistance}, substituting $X$ by $\Xm $.

\begin{lemma}\label{lemma: Neum decomp}
For each $m\in\N$
\[
N_N(x)\leq N(x;\eng_{\Xm },\Dom\eng_{\Xm })+N(x;\eng_{I_m},\Dom\eng_{I_m})\qquad\forall\, x\geq 0.
\]
\end{lemma}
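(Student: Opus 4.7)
The plan is to use the standard Dirichlet–Neumann bracketing argument based on the min--max principle. The key observation is that, since the junction points where $\Xm$ meets $I_m$ form a set of $\mu$-measure zero, we have the measurable decomposition
\[
L^2(X,\mu) \;\cong\; L^2(\Xm,\mu|_{\Xm}) \oplus L^2(I_m,\mu|_{I_m}),
\]
and consequently we can form the Dirichlet form $\widetilde{\eng}_m := \eng_{\Xm}\oplus\eng_{I_m}$ on this decomposition, whose domain $\Dom\widetilde{\eng}_m = \Dom\eng_{\Xm}\oplus\Dom\eng_{I_m}$ consists of pairs of functions with no continuity constraint imposed across the boundary points $V_m := \Xm\cap I_m$.

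First I would verify that the inclusion $\Dom\eng \hookrightarrow \Dom\widetilde{\eng}_m$ holds via the restriction map $u\mapsto (u|_{\Xm},u|_{I_m})$, and that on this subspace the two forms agree: $\widetilde{\eng}_m(u,u) = \eng(u,u)$. The containment is immediate because any $u\in H^1(X)$ restricts to an element of $\Dom\eng_{\Xm}$ (this is essentially the self-similar scaling from Lemma~\ref{lem-r} applied cell by cell) and to an element of $\bigoplus_{e\in J_m} H^1(I_e)$; equality of the forms follows directly from the definition of $\eng$ on line segments and the construction of $\eng_{\Xm}$ in Section~\ref{resistance} applied to $\Xm$.

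Next I would apply the min--max characterization of eigenvalues,
\[
\lambda_k(\eng) \;=\; \inf_{\substack{S\subset\Dom\eng \\ \dim S = k}}\;\sup_{u\in S\setminus\{0\}} \frac{\eng(u,u)}{\|u\|_{L^2(X,\mu)}^2},
\]
and the analogous formula for $\widetilde\eng_m$. Since every $k$-dimensional subspace of $\Dom\eng$ yields a $k$-dimensional subspace of $\Dom\widetilde{\eng}_m$ on which the Rayleigh quotients coincide, we obtain $\lambda_k(\widetilde\eng_m)\leq \lambda_k(\eng)$ for every $k\in\mathbb{N}$, which translates into
\[
N(x;\eng,\Dom\eng) \;\leq\; N(x;\widetilde\eng_m,\Dom\widetilde\eng_m)\qquad \forall\,x\geq 0.
\]
Finally, because $\widetilde\eng_m$ is an orthogonal direct sum of two closed forms on orthogonal subspaces, its spectrum is the union (with multiplicity) of the spectra of the two summands, so
\[
N(x;\widetilde\eng_m,\Dom\widetilde\eng_m) \;=\; N(x;\eng_{\Xm},\Dom\eng_{\Xm}) + N(x;\eng_{I_m},\Dom\eng_{I_m}),
\]
and combining with $N_N(x)=N(x;\eng,\Dom\eng)$ gives the claim.

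The only genuinely delicate point is justifying that $\Dom\eng$ really does embed into $\Dom\widetilde\eng_m$, i.e.\ that restriction of an $H^1(X)$ function to $\Xm$ lies in $\Dom\eng_{\Xm}$. This is where one must invoke the scaling identity of Corollary~\ref{corollary: energy identity} and the characterization of $\Dom\eng_{\Xm}$ via its own approximating quantum graphs, since $\Dom\eng_{\Xm}$ is \emph{not} literally defined as the restriction of $\Dom\eng$; however the equivalent description as the closure of $\bigcup_n\F_n^{\Xm}$ under the analogue of the $\eng$-Cauchy/uniform convergence (cf.\ Subsection~\ref{subsubsec:resistanceform}) makes this straightforward once one notes that finite-energy approximants of $u$ on $X$ restrict to finite-energy approximants on each $F_w(X)$.
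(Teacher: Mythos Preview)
Your argument is correct and follows essentially the same route as the paper: both use the inclusion $\Dom\eng\subseteq\Dom\eng_{\Xm}\oplus\Dom\eng_{I_m}$, apply the min--max principle to obtain the counting function inequality, and then split the direct-sum form into its two summands. The paper is simply terser, citing \cite[Proposition 4.2 and Lemma 4.2]{Lap91} for the min--max and direct-sum steps that you spell out explicitly, and it does not dwell on the domain-restriction point you flag at the end.
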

\begin{proof}
Since $\Dom\eng\subseteq\Dom\eng_{\Xm }\oplus\Dom\eng_{I_m}$, the minimax principle yields 
\[
N_N(x)=N(x;\eng,\Dom\eng)\leq N(x;\eng,\Dom\eng_{\Xm }\oplus\Dom\eng_{I_m}).
\] 
The assertion now follows from~\cite[Proposition 4.2]{Lap91} and~\cite[Lemma 4.2]{Lap91}. Note that in this proof we first consider $\eng_{\Xm }$ and $\eng_{I_m}$ as bilinear forms in $L^2(X,\mu)$ and afterwards each of them is considered on $L^2(\Xm ,\mu_{\vert_{\Xm }})$ and $L^2(I_m,\mu_{\vert_{I_m}})$ respectively.
\end{proof}

\begin{lemma}\label{lemma: upper bound part I}
For each $m\in\N$ and each $L$ subspace of $\Dom\eng_{\Xm }$, define
\begin{align*}
\lambda(L)&:=\sup\{\eng_{\Xm }(u,u)~\colon~u\in L,\;\norm{u}_{L^2(\Xm ,\mu_{\vert_{\Xm }})}=1\},\\
\lambda_n &:=\inf\{\lambda(L)~\colon~L\subseteq\Dom\eng_{\Xm },\,\dim L=n\}.
\end{align*}
Then, it holds that
\[\lambda_{3^m+1}\geq C_P(rs)^{-m}.\]
\end{lemma}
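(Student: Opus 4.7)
The strategy rests on the decomposition of $\Xm$ as a \emph{disjoint} union of the $m$-level cells. Since the cells $X_w = F_w(X)$ for $w\in\A^m$ meet each other only through the joining edges in $J_m$, and all those edges lie in $I_m = X\setminus\Xm$, the set $\Xm$ is the disjoint union $\bigsqcup_{w\in\A^m} X_w$. Consequently both the Hilbert space and the bilinear form split as direct sums:
\[
L^2(\Xm,\mu|_{\Xm}) = \bigoplus_{w\in\A^m} L^2(X_w,\mu|_{X_w}), \qquad \eng_{\Xm}(u,u) = \sum_{w\in\A^m}\eng_{X_w}(u|_{X_w},u|_{X_w}),
\]
where $\eng_{X_w}$ denotes the natural (Neumann) restriction of $\eng$ to $X_w$.

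The next step is a similarity computation. By iterating Lemma~\ref{lem-r} (i.e.\ Corollary~\ref{corollary: energy identity}) one sees that the contribution of the cell $X_w$ to $\eng$ equals $r^{-m}\eng(u\circ F_w, u\circ F_w)$, so the pullback $F_w^*\colon L^2(X,\mu)\to L^2(X_w,\mu|_{X_w})$ satisfies
\[
\eng_{X_w}(u,u) = r^{-m}\,\eng(u\circ F_w,\,u\circ F_w), \qquad \|u\|_{L^2(X_w,\mu|_{X_w})}^2 = s^m\,\|u\circ F_w\|_{L^2(X,\mu)}^2,
\]
using the weak self-similarity $\mu(F_w(\cdot)) = s^m\mu(\cdot)$. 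Hence the Rayleigh quotient on $X_w$ is exactly $(rs)^{-m}$ times the Rayleigh quotient on $X$, and $(\eng_{X_w},\Dom\eng_{X_w})$ is unitarily equivalent (up to the rescaling $(rs)^{-m}$) to $(\eng,\Dom\eng)$ on $L^2(X,\mu)$.

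From here the conclusion is immediate. Let $C_P>0$ denote the smallest nonzero Neumann eigenvalue of $(\eng,\Dom\eng)$ on $L^2(X,\mu)$ (this is the Poincar\'e constant). Then each factor $\eng_{X_w}$ has the simple eigenvalue $0$, with eigenspace the constants on $X_w$, followed by eigenvalues bounded below by $(rs)^{-m}C_P$. Summing over the $3^m$ choices of $w$, the form $\eng_{\Xm}$ has $3^m$ eigenvalues equal to zero (one per cell) and all remaining eigenvalues $\geq (rs)^{-m}C_P$. By the min-max characterisation this gives $\lambda_{3^m+1}\geq C_P(rs)^{-m}$, as required.

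\textbf{Main obstacle.} The only non-routine point is justifying $C_P>0$, i.e.\ the positivity of the spectral gap for $(\eng,\Dom\eng)$ on $L^2(X,\mu)$. This follows because $X$ is compact in the resistance metric (Theorem~\ref{lemR}) and $(\eng,\Dom\eng)$ is a resistance form with $\mu$ a finite regular measure, so the inclusion $\Dom\eng/(\text{constants})\hookrightarrow L^2(X,\mu)$ is compact, forcing the nonzero Neumann spectrum to be discrete and bounded away from zero. Everything else is a direct consequence of the cell decomposition and the $(rs)$-scaling.
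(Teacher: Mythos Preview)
Your proof is correct and follows essentially the same strategy as the paper: decompose $\Xm$ into the disjoint cells $X_w$, use the $(rs)^{-m}$ scaling of the Rayleigh quotient under $F_w$, and invoke the Poincar\'e inequality on $X$ (whose constant $C_P>0$ is justified exactly as you say, by compactness in the resistance metric). The paper packages the argument slightly differently---through a finite-dimensional reduction involving the subspace $L_0$ of cell-wise constants in the style of \cite{Kaj10}, rather than an explicit direct-sum spectral decomposition---but the underlying ideas are identical.
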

\begin{proof}
By Corollary~\ref{corollary: energy identity} and the definition of $\eng_{\Xm }$ we have that 
\begin{equation}\label{eq: self-type equality}
\eng_{\Xm }(u,u)=\sum_{w\in\Am}r^{-m}\eng(u\circ F_w,u\circ F_w)
\end{equation}
for all $u\in\Dom\eng_{\Xm }$. Note that all of the components of the above sum are positive.

We follow a similar argument as in~\cite[Lemma 4.5]{Kaj10}, which is included for completeness: consider $L_0:=\{\sum_{w\in\Am}a_w1_{X_w}~\colon~a_w\in\R\}$. This is a $3^m-$dimensional subspace of $\Dom\eng_{\Xm }$ such that $\eng_{\Xm }\vert_{L_0\times L_0}\equiv 0$. For a $(3^m+1)-$dimensional subspace $L\subseteq\Dom\eng_{\Xm }$, we consider the finite-dimensional subspace of $\Dom\eng_{\Xm }$ given by $\tilde{L}:=L_0+L$. The non-negative self-adjoint operator associated with $\eng\vert_{\tilde{L}\times\tilde{L}}$ may be expressed by a matrix $A$ whose $3^m+1-$th smallest eigenvalue is given by
\[\lambda_A:=\inf\{\lambda(L')~\colon~L'\subseteq \widetilde{L},\;\dim L'=3^m+1 \}.\]
Call $u_A$ the corresponding eigenfunction, renormalized so that $\int_{\Xm }u_A^2d\mu=1$. Since $(\eng,H^1(X))$ is a resistance form on $X$, the associated resistance metric $R$ is compatible with the original topology of $X$ by Theorem~\ref{lemR}, and $u_A$ is orthogonal to $L_0$, a uniform Poincar\'e inequality (see~\cite[Definition 4.2]{Kaj10} for the self-similar case) holds for $u_A$. This together with equality~\eqref{eq: self-type equality} leads to 
\begin{align*}
\lambda_A&=\lambda_{3^m+1}\geq \eng_{\Xm }(u_A,u_A) = 
\sum_{w\in\Am}r^{-m}\eng(u_A\circ F_w,u_A\circ F_w)\\
&\geq C_P\sum_{w\in\Am}\frac{r^{-m}}{\mu(X_w)}\int_{X_w} \abs{u_A}^2d\mu
=
 \frac{r^{-m}C_P}{\mu(X_w)}\sum_{w\in\Am}\int_{X_w} \abs{u_A}^2d\mu 
 = 
  \frac{C_P}{(rs)^{m}},
\end{align*}
where $$C_P\geq \frac1{\Diam\limits_R(X)}$$ 
is the constant of the  Poincar\'e inequality. 
Note that here $u_A$ is a function orthogonal to all locally constant functions on $\Xm $.
\end{proof}

\begin{lemma}\label{lemma: upper bound part II}
There exist a constant $\widetilde{C}>0$ and $x_0>0$ such that
\begin{itemize}
\item[(i)] if $0<rs<\frac{1}{9}$, then
\[N_N(x)\leq\widetilde{C} x^{1/2}+o(x^{1/2}),\]
\item[(ii)] if $rs=\frac{1}{9}$, then
\[N_N(x)\leq\widetilde{C}  x^{1/2}\log x,\]
\item[(iii)] if $\frac{1}{9}<rs<\frac{1}{6}$, then
\[N_N(x)\leq\widetilde{C} x^{\frac{\log 3}{-\log (rs)}}+o(x^{\frac{\log 3}{-\log (rs)}}),\]
\end{itemize}
for all $x\geq x_0$.
\end{lemma}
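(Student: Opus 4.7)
The plan is to apply the decomposition in Lemma~\ref{lemma: Neum decomp} with a carefully chosen $m=m(x)$, and then estimate the two summands separately. For the first summand, Lemma~\ref{lemma: upper bound part I} gives $\lambda_{3^m+1}\geq C_P(rs)^{-m}$, so in particular $N(x;\eng_{\Xm },\Dom\eng_{\Xm })\leq 3^m$ whenever $C_P(rs)^{-m}>x$. I would therefore take $m(x)$ to be the smallest integer with $C_P(rs)^{-m(x)}>x$; this forces $m(x)\sim \log x/(-\log rs)$ and hence
\[
3^{m(x)} \leq 3\,(x/C_P)^{\log 3/(-\log rs)}.
\]

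For the second summand, $(\eng_{I_m},\Dom\eng_{I_m})$ is by \eqref{def eng_I_m} the orthogonal sum of classical Neumann forms on each interval $I_e$, $e\in J_m$. On an edge $e$ of generation $j$ (i.e.\ $e\in J_j\setminus J_{j-1}$) the length is $\alpha r^{j-1}$, and by the weak self-similarity of $\mu$ the restriction $\mu|_{I_e}$ is a constant multiple of Lebesgue with total mass $\beta s^{j-1}$. A direct computation of the Neumann eigenvalues of $-d^2/dx^2$ on this weighted interval gives $\lambda_n(I_e)=n^2\pi^2/[\alpha\beta(rs)^{j-1}]$, and hence the standard one-dimensional Weyl bound yields
\[
N(x;\eng_{I_e},\Dom\eng_{I_e})\;\leq\;\tfrac{1}{\pi}\sqrt{\alpha\beta(rs)^{j-1}\,x}\;+\;1.
\]
Since $|J_j\setminus J_{j-1}|=3^j$, summing over $j=1,\dots,m$ produces
\[
N(x;\eng_{I_m},\Dom\eng_{I_m})\;\leq\;\frac{3\sqrt{\alpha\beta\,x}}{\pi}\sum_{j=1}^{m}(3\sqrt{rs})^{\,j-1}\;+\;C\,3^m.
\]

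The three cases of the lemma correspond precisely to the three regimes of the geometric ratio $3\sqrt{rs}$: it is uniformly bounded in $m$ when $rs<1/9$, equal to $1$ when $rs=1/9$ (so the sum is $m$), and diverges as $(3\sqrt{rs})^m$ when $rs>1/9$. A short bookkeeping with the choice $m(x)\sim \log x/(-\log rs)$ shows that $\sqrt{x}\,(3\sqrt{rs})^{m(x)}$ is comparable with $x^{\log 3/(-\log rs)}$. Hence in case (i) the $\sqrt{x}$ contribution from the geometric sum dominates the $3^{m(x)}=O(x^{\log 3/(-\log rs)})=o(x^{1/2})$ term coming from $N(x;\eng_{\Xm })$; in case (ii) the logarithmic factor $m(x)\asymp \log x$ multiplies $\sqrt{x}$; and in case (iii) the geometric contribution and the $3^{m(x)}$ contribution are of the same order $x^{\log 3/(-\log rs)}$, and the error $C\,3^{m(x)}$ is absorbed into them. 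Summing these two estimates in Lemma~\ref{lemma: Neum decomp} gives the claimed upper bounds on $N_N(x)$.

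The main technical obstacle is the explicit identification of the Neumann spectrum on each weighted interval $I_e$: one needs the scaling $\mu|_{I_e}=(\beta s^{j-1}/(\alpha r^{j-1}))\,dx$ to write $\lambda_n(I_e)$ in terms of $(rs)^{j-1}$, as it is precisely this combination that produces $rs$ as the critical ratio and makes the threshold $1/9 = 1/3^2$ appear. The remaining work is the routine optimization of $m(x)$ and the trichotomy of the geometric series; choosing $m(x)$ so that $(rs)^{-m(x)}\approx x$ simultaneously controls both $N(x;\eng_{\Xm })$ via Lemma~\ref{lemma: upper bound part I} and the residual $O(3^{m(x)})$ term coming from the Weyl $+1$'s.
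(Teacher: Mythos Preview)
Your proposal is correct and follows essentially the same approach as the paper: decompose via Lemma~\ref{lemma: Neum decomp}, use Lemma~\ref{lemma: upper bound part I} to bound the cell part by $3^{m}$, compute the Neumann spectrum on each weighted interval by the scaling $\mu|_{I_e}=(\beta s^{j-1}/\alpha r^{j-1})\,dx$, and then analyze the geometric series in $3\sqrt{rs}=(9rs)^{1/2}$ in the three regimes. Your choice of $m(x)$ as the least integer with $C_P(rs)^{-m}>x$ is equivalent to the paper's choice of $m$ via the inequality $\frac{4\pi^2}{\alpha\beta(sr)^{m-4}}\leq x<\frac{4\pi^2}{\alpha\beta(sr)^{m-3}}$, and your explicit handling of the Weyl remainder as $\sum_{j\leq m}3^{j}=O(3^{m})$ is in fact cleaner than the paper's somewhat informal $O(1)$ bookkeeping.
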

\begin{proof}
Let $x_0:=4\pi^2s^3r^3$. For any $x>x_0$ we can find $m\geq 1$ such that
\begin{equation}\label{eq: 1.choice of m}
\frac{4\pi^2}{\alpha\beta(sr)^{m-4}}\leq x<\frac{4\pi^2}{\alpha\beta(sr)^{m-3}}.
\end{equation}
By Lemma~\ref{lemma: upper bound part I} we know that 
\[\lambda_{3^m+1}\geq \frac{C_P}{(sr)^{m}}\geq x\]
and hence 
\begin{equation}\label{eq: asym X_m}
N(x;\eng_{\Xm },\Dom\eng_{\Xm })\leq 3^m\leq C_1x^{\frac{\ln 3}{-\ln (rs)}}
\end{equation}
for $C_1=3^4\left(\frac{\alpha\beta}{4\pi^2}\right)^{\frac{-\ln 3}{\ln (rs)}}$.

On the other hand, since $I_m$ is the disjoint union of $1-$dimensional intervals, $$N(x;\eng_{I_m},\Dom\eng_{I_m})=\sum_{e\in J_m}N_e(x),$$ where $N_e(x)$ denotes the eigenvalue counting function of the Laplacian on $L^2(I_e,\mu_{\vert_{I_e}})$, that we denote by $\Delta_{\mu_{\vert_{I_e}}}$.

Without loss of generality, let us consider $I_e=[0,r^k\alpha]$ and suppose that $\lambda$ is an eigenvalue of the Laplacian $\Delta_{\mu_{\vert_{I_e}}}$ with eigenfunction $f\in H^1(I_e,\mu_{\vert_{I_e}})$. Then,
\begin{equation*}
\int_0^{r^k\alpha} f'g'\,dx=\lambda\int_0^{r^k\alpha}fg\,d\mu=\lambda\frac{\mu(I_e)}{m(I_e)}\int_0^{r^k\alpha}fg\,dx
\end{equation*}
for all $g\in H^1(I_e)$, where $m(I_e)$ denotes the Lebesgue measure of $I_e$. This means, $\lambda\frac{\mu(I_e)}{m(I_e)}$ is an eigenvalue of the classical Laplacian $\Delta$ on $L^2(I_e,dx)$ subject to Neumann boundary conditions. The converse holds by the same calculation, so we can say that $N_e(x)=N_N^{I_e}\left(\frac{x\mu(I_e)}{m(I_e)}\right)$ for all $x\geq 0$. Here $N^{I_e}_N(\cdot)$ denotes the eigenvalue counting function of the classical Laplacian on $L^2(I_e,dx)$ subject to Neumann boundary conditions.

From Weyl's theorem for the asymptotics of the eigenvalue counting function for the classical Laplacian on bounded sets of $\R$ (see~\cite{Wey12}), we know that
\[
N_e(x)=\frac{(\mu(I_e)m(I_e))^{1/2}}{2\pi}(x^{1/2}+O(1))=\frac{(\alpha\beta)^{1/2}(rs)^{m/2}}{2\pi}(x^{1/2}+O(1)),
\]
hence 
\begin{equation}\label{eq: J_m EVCF}
N(x;\eng_{I_m},\Dom\eng_{I_m}) = \sum_{n=1}^m\frac{(\alpha\beta)^{1/2}(9rs)^{n/2}}{2\pi}(x^{1/2}+O(1)),
\end{equation}
which is the counting function of the set
\[
\bigcup_{n=0}^m \set{\frac{(2\pi k)^2}{\alpha\beta (rs)^n}~|~k=1,2,\ldots}.
\] 
If $0<rs<1/9$, this expression is a convergent geometric series bounded by a constant so we get from~\eqref{eq: J_m EVCF} that
\[
N(x;\eng_{I_m},\Dom\eng_{I_m})=O(x^{\frac{1}{2}}).
\]
Since $\frac{\ln3}{-\ln rs}<\frac{1}{2}$, Lemma~\ref{lemma: Neum decomp} leads to $(i)$.

Now, note that~\eqref{eq: 1.choice of m} is equivalent to
\[
m\leq \frac{\log x}{-\log (rs)}+C<m+1,
\]
where $C=\frac{\log(\alpha\beta)-2\log(2\pi)}{-\log (rs)}$ and so we have that
\begin{equation*} 
\sum_{n=1}^m (9rs)^{n/2}\leq \int_0^{ \frac{\log x}{-\log (rs)}+C} (9rs)^{y/2}dy.
\end{equation*}
If $rs=\frac{1}{9}$, then $9rs=1$ and the integral becomes $\frac{\log x}{-\log (rs)}+C$. Moreover, $\frac{\ln 3}{-\ln (rs)}=\frac{1}{2}$, hence~\eqref{eq: asym X_m} and~\eqref{eq: J_m EVCF} lead to $(ii)$. Finally, if $\frac{1}{9}<rs<\frac{1}{6}$, we have that
\begin{align*}
\sum_{n=1}^{m-4}(9rs)^{\frac{n}{2}}&=\sum_{n=0}^{m-4}(9rs)^{\frac{n}{2}}-1=\frac{1-(9rs)^{\frac{m-3}{2}}}{1-9rs}-1=\frac{9rs-(9rs)^{\frac{m-3}{2}}}{1-9rs}\\
&=\frac{3(rs)^{1/2}}{9rs-1}((9rs)^{\frac{m-4}{2}}-3(rs)^{1/2})\\
&\leq\frac{3(rs)^{1/2}}{9rs-1}\left(x^{\frac{-\ln 9rs}{2\ln (rs)}}\cdot \left(\frac{\alpha\beta}{4\pi^2}\right)^{\frac{-\ln 9rs}{2\ln rs}}-3(rs)^{1/2}\right)
\end{align*}
and hence
\begin{align*}
N(x;\eng_{I_m},\dom\eng_{I_m})&\leq
\frac{\left(\frac{4\pi^2 rs}{3^6\alpha\beta}\right)^{1/2}}{9rs-1}C_1x^{\frac{\ln 3}{-\ln (rs)}}+O(x^{1/2})
\end{align*}
Lemma~\ref{lemma: Neum decomp} finally leads to
\[
N(x,\eng,\dom\eng)\leq \widetilde{C} x^{\frac{\log 3}{-\log (rs)}}+o(x^{\frac{\log 3}{-\log (rs)}}),
\]
with $\widetilde{C}=2\max\left\{1,\left(\frac{4\pi^2 rs}{3^6\alpha\beta}\right)^{\frac{1}{2}}\cdot\frac{1}{9rs-1}\right\}C_1$.
\end{proof}

\subsection*{Lower bound}

Recall that $(\eng^0,\Dom\eng^0)$ is the Dirichlet form whose associated non-negative self-adjoint operator is the Laplacian $\Delta_{\mu}$ subject to Dirichlet boundary conditions. Let us now write for each $w\in\A^n$, $n\in\N$, $X_w^0:=F_w(X\setminus V_0)$, and $\Xm ^0:=\bigcup_{w\in\Am} X_w^0$ for each $m\in\N$. Since $\Xm ^0$ is open, we know from~\cite[Theorem 4.4.3]{FOT11} that the pair $(\eng_{\Xm ^0},\Dom\eng_{\Xm ^0})$ given by
\begin{equation*}
\begin{cases}\begin{aligned}
\Dom\eng_{\Xm ^0}&:=\overline{\{u\in\Dom\eng\,\vert\;\supp(u)\subseteq \Xm ^0\}},\\
\eng_{\Xm ^0}&:=\eng\vert_{\Dom\eng_{\Xm ^0}\times\Dom\eng_{\Xm ^0}},&
\end{aligned}
\end{cases}
\end{equation*}
where the closure is taken with respect to $\norm{\cdot}_{\eng^{(1)}}$, is a Dirichlet form on $L^2(\Xm ^0,\mu_{\vert_{\Xm ^0}})$. Analogously, we define for each $w\in\A^n$, $n\in\N$, the Dirichlet form $(\eng_{X_w^0},\Dom\eng_{X_w^0})$ on $L^2(X_w^0,\mu_{\vert_{X_w^0}})$. Moreover, we consider $(\eng_{I_m},\Dom\eng_{I_m}^0)$ where $\eng_{I_m}$ is defined as in~\eqref{def eng_I_m} and $\Dom\eng_{I_m}^0:=\bigoplus\limits_{e\in J_m}H^1_0(I_e,\mu_{\vert_{I_m}})$.

\begin{lemma}\label{lemma: Dirichlet decomp}
For each $m\in\N$ and $x\geq 0$,
\begin{align*}
N_D(x)&\geq
\sum_{w\in\Am} N(x;\eng_{X_w^0},\Dom\eng_{X_w^0})+N(x;\eng_{I_m},\Dom\eng_{I_m}^0).
\end{align*}
\end{lemma}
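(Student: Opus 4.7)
The plan is to argue by the minimax principle in the direction opposite to Lemma~\ref{lemma: Neum decomp}. The key observation is the subspace inclusion
\[
\Dom\eng_{\Xm^0}\oplus\Dom\eng_{I_m}^0\subseteq\Dom\eng^0,
\]
together with the fact that $\eng$ acts diagonally on this direct sum. Once this is in place, the result follows from standard arguments of Lapidus~\cite{Lap91}.

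First I would verify the inclusion. A function $u$ in $\Dom\eng_{\Xm^0}$ is by definition a limit (in the $\norm{\cdot}_{\eng^{(1)}}$-norm) of functions supported on the open set $\Xm^0$, so extending by zero onto $I_m$ gives an element of $\Dom\eng$ vanishing on $V_0\subset I_m$, hence an element of $\Dom\eng^0$. Similarly, a function $v\in\Dom\eng_{I_m}^0=\bigoplus_{e\in J_m}H_0^1(I_e,\mu|_{I_e})$ vanishes at both endpoints of every edge in $J_m$, so its extension by zero on $\Xm$ lies in $C(X)$ and, by the characterization of $\dom\eng=H^1(X)$ from Theorem~\ref{thm resistance form}, belongs to $\Dom\eng^0$. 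Thus the direct sum embeds in $\Dom\eng^0$, and for any $u\oplus v$ in the direct sum the supports of $u$ and $v$ meet only on a set of $\mu$-measure zero, so
\[
\eng(u+v,u+v)=\eng_{\Xm^0}(u,u)+\eng_{I_m}(v,v),\qquad \norm{u+v}_{L^2}^2=\norm{u}_{L^2}^2+\norm{v}_{L^2}^2.
\]

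Next I would apply the minimax principle: since a larger form domain can only decrease eigenvalues, we obtain
\[
N_D(x)=N(x;\eng^0,\Dom\eng^0)\geq N(x;\eng^0,\Dom\eng_{\Xm^0}\oplus\Dom\eng_{I_m}^0).
\]
Because $\eng^0$ is block-diagonal on this direct sum, its eigenvalues are exactly the union (with multiplicity) of the eigenvalues of $\eng_{\Xm^0}$ and those of $\eng_{I_m}$ restricted to $\Dom\eng_{I_m}^0$. This is precisely the content of~\cite[Lemma 4.2]{Lap91}, giving
\[
N(x;\eng^0,\Dom\eng_{\Xm^0}\oplus\Dom\eng_{I_m}^0)=N(x;\eng_{\Xm^0},\Dom\eng_{\Xm^0})+N(x;\eng_{I_m},\Dom\eng_{I_m}^0).
\]

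Finally I would decompose $\Xm^0=\bigsqcup_{w\in\Am}X_w^0$ as a disjoint union of open cells (pairwise disjoint because removing $V_0$ from each cell excises the shared vertices). Functions with support in $\Xm^0$ split as direct sums of functions supported on each $X_w^0$, and again $\eng_{\Xm^0}$ and the $L^2$-inner product are block-diagonal with respect to this decomposition. Applying~\cite[Lemma 4.2]{Lap91} once more yields
\[
N(x;\eng_{\Xm^0},\Dom\eng_{\Xm^0})=\sum_{w\in\Am}N(x;\eng_{X_w^0},\Dom\eng_{X_w^0}),
\]
which combined with the previous inequality gives the claim. The only subtle point I anticipate is the first step, namely checking that extensions by zero truly belong to $\Dom\eng^0$; this relies on the continuity and vanishing at joining-vertices built into the respective Dirichlet-type domains, but otherwise the argument is purely structural.
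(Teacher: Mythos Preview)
Your argument is correct and is exactly the standard Dirichlet--Neumann bracketing that the paper invokes by citing \cite[Lemma 4.8]{Kaj10}; the paper gives no further details, so your write-up in fact supplies what the paper omits. One small correction: the justification ``$V_0\subset I_m$'' is false, since each corner $p_i$ is the fixed point of $F_i$ and hence lies in $F_{i^m}(X)\subset\Xm$, not in $I_m$. The right reason extensions by zero land in $\Dom\eng^0$ is that $p_i\in F_{i^m}(V_0)$, so $p_i\notin X_{i^m}^0\supset\operatorname{supp}u$; similarly, in the Hanoi attractor the level-$m$ cells $F_w(X)$ are already pairwise disjoint (they are joined only through the edges in $J_m$), so removing $F_w(V_0)$ is not what makes the $X_w^0$ disjoint. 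These are cosmetic fixes; the structure of the proof is sound.
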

\begin{proof}
See~\cite[Lemma 4.8]{Kaj10}.
\end{proof}

\begin{lemma}\label{lemma: lower bound part I}
For any $m\in\N$ there exists $C_D>0$ such that
\[\lambda_1(X_w^0):=\inf_{\substack{u\in\Dom\eng_{X_w^0}\\u\neq 0}}\frac{\eng_{X_w^0}(u,u)}{\norm{u}^2_{L^2(X_w^0,\mu_{\vert_{X_w^0}})}}\leq \frac{C_D}{(sr)^m}\]
for all $w\in\Am$.
\end{lemma}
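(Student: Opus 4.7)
The plan is a direct scaling argument: the weak self-similarity of $(X,\eng,\mu)$ makes $X_w^0$ isomorphic to $X\setminus V_0$ up to explicit energy and mass factors, so a single test function on $X$ pulls back to test functions on every cell, all with the same Rayleigh quotient (modulo the scaling factor $(sr)^{-m}$).

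First I would fix, once and for all, a nonzero function $v_0\in\Dom\eng^0$; a concrete choice is the tent function supported on a single joining segment $\Phi_e(I_e)$ with $e\in J_1$, equal to $1$ at its midpoint and linear to $0$ at the endpoints. This $v_0$ vanishes on $V_0$ (trivially, since its support is disjoint from $V_0$) and has $\eng(v_0,v_0)<\infty$ and $\|v_0\|_{L^2(X,\mu)}>0$ by construction.

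Next, for each $w\in\Am$ set
\[
u_w := v_0\circ F_w^{-1}\quad\text{on }X_w,\qquad u_w:=0\quad\text{on }X\setminus X_w.
\]
Because $v_0$ vanishes on $V_0$, $u_w$ vanishes on $F_w(V_0)=\partial X_w$, so $u_w$ is continuous on $X$ and lies in $\Dom\eng_{X_w^0}$. Iterating the scaling relation \eqref{e-Escaling} $m$ times (equivalently, applying Corollary~\ref{corollary: energy identity} to $u_w$, whose restriction outside $X_w$ is zero and contributes no boundary terms) gives
\[
\eng_{X_w^0}(u_w,u_w)=r^{-m}\eng(v_0,v_0),
\]
while the defining scaling $\mu(F_w(A))=s^m\mu(A)$ of $\mu$ yields, by the change-of-variables formula,
\[
\|u_w\|^2_{L^2(X_w^0,\mu)}=s^m\|v_0\|^2_{L^2(X,\mu)}.
\]

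Combining the two identities,
\[
\lambda_1(X_w^0)\leq\frac{\eng_{X_w^0}(u_w,u_w)}{\|u_w\|^2_{L^2(X_w^0,\mu)}}
=\frac{1}{(sr)^m}\cdot\frac{\eng(v_0,v_0)}{\|v_0\|^2_{L^2(X,\mu)}},
\]
so taking $C_D:=\eng(v_0,v_0)/\|v_0\|^2_{L^2(X,\mu)}$ (which is independent of $w$) proves the lemma. The only nontrivial point — the main thing to verify carefully — is that $u_w$ indeed belongs to $\Dom\eng_{X_w^0}$ and that the energy scaling applies to it. This reduces to the fact that $\Dom\eng_{X_w^0}$ is the analogue of $\Dom\eng^0$ built by the same procedure as in Sections~\ref{sect HanoiAtts}--\ref{resistance} but with $X$ replaced by $X_w$, so that the self-similar identity \eqref{e-Escaling} applied in the first cell $X_w$ exactly produces the factor $r^{-m}$ for functions supported there, while boundary terms on segments in $J_1,\dots,J_m$ outside $X_w$ vanish since $u_w\equiv 0$ there.
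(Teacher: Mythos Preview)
Your argument is correct and follows essentially the same route as the paper: fix once and for all a nonzero $v_0\in\Dom\eng^0$, push it forward to each cell via $v_0\circ F_w^{-1}$, and read off the energy and $L^2$ scaling from Corollary~\ref{corollary: energy identity} and the definition of $\mu$. The only cosmetic difference is the choice of test function---the paper takes $u$ with $u|_{X_\nu}\equiv 1$ on some fixed subcell $X_\nu\subset X^0$ and then bounds the $L^2$ norm from below by $\mu(F_w(X_\nu))$, whereas your tent function on a single edge in $J_1$ gives the exact identity $\|u_w\|_{L^2}^2=s^m\|v_0\|_{L^2}^2$; either works, and your version is in fact slightly cleaner.
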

\begin{proof}
Consider $\nu\in\A^n$, $n\in\N$, such that $X_{\nu}\subseteq X^0$. Since $X^0$ is open and $X_{\nu}$ compact, we know that there exists a function $u\in\Dom\eng^0_{X}$ such that $u_{\vert_{X_{\nu}}}\equiv 1$, $u\geq 0$ and $\supp (u)\subseteq X^0$. 

Define
\begin{equation*}
u^w(x):=\left\{\begin{array}{rl} u\circ F_w^{-1}(x),& x\in X_w^0,\\
												0,&x\in \Xm ^0\setminus X_w^0.\end{array}\right.
\end{equation*}
Clearly $u^w\in\Dom\eng_{\Xm ^0}$ and analogously to the proof of Lemma~\ref{lemma: upper bound part I} we have by Corollary~\ref{corollary: energy identity} that
\begin{equation*}
\eng_{\Xm ^0}(u^w,u^w)=\sum_{w'\in\Am}r^{-m}\eng(u^w\circ F_{w'},u^w\circ F_{w'})+\sum_{k=0}^{m-1}r^{-k}\sum_{w'\in\A^k}\sum_{e\in J_1}\int_0^{\alpha}\abs{(u^w\circ F_{w'})'}^2dx.
\end{equation*}
Since $\supp (u^w)\subseteq \Xm ^0$, the last term of this sum equals zero and the definition of $u^w$ leads to
\begin{equation}\label{eq: self-type equality II}
\eng_{\Xm }(u^w,u^w)=r^{-m}\eng(u^w\circ F_w,u^w\circ F_w)=r^{-m}\eng(u,u).
\end{equation}

On the other hand, by definition of $\mu$ we have that
\begin{align}\label{eq: ellipticity}
\int_{X_w^0}\abs{u^w}^2d\mu (x)&=\int_{\Xm ^0}\abs{u}^2d\mu(F_w (y))\geq\int_{X_{\nu}}\abs{u}^2d\mu(F_w(y))\nonumber\\
&=\mu(F_w(X_{\nu}))\geq s^{\abs{\nu}}\mu(X_w)
\end{align}
Applying~\eqref{eq: self-type equality II} and~\eqref{eq: ellipticity} we finally obtain
\begin{align*}
\lambda_1(X_w^0)&\leq \frac{\eng_{\Xm ^0}(u^w,u^w)}{\norm{u^w}^2_{L^2(X_w^0,\mu_{\vert_{\Xm ^0}})}}\leq\frac{r^{-m}\eng(u,u)}{s^{\abs{\nu}}\mu(X_w)}=\frac{C_D}{(rs)^m},
\end{align*}
where $C_D:=\frac{\eng(u,u)}{r^{\abs{\nu}}}$ is independent of $w$.
\end{proof}

\begin{lemma}\label{lemma: lower bound part II}
There exists a constant $C'>0$ and $x_0>0$ such that
\begin{itemize}
\item[(i)] if $0<rs<\frac{1}{9}$, then
\[C' x^{\frac{1}{2}}\leq N_D(x),\]
with $C'=\widetilde{C}$ of Lemma~\ref{lemma: upper bound part II},
\item[(ii)] if $rs=\frac{1}{9}$, then
\[C'  x^{\frac{1}{2}}\log x\leq N_D(x),\]
\item[(iii)] if $\frac{1}{9}<rs<\frac{1}{6}$, then
\[C' x^{\frac{\log 3}{-\log (rs)}}\leq N_D(x)\]
\end{itemize}
for all $x\geq x_0$.
\end{lemma}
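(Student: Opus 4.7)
The plan is to mirror the upper-bound argument of Lemma~\ref{lemma: upper bound part II}, now invoking Lemma~\ref{lemma: Dirichlet decomp} in place of Lemma~\ref{lemma: Neum decomp} and Lemma~\ref{lemma: lower bound part I} in place of Lemma~\ref{lemma: upper bound part I}. Given $x > x_0$, I would choose $m = m(x) \in \mathbb{N}$ such that $C_D (rs)^{-m} \leq x < C_D (rs)^{-(m-1)}$, with $C_D$ the constant from Lemma~\ref{lemma: lower bound part I}; this makes $3^m$ of order $x^{\log 3 / (-\log rs)}$. Lemma~\ref{lemma: lower bound part I} then forces $\lambda_1(X_w^0) \leq x$ for every $w \in \A^m$, so that $N(x; \eng_{X_w^0}, \Dom \eng_{X_w^0}) \geq 1$, and summing over the $3^m$ cells yields the cell lower bound $\sum_{w\in\A^m} N(x;\eng_{X_w^0},\Dom\eng_{X_w^0}) \geq 3^m$.

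Next I would bound the interval contribution $N(x;\eng_{I_m},\Dom\eng_{I_m}^0)$. Each $e \in J_n \setminus J_{n-1}$ is a one-dimensional interval of Lebesgue length $\alpha r^{n-1}$ carrying $\mu$-mass $\beta s^{n-1}$. As in the proof of Lemma~\ref{lemma: upper bound part II}, the rescaling $\lambda \mapsto \lambda\cdot \mu(I_e)/m(I_e)$ identifies the Dirichlet spectrum of $\Delta_{\mu|_{I_e}}$ with the classical Dirichlet Laplacian on $I_e$, so its eigenvalues are $k^2\pi^2/(\alpha\beta(rs)^{n-1})$ and
\[
N^D_e(x) \;\geq\; \frac{(\alpha\beta)^{1/2}(rs)^{(n-1)/2}}{\pi}\,x^{1/2} - 1.
\]
Since $|J_n \setminus J_{n-1}| = 3^n$, summing over $n = 1,\dots,m$ yields
\[
N(x;\eng_{I_m},\Dom\eng_{I_m}^0) \;\geq\; c_1 \sum_{n=1}^m (9rs)^{n/2}\,x^{1/2} \;-\; O(3^m).
\]
Substituting both estimates into Lemma~\ref{lemma: Dirichlet decomp}, the three regimes are handled exactly as for the upper bound. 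If $0<rs<1/9$ the geometric series converges and the interval term dominates, producing $C'x^{1/2}$ while the cell contribution and the error are both $O(x^{\log 3/(-\log rs)}) = o(x^{1/2})$. If $rs = 1/9$ the series reduces to $\sum_{n=1}^m 1 = m \asymp \log x$, yielding $C'x^{1/2}\log x$ with the cell and error contributions of the subordinate order $x^{1/2}$. If $1/9<rs<1/6$ then $\log 3/(-\log rs) > 1/2$ and the cell term $3^m$ alone already delivers the required $C'x^{\log 3/(-\log rs)}$.

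The main technical obstacle is controlling the accumulated Weyl error from the individual interval estimates: the $-1$ per edge, summed over the $3^n$ edges at each level $n$, produces an error of total size $O(3^m)$, which matches the cell lower bound in case (iii) and could in principle contaminate the delicate cases (i) and (ii). The prescribed choice of $m$ is tight enough to keep $3^m$ of order exactly $x^{\log 3/(-\log rs)}$, which is $o(x^{1/2})$ in case (i) and $o(x^{1/2}\log x)$ in case (ii), so that this error is safely absorbed. All that is needed for the Weyl lower bound to be usable on every level is that $x_0$ be chosen large enough that the leading $x^{1/2}$-term dominates the $-1$ correction on even the shortest level-$m$ interval, i.e. $x \geq \pi^2/(\alpha\beta (rs)^{m-1})$, which is compatible with the above range of $m$ by a suitable enlargement of $x_0$.
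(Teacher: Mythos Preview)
Your proposal is correct and follows essentially the same route as the paper's proof: invoke Lemma~\ref{lemma: Dirichlet decomp}, use Lemma~\ref{lemma: lower bound part I} to guarantee at least one eigenvalue per cell (giving the $3^m$ contribution), and estimate the interval piece via the explicit Dirichlet spectrum on each edge, then split into the three geometric-series regimes exactly as in the upper bound. The only cosmetic difference is your normalization for $m$ (you pin it via $C_D$, the paper via $4\pi^2/(\alpha\beta)$ with a shift of a few levels), and you are in fact more careful than the paper in tracking the accumulated $-1$ Weyl errors, which total $O(3^m)$ and are indeed harmless in cases (i)--(ii) and absorbed into the cell term in case (iii). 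One small wording issue: the compatibility you need at the end (that $x \geq \pi^2/(\alpha\beta(rs)^{m-1})$ holds simultaneously with $C_D(rs)^{-m}\leq x$) is a relation between constants, not something fixed by enlarging $x_0$ alone; it is handled by shifting $m$ by a bounded number of levels, which is precisely what the paper does implicitly with its ``$m-4$'' offset.
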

\begin{proof}
Analogously to the proof of Lemma~\ref{lemma: upper bound part II}, let $x_0:=4\pi^2s^2r^2$ and consider $x\geq x_0$. There exists $m\geq 1$ such that
\begin{equation*}\label{eq: 2.choice of m}
\frac{4\pi^2}{\alpha\beta(sr)^{m-4}}\leq x<\frac{4\pi^2}{\alpha\beta(sr)^{m-3}}.
\end{equation*}
By Lemma~\ref{lemma: lower bound part I}, we have that 
\[
\lambda_1(X_w^0)\leq \frac{C_D}{(sr)^{m}}\leq x
\]
and hence $N(x;\eng_{X_w^0},\Dom\eng_{X_w^0})\geq 1$ for all $w\in\Am$. It follows from Lemma~\ref{lemma: Dirichlet decomp} that 
\[N_D(x)\geq C_2x^{\frac{\ln 3}{-\ln (rs)}}\]
with $C_2=3^3\left(\frac{\alpha\beta}{4\pi^2}\right)^{-\frac{\ln 3}{\ln(rs)}}=C_1/3$. 

Analogous arguments as in Lemma~\ref{lemma: upper bound part II} together with Lemma~\ref{lemma: Dirichlet decomp} complete the proof. In the case when $1/9<rs<1/6$, the estimation of the geometric series leads to 
\begin{align*}
N(x;\eng_{I_m},\dom\eng^0_{I_m})&\geq \frac{3^{-3}\left(\frac{\alpha\beta}{4\pi^2}\right)^{-1/2}}{9rs-1}C_2x^{\frac{-\ln 9rs}{2\ln rs}+\frac{1}{2}}+O(x^{1/2})
\end{align*}
and finally
\[
N(x,\eng,\dom\eng)\geq C'x^{\frac{-\ln 3}{\ln (rs)}}+O(x^{1/2}),
\]
with $C'=2\min\left\{1,\left(\frac{4\pi^2 }{3^6\alpha\beta}\right)^{\frac{1}{2}}\cdot\frac{1}{9rs-1}\right\}C_2$.
\end{proof}

\begin{proof}[\textbf{Proof of Theorem~\ref{theorem: d_S X}}]
Since $\Dom\eng^0\subseteq\Dom\eng$ and $\eng^0=\eng\vert_{\Dom\eng^0}$, the minimax principle yields $N_D(x)\leq N_N(x)$ for all $x>0$. The statement follows directly from Lemma~\ref{lemma: upper bound part II} and Lemma~\ref{lemma: lower bound part II}.
\end{proof}

\section{Heat Kernel Estimates}\label{sec HKEs}
 
In this section we shall assume that $1/3<\alpha < 1$ and that $\haus$ is the restriction of the Hausdorff 1-measure to $X$. under these assumptions, the heat kernel with respect to the Hausdorff $1$-measure satisfies Gaussian heat estimates. Note that the measure of a set with respect to $\haus$ is the sum of the lengths of the line segments contained in that set. Thus,
\[
\haus(X) = \sum_{k=1}^\infty 3^k\paren{ \frac{1-\alpha}{2}}^k = \frac{3(1-\alpha)}{3\alpha -1}.
\]
\begin{proposition}\label{meas regularity}
There is a positive constant $C$ such that for any $x\in X$ and $t \leq \diam X$
\[
\frac1C t \leq  \haus(B_t(x)) \leq Ct,
\]
where $\diam X$ is the diameter of $X$ and $B_t(x)$ is the metric ball around $x$ with respect to $R$, $d_G$ or the Euclidean distance.
\end{proposition}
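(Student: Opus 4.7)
The plan is to prove the two bounds separately, using whichever of the three mutually bi-Lipschitz equivalent metrics (Theorem~\ref{lemR}\,(3) and Remark~\ref{Eu bilip R}) is most convenient, and absorbing the comparison constants into $C$.

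\textbf{Lower bound.} I would work with the geodesic metric $d_G$ and use that $(X, d_G)$ is a compact length space. Since $d_G \ge |\cdot|$, we have $B_t^{d_G}(x) \subseteq B_t(x)$. For any $s$ with $0 < s < t \leq \diam X \leq \diam_{d_G} X$, by arcwise connectedness plus compactness there is an injective unit-speed geodesic $\gamma:[0,s]\to X$ with $\gamma(0)=x$. Every point of $\gamma$ lies within $d_G$-distance $s$ of $x$, so $\gamma([0,s])\subset B_t^{d_G}(x)\subset B_t(x)$. The image of $\gamma$ is a finite union of straight line segments in $X$ on which $\haus$ coincides with the length measure, hence $\haus(\gamma([0,s]))=s$. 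Letting $s\to t^-$ gives $\haus(B_t(x))\ge t$.

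\textbf{Upper bound.} I would work with the Euclidean ball and pick $n\in\N_0$ with $r^{n+1}<t\le r^n$, where $r=(1-\alpha)/2$. Using the disjoint decomposition from Section~4,
\[
X \;=\; \bigcup_{w\in\A^n} F_w(X)\,\cup\,\bigcup_{e\in J_n}\Phi_e(I_e^\circ),
\]
I would split
\[
\haus(B_t(x))\;=\;\sum_{\substack{|w|=n\\ F_w(X)\cap B_t(x)\ne\emptyset}}\!\!\haus\!\bigl(F_w(X)\cap B_t(x)\bigr)\;+\;\sum_{\substack{e\in J_n\\ \Phi_e(I_e^\circ)\cap B_t(x)\ne\emptyset}}\!\!\haus\!\bigl(\Phi_e(I_e^\circ)\cap B_t(x)\bigr).
\]
For the cell sum, similarity gives $\haus(F_w(X))=r^n\haus(X)$, and the cells have pairwise disjoint interiors in $\R^2$ with Euclidean diameter $\le r^n\diam X\lesssim t$, so a planar packing argument bounds the number of cells meeting $B_t(x)$ by some $N_1$ independent of $x$ and $t$, giving a contribution $\le N_1\haus(X)\,t/r$.

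For the joining-edge sum, each intersection is a chord of a disk of radius $t$ in a straight segment, hence has length $\le 2t$. The crucial claim is that the number of edges of $J_n$ meeting $B_t(x)$ is bounded by a constant $N_2$ independent of $x,t,n$. A level-$k$ joining edge is attached to the boundary vertices of two sibling level-$k$ sub-cells inside a common level-$(k-1)$ cell, so it sits at Euclidean distance $\gtrsim r^{k}$ from any $x\in X$ in a cell not incident to one of its two endpoints. Hence for $k$ with $r^k\gg t$ the only way such an edge can meet $B_t(x)$ is for $x$ to lie on it, and $x$ belongs to the interior of at most one joining edge. For the finitely many remaining levels $k$ with $r^k\lesssim t$, only boundedly many edges are adjacent to the cells containing $x$. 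Summing gives a uniform bound $N_2$, hence a contribution $\le 2N_2 t$. Putting the two contributions together yields $\haus(B_t(x))\le Ct$.

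The main obstacle is the geometric bookkeeping behind $N_2$: although $|J_n|$ grows like $3^n$, one has to argue that only $O(1)$ of these segments can come Euclidean-close to any prescribed point. This rests on the planar disjointness of the cell structure and on the fact that at each scale $r^k$ the joining edges are attached to the cells only at finitely many vertex points, so the cells containing $x$ are incident to a bounded number of joining edges, and edges of much larger scale than $t$ sit far from $x$ unless $x$ lies on them.
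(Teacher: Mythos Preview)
Your approach is essentially the same as the paper's: reduce to one metric via bi-Lipschitz equivalence, obtain the lower bound from a geodesic/length argument, and obtain the upper bound by bounding how many level-$n$ cells and how many $J_n$-edges a ball of radius $t\asymp r^n$ can meet. The paper works with the $d_G$-ball and, exploiting the hypothesis $\alpha>1/3$ (so that $r<\alpha$ and the level-$(n-1)$ joining edges are already longer than $t$), gets explicit constants: at most $3$ level-$n$ cells and at most $4$ segments outside them, yielding $\haus(B_t(x))\le 3r^n\haus(X)+4t$. Your Euclidean packing/separation argument for $N_1,N_2$ arrives at the same conclusion with unspecified constants; the $d_G$ route is a bit cleaner because path length directly controls how many joining edges can be traversed. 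One small remark on the lower bound: the claim that the geodesic image is a \emph{finite} union of straight segments is neither obvious nor needed --- since a $d_G$-geodesic is an isometric embedding of $[0,s]$, its Hausdorff $1$-measure is exactly $s$ without further decomposition.
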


\begin{proof}
By Theorem \ref{lemR}~\eqref{i3}, all three metrics are equivalent, so proving the inequality for any of them proves it for all of them.
Take $B = B_t(x)$ to be the ball with respect to $d_G$ --- the geodesic metric. $\haus(B) \geq 2t$ for $t < \operatorname{diam} X$ because $\haus$ measures lengths.

Assuming $n \in \N$ is such that $r^n \leq t \leq r^{n-1}$, $B_t(x)$ intersects at most $3$ cells of scale $r^n$, i.e. 
\[
\#\set{w~|~F_w(X)\cap B_t(x)\neq \emptyset, |w| = n} \leq 3,
\]
and it intersects at most $4$ line segments not contained in these cells. Thus,
\[
\haus(B_t(x))) \leq 3r^n \mu(X) + 4t \leq (4 \mu(X) + 3)t.
\]
\end{proof}

\begin{theorem}\label{thm:H1 HKEs}
If $1/3 <  \alpha < 1$, then $(\eng,\dom\eng)$ is a Dirichlet form on $L^2(X,\haus)$, where $\haus$ is the Hausdorff 1-measure, and this Dirichlet form has a jointly continuous heat kernel $p(t,x,y)$.  If $d$ is either $d_G$ or the Euclidean distance, there are $c_1,c_2,c_3,$ and $c_4$ depending only on the choice of the metric so that $p$ satisfies the following Gaussian estimates
\[
{c_1}{t^{-1/2}} \exp\paren{-\frac{c_2d(x,y)^2}{t}} \leq p(t,x,y) \leq {c_3}{t^{-1/2}} \exp\paren{-\frac{c_4d(x,y)^2}{t}}.
\]
\end{theorem}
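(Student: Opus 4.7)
The plan has three stages: first, promote the resistance form $(\eng, \dom\eng)$ to a regular Dirichlet form on $L^2(X, \haus)$ via Kigami's general framework; second, verify the metric--measure--resistance scalings that fix the walk dimension at $d_w=2$; third, invoke a standard sub-Gaussian heat kernel estimate theorem for resistance forms, which in this regime degenerates to a Gaussian estimate.

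For the first stage, Theorem~\ref{lemR} shows that the resistance metric $R$ is bi-Lipschitz equivalent to the Euclidean metric, so $(X,R)$ is a compact separable metric space. Under the hypothesis $\alpha>1/3$ we have $\haus(X)<\infty$, and Proposition~\ref{meas regularity} shows that $\haus$ assigns positive mass to every metric ball, hence has full support. With these inputs, Kigami's correspondence between resistance forms and Dirichlet forms on compact resistance spaces (see [Kig12, Cor.~6.4]) implies that $(\eng, \dom\eng)$ is a closed Markovian regular Dirichlet form on $L^2(X, \haus)$.

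For the second stage, Ahlfors $1$-regularity of $\haus$ is exactly Proposition~\ref{meas regularity}. The key additional ingredient is the effective resistance from a point to the complement of a metric ball, namely
\[
R(x, B_t(x)^c) \asymp t, \qquad x\in X,\ 0<t\leq \diam X.
\]
The upper bound follows from an explicit test function equal to $1$ at $x$ that decays linearly to $0$ along each line segment exiting $B_t(x)$ and is extended by $0$ elsewhere, entirely analogous to the construction used in the proof of Theorem~\ref{lemR}(3). The matching lower bound is a direct consequence of Lemma~\ref{engestimate}: for any admissible $u$ with $u(x)=1$ and $u\equiv 0$ outside $B_t(x)$ one has $1 \leq d_G(x,y)\,\eng(u,u)$ for the relevant $y$ on the boundary, so $\eng(u,u)\gtrsim 1/t$. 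Combining with Ahlfors regularity yields $\haus(B_t(x))\cdot R(x, B_t(x)^c) \asymp t^2$, which is precisely the statement that the walk dimension equals $d_w=2$.

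For the third stage, I would apply a general sub-Gaussian heat kernel theorem for resistance forms --- for instance, the one in [Kig12, Ch.~15], or the equivalent statements in the Kumagai/Grigor'yan--Hu--Saloff-Coste framework --- which under Ahlfors $d_f$-regularity and the resistance scaling established above yields joint continuity of $p(t,x,y)$ and two-sided sub-Gaussian estimates of walk dimension $d_w$. With $d_f=1$ and $d_w=2$ the sub-Gaussian estimate collapses to the Gaussian form asserted in the theorem; the bi-Lipschitz equivalence of Theorem~\ref{lemR}(3) then transfers the bounds from $R$ to $d_G$ and to the Euclidean distance, with the bi-Lipschitz constants absorbed into $c_2$ and $c_4$. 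The main obstacle I anticipate is the uniform verification of $R(x,B_t(x)^c)\asymp t$ at points near ``junction'' regions where infinitely many segments accumulate; here one must exploit the approximate self-similarity inherited from the maps $F_1,F_2,F_3$ to reduce the estimate to finitely many scale-invariant local configurations, after which compactness of $X$ closes the argument.
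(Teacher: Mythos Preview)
Your approach is essentially the paper's: both feed Ahlfors $1$-regularity (Proposition~\ref{meas regularity}) and the bi-Lipschitz equivalence of $R$, $d_G$, and Euclidean distance (Theorem~\ref{lemR}) into Kigami's heat-kernel machinery in \cite[Chapter~15]{Kig12}, then transfer between metrics via bi-Lipschitz equivalence.

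The paper's route is leaner in two respects. First, instead of verifying $R(x,B_t(x)^c)\asymp t$ by hand, it simply records that $d_G$ is a \emph{geodesic} metric; together with the bi-Lipschitz comparison to $R$ and the volume regularity, this is exactly the input package for \cite[Theorem~15.10]{Kig12}, and it sidesteps the junction-point ``obstacle'' you anticipate entirely. Second, the paper explicitly notes that the (ACC) condition holds by \cite[Proposition~7.6]{Kig12}, since $\eng$ is local and $X$ is compact; you do not mention this hypothesis, and it is needed for Kigami's theorem. One small slip in your write-up: in the second stage your labels are reversed --- the explicit test function produces the \emph{lower} bound on $R(x,B_t(x)^c)$ (it exhibits a competitor with energy $\lesssim 1/t$), while Lemma~\ref{engestimate} yields the \emph{upper} bound (every admissible $u$ has $\eng(u,u)\gtrsim 1/t$).
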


\begin{proof}
For $d= d_G$ this is a result of Theorem \ref{lemR}, Proposition~\ref{meas regularity}, the fact that $d_G$ is a geodesic metric and~\cite[Theorem 15.10]{Kig12}. Note that by~ \cite[Proposition 7.6]{Kig12} the (ACC) condition is satisfied for a local resistance form like $\eng$ on a compact space. Since $d_G$ and Eucildean distance are equivalent metrics, this implies the result for Euclidean distance as well.
\end{proof}

\section{Fractal quantum graphs}
\label{sec-frg}

In this section we present an abstract construction which resembles many topological, metric, resistance and energy 
 properties of the Hanoi fractal quantum graph. 

\begin{definition}
A compact metric space $(X,d)$ is called a \emph{fractal quantum graph} with length system $(\Phi_k,\ell_k)$ if there are positive lengths $\set{\ell_k}_{k=1}^\infty$ and a set of embeddings $\Phi_k:I_k := [0,\ell_k] \to X$  such that $\Phi_k|_{(0,\ell_k)}$ are local isometries with disjoint images, i.e. $\Phi_j((0,\ell_j)) \cap \Phi_k((0,\ell_k)) = \emptyset$ for $j\neq k$. $I_k$ is thus homeomorphic to $\Phi_k([0,\ell_k])$ with the subspace topology induced by $X$, and for any $x\in(0,\ell_k)$ there is $\eps>0$ such that if $|y-x| < \eps$, then $d(\Phi_l(x),\Phi_k(y)) = |x-y|$.

Further, we define
\[
J_n := \bigcup_{k=1}^n \Phi_k((0,\ell_k))
\]
to be the union of the image of the interiors of $I_k$ for $k\leq n$ and assume that $\cap_{n=1}^\infty J_n^c$ is a \emph{totally disconnected compact set}. Here, $J_n^c$ denotes the complement of $J_n$ in $X$. 
\end{definition}
 
If $(X,d)$ is a  fractal quantum graph with length system $(\Phi_k,\ell_k)$, we define the space of functions $\F_n$ to be the functions $f:X\to\R$ such that $f\circ \Phi_k \in H^1([0,\ell_k])$ for all $k\leq n$, and $f$  is locally constant on $J_n^c$. Here, locally constant means that any $x\in J_n^c$ has a neighborhood $U_x\subset J_n^c$ which is relatively open in $J_n^c$ and $f|_{U_x}$ is constant. 

It is elementary to show that $\F_n$ is a linear space and that $\F_m\subset \F_n$ for $m\leq n$. Denoting $\F_*:=\cup_{n=1}^\infty\F_n$, we define the bilinear form $\eng_*:\F_*\times \F_* \to \R$ for $f,g\in\F_n$ by
\[
\eng_*(f,g) = \sum_{k=1}^\infty\int_0^{\ell_k} (f\circ\Phi_k)'(t)(g\circ\Phi_k)'(t) \ dt.
\]
It is straightforward to see that   $\eng_*$ is non-negative definite and satisfies the Markov property as in (RF5). Also, $\eng_*(f,f) = 0$ only if $f$ is a constant function because if $f\in\F_n$ is not constant it must be non-constant on some $\Phi_k(I_k)$.   

The form $\eng_n$, which is the restriction of $\eng_*$ to $\F_n$, induces the following pseudo-metrics on $X$
\[
R_n(x,y) = \sup\set{\frac{|f(x)-f(y)|^2}{\eng_n(f,f)}~|~f\in\F_n,~\eng(f,f)\neq 0}.
\]
It follows from the literature on resistance forms that $R_n$ satisfies the triangle inequality although $R_n(x,y)$ may vanish for $x\neq y$. In fact, if $x$ and $y$ are in the same connected component of $J_n^c$, then $R_n(x,y) = 0$, but it follows from an argument similar to that in Theorem~\ref{lemR} (3) that if $x,y\in J_n, x\neq y$, then $R_n(x,y)>0$.

\begin{theorem}\label{thm:limit metric}
Suppose that a compact metric space $(X,d)$ is a fractal quantum graph with length system $(\Phi_k,\ell_k)$. Then the following statements are equivalent: 
\begin{enumerate}
\item\label{i1}$R_n(x,y)$ converges to a metric $R$ on $X$ with the same topology as $d$;
\item there is a resistance form $\eng$ on $X$ with resistance metric $R=\lim_{n\to\infty}R_n(x,y)$. This metric induces the same topology as $d$, $\F_* \subset \dom\eng$, $\eng(f,f) = \eng_*(f,f)$ for all $f\in\F_*$, and $\F_*$ is dense in $\dom\eng$ in the sense that for all $g\in \dom\eng$ there is $\set{f_i}_{i=1}^\infty\subset \F_*$ such that $\lim\limits_{i\to\infty}\eng(f_i-g)= 
0$.
\end{enumerate}
\end{theorem}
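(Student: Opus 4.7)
The easier implication (2) $\Rightarrow$ (1) reduces to the observation that each $R_n$ is bounded above by $R$ on $X \times X$ (since $\F_n \subset \dom \eng$ with $\eng(f,f) = \eng_*(f,f) = \eng_n(f,f)$ for $f \in \F_n$), while the density of $\F_*$ in $\dom \eng$ hypothesized in (2) forces the reverse inequality via the variational characterization of $R$; thus $R_n \nearrow R$ pointwise, and the topology coincidence is part of (2).

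For (1) $\Rightarrow$ (2), the plan is to follow the template of Sections \ref{subsubsec:traces}--\ref{subsubsec:dom_characterization} almost verbatim. By separability of $X$, pick an increasing sequence of finite subsets $S_k \subset X$ whose union is dense, and for each finite $S \subset X$ set
\[
\eng_S(u,u) = \inf_n \inf \{ \eng_n(v,v) : v \in \F_n,\; v|_S = u \}, \qquad u \in \ell(S).
\]
The finite-dimensional axioms (RF1)--(RF5) transfer essentially verbatim from Section \ref{subsubsec:traces}; positivity of $\eng_S$ uses the analog of Lemma \ref{engestimate}, namely $|v(x)-v(y)|^2 \le L_n(x,y)\,\eng_n(v,v)$ for $v \in \F_n$, which follows from a one-dimensional Cauchy--Schwarz estimate along a path through the edges $\Phi_k(I_k)$, $k \le n$. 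Assumption (1) identifies the finite resistance metric of $\eng_S$ with $R|_{S\times S}$, so the system $\{\eng_{S_k}\}$ is compatible in the sense of \cite[Theorem 3.13]{Kig12} by the same interchange-of-infima computation used in Section \ref{subsubsec:resistanceform}. Applying that theorem yields a resistance form $\eng$ on the $R$-completion of $\bigcup_k S_k$, which by (1) coincides with $X$, and the construction forces $\F_* \subset \dom \eng$ with $\eng = \eng_*$ on $\F_*$.

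The main obstacle is the density of $\F_*$ in $\dom \eng$. Given $u \in \dom \eng$, I would mimic the construction at the end of Section \ref{subsubsec:dom_characterization} and define $u_n \in \F_n$ by keeping $u$ on the edges $\Phi_k(I_k)$ for $k \le n$ and replacing $u$ by a suitable constant on each connected component of $J_n^c$. The required control $\eng(u-u_n) \to 0$ hinges on showing that the $R$-diameter of each connected component of $J_n^c$ tends to zero uniformly in the component as $n \to \infty$. This must be extracted from the combination of assumption (1), which identifies the $R$-topology with $d$, the hypothesis that $\bigcap_n J_n^c$ is totally disconnected, and compactness of $X$. Converting this topological information into the quantitative diameter bound — an abstract analog of Theorem \ref{lemR}(3) in a setting that offers no self-similar scaling — is the delicate technical step, and I expect it to be the one requiring the most care.
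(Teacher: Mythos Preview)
Your outline for $(2)\Rightarrow(1)$ and for the construction of $\eng$ via compatible sequences in $(1)\Rightarrow(2)$ is essentially the paper's argument. Two remarks: the bound you need for positivity of $\eng_S$ is simply $|v(x)-v(y)|^2\le R_n(x,y)\,\eng_n(v,v)$, which is immediate from the definition of $R_n$ and does not require finding a path through edges (there need not be one); and the claim that ``the construction forces $\eng=\eng_*$ on $\F_*$'' is not automatic --- one only gets $\eng\le\eng_*$ for free, and the paper obtains equality by choosing the $S_n$ to be $1/n$-nets containing all edge endpoints $\Phi_k(0),\Phi_k(\ell_k)$ for $k\le n$, so that the minimizer of $\eng_*$ over $\{u|_{S_n}=f|_{S_n}\}$ is an explicit piecewise-linear function in $\F_{n_0}$ whose energy converges to $\eng_{n_0}(f,f)$.

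The substantive divergence is in the density step. The paper does \emph{not} use the averaging construction you propose. Instead, it argues abstractly: for $f\in\dom\eng$, take $u_{n,k}\in\F_*$ with $u_{n,k}|_{S_n}=f|_{S_n}$ and $\eng_*(u_{n,k},u_{n,k})\to\eng_{S_n}(f|_{S_n},f|_{S_n})$ (such a sequence exists because $\eng_{S_n}$ is defined as an infimum over $\F_*$), then diagonalize. Since $\eng_{S_n}(f|_{S_n})\to\eng(f)$ by the definition of $\eng$, this gives $\eng(u_{n,n})\to\eng(f)$ and $u_{n,n}=f$ on the $1/n$-net $S_n$, hence uniform convergence; combining norm convergence with weak convergence in the energy Hilbert space gives $\eng(u_{n,n}-f)\to 0$. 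No diameter bound on components of $J_n^c$ is ever invoked.

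Your route is not wrong in spirit, but as written it has a gap beyond the diameter issue you flag. If a connected component $C$ of $J_n^c$ meets more than one edge endpoint (as in the Hanoi case, where each cell has three), then ``keeping $u$ on the edges and replacing $u$ by a constant on $C$'' does not produce a continuous function, so $u_n\notin\F_n$. You would have to modify $u$ on the adjoining edges as well (the Hanoi proof does exactly this: average on cells \emph{and} re-interpolate on segments), and then the energy of the modification is not controlled by the $R$-diameter of $C$ alone --- you also need information about the resistances of the adjoining edges relative to that diameter. The paper's diagonal argument sidesteps all of this.
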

Note that, since $X$ is compact with respect to the effective resistance metric, if $\set{f_n}_{n=1}^\infty$ converges to $g$ in energy, this implies that there exists $\{\tilde f_n\}_{n=1}^\infty\subset\dom\eng$ such that $f_n-\tilde f_n$ is constant and $\tilde f_n$ converges to $g$ in energy, uniformly, and even in $\frac12$-H\"older convergence with respect to the effective resistance metric.

\begin{proof}[Proof of $(2){\implies}(1)$]
Assume there is $\eng$ on $X$ with resistance metric $R$ such that $\eng(f,f) = \eng_*(f,f)$ for $f\in \F_*$. Then,
\begin{align*}
R(x,y)&  = \sup\set{\frac{|f(x)-f(y)|^2}{\eng(f,f)}~|~\eng(f,f) \neq 0, ~f\in\F_*}
\\ & = \sup_n\sup\set{\frac{|f(x)-f(y)|^2}{\eng(f,f)}~|~\eng(f,f) \neq 0, ~f\in\F_n} 
\\ & = \lim_{n\to\infty} R_n(x,y).
\end{align*} 
The first equality above is because $\F_*$ is dense in $\dom\eng$, and the last equality is because $\F_m\subset \F_n$ for $m\leq n$ so $R_n$ is increasing in $n$.
\end{proof}
\begin{proof}[Proof of $(1){\implies}(2)$]
Assume that $R(x,y) := \lim_{n\to\infty} R_n(x,y)$ is a metric on $X$ that induces the same topology on $X$ as the metric $d$. In this case, 
\begin{align}
\label{engbound}
|f(x)-f(y)|^2 \leq \eng_n(f,f)R_n(x,y) \leq \eng_n(f,f)R(x,y)
\end{align}
for any $f\in \F_n$, and thus $\F_n\subset C(X)$, where $C(X)$ is the set of continuous functions on $X$ (note there is no ambiguity in $C(X)$ because $d$ and $R$ are assumed to induce the same topologies).

This implies that $\Phi_k((0,\ell_k))$ is an open set in $X$ for any $k$ because $\Phi_k((0,\ell_k)) = f^{-1}((0,\infty))$, where $f$ is the function in $\F_n$, $n\geq k$, defined to be $0$ on the complement of $\Phi_k((0,\ell_k))$ and satisfying $f\circ \Phi_k(t) = t(\ell_k-t)$ for $t\in I_k$.

For any finite subset $S\subset X$, define
\[
\eng_S(f,f):= \inf\set{\eng_*(u,u)~|~u\in\F_*,~u|_S = f}.
\]
We establish that $\eng_S$ is a resistance form on $\ell(S)$ proving first that $\eng_S(f,f)$ is well defined for all $f\in \ell(S)$. Let us consider $x,y\in S$. Since $R(x,y) > 0$, there is $n\in\N$ such that $R_n(x,y)>0$ and therefore $u\in\F_n$ such that $u(x)\neq u(y)$, so that $\F_*$ separates points in $S$. Since $u+c\in\F_n$ for any $c\in\R$, $\F_*$ vanishes nowhere on $S$. This implies that $\eng_S$ has domain $\ell(S)$.

\begin{enumerate}
[(RF1)]
\item $\eng_*$ is symmetric and non-negative definite, so $\eng_S$ must be as well. If $f\equiv c$ is a constant function, then $u\equiv c\in\F_*$ and $\eng_*(u,u) = 0$, hence $\eng_S(f,f)= 0$. On the the other hand, if $f\in\ell(S)$ is non-constant, then there are $x,y\in S$ with $x\neq y$ and $f(x)\neq f(y)$ so that
\[
\eng_*(u,u) \geq \frac{|f(x)-f(y)|^2 }{R(x,y)} > 0. 
\] 
\item This follows from (R1) because $\ell(S)$ is finite-dimensional. 
\item Since the domain of $\eng_S$ is $\ell(S)$, there is clearly $f\in\ell(S)$ such that $f(x)\neq f(y)$.
\item This follows from the bound in (\ref{engbound}) and a similar argument to that in (RF4) in Subsection~\ref{subsubsec:resistanceform}.
\item  $\eng_*$ has the Markov property, which implies that $\eng_S$ has the Markov property as well. 
\end{enumerate}

Next, we select a sequence of finite sets $\set{S_n}_{n=1}^\infty$ with $S_m\subset S_n$ for $m\leq n$ and such that $S_* :=\cup_{n=1}^\infty S_n$ is dense in $X$. It follows from the argument in 
Subsection~\ref{subsubsec:resistanceform} of the proof of Theorem~\ref{thm resistance form} 
that $\eng_{S_n}$ is a compatible sequence. Thus we may apply~\cite[Theorem 3.13]{Kig12} to obtain a resistance form 
\[
\eng(f,f) = \lim_{n\to\infty}\eng_{S_n}(f|_{S_n},f|_{S_n}),
  \ \qquad
\dom\eng = \set{f\in \ell(S_*)~|~\lim_{n\to\infty}\eng_{S_n}(f|_{S_n},f|_{S_n})<\infty}.
\]
If $R_\eng$ and $R_{\eng_{S_n}}$ are the resistance metrics associated to $\eng$ and $\eng_{S_n}$ respectively, then we have that $R_\eng(x,y) = R_{\eng_{S_n}}(x,y)$ for all $x,y\in S_n$. From the calculation in (RF4) of Subsection~\ref{subsubsec:resistanceform} we have that $R_{\eng_{S_n}}(x,y) = R(x,y)$ for all $x,y\in S_n$. Thus $R_\eng(x,y) = R(x,y)$ for all $x,y\in S_*$ and since $X$ is compact (and hence complete), $(X,R)$ is isometric to the completion $S_*$ with respect to $R_\eng$. In view of~\cite[Theorem 3.14]{Kig12}, we get
\begin{align}
\label{dom}
\dom\eng = \set{f\in C(X)~|~\lim_{n\to\infty}\eng_{S_n}(f|_{S_n},f|_{S_n})<\infty}.
\end{align}

To see that $\F_n\subset \dom\eng$, notice that $\eng_{S_m}(f|_{S_m},f|_{S_m}) \leq \eng_n(f,f)$ for any $m$ and $f\in\F_n$, and hence $\eng(f,f) \leq \eng_n(f,f)$.

To see that $\eng(f,f) = \eng_{n_0}(f,f)$ for any $f\in\F_{n_0}$, we assume without loss of generality that for any $n$, $S_n$ is a $1/n$-net, i.e. for any $x\in X$ there is $y\in S_n$ with $R(x,y)\leq 1/n$, and $\Phi_k(0),\Phi_k(\ell_k)\in S_n$ for all $k\leq n$. In this situation, for $n\geq n_0$, the minimum  of $\eng_*(u,u)$ with $u|_{S_n} = f|_{S_n}$ is attained by the function $u_n$ such that $u_n\circ \Phi_k(t) = f\circ \Phi_k(t)$ for $t\in \Phi_k^{-1}(S_n)$ and is linear everywhere else on $\Phi_k((0,\ell_k))$, as this minimizes energy on $I_k$ for all $k\leq n$. Finally, set $u_n(x) = f(x)$ for $x\in J_{n_0}^c$, i.e. extend to the rest of $J_{n_0}^c$ by constants.  Notice that for this $u_n$ to be well defined, it is important that we assumed that $\Phi_k(0)$ and $\Phi_k(\ell_k) \in S_{n_0}$ if $k\leq n_0$, because $\Phi_k(0)$ and $\Phi_k(\ell_k)$ are in both $\Phi_k(I_k)$ and $J_{n_0}^c$. In particular, $u\in\F_{n_0}$.  
We have established that 
\[
\eng(f,f) = \lim_{n\to\infty}\eng_{S_n}(f|_{S_n},f|_{S_n}) =\lim_{n\to\infty}\eng_{n_0}(u_n,u_n) = \eng_{n_0}(f,f),
\]
where the last inequality holds because $\Phi_k^{-1}(S_n)$ becomes uniformly dense in $I_k$ and thus
\[
\lim_{n\to\infty} \int_0^{\ell_k} ((u_n\circ\Phi_k)'(t))^2 \ dt = \int_0^{\ell_k} ((f\circ\Phi_k)'(t))^2 \ dt
\]
for all $k\leq n_0$.

To see that $\F_*$ is dense in $\dom\eng$, 
 we know from the definition of the domain in~\eqref{dom} that for any $f \in \dom\eng$ there is $f_n\in\dom\eng$ such that $\eng(f_n,f_n) = \eng_{S_n}(f_n|_{S_n},f_n|_{S_n})$ and $f_n(x) = f(x)$ for all $x\in S_n$. In particular $\lim_{n\to\infty}\eng(f_n,f_n) = \eng(f,f)$.
Since 
\[
\eng_{S_n}(f_n|_{S_n},f_n|_{S_n})  =\inf\set{\eng_*(u,u)~|~u\in\F_*,~u|_{S_n}=f|_{S_n}},
\]
there is $(u_{n,k})_{k\in\N}\subset \F_*$ with $\lim_{k\to\infty}\eng_*(u_{n,k},u_{n,k}) =\lim_{k\to\infty}\eng(u_{n,k},u_{n,k})= \eng_{S_n}(f_n|_{S_n},f_n|_{S_n})$ and $u_{n,k}(x) = f_n(x) = f(x)$ for all $x\in S_n$. By diagonalizing and passing to a subsequence if required, $\lim_{n\to\infty}\eng(u_{n,n},u_{n,n}) = \eng(f,f)$. Since $u_{n,n}(x) = f(x)$ on the $1/n$-net $S_n$, for any $y\in X$ and arbitrary $x\in S_n$ we have
\[
|u_{n,n}(y)-f(y)| = |(u_{n,n}(y)-f(y)-(u_{n,n}(x) - f(x))| \leq R(x,y)\eng(u_{n,n}-f,u_{n,n}-f). 
\] 
This quantity vanishes as $n\to\infty$, which establishes that $\F_*$ is dense in the prescribed manner.
\end{proof}

\begin{definition} 
A fractal quantum graph $(X,d)$ is called a \emph{proper  fractal quantum graph} if the maps $\Phi_k: (0,\ell_k) \to X$ are open. 

A compact geodesic metric space $(X,d)$ which is a proper quantum graph is called a \emph{proper geodesic fractal quantum graph}.
\end{definition}

This definition means that $I_k$ is homeomorphic to $\Phi_k([0,\ell_k])$ with the subspace topology induced by $X$, and for any $x\in(0,\ell_k)$, there is $\eps>0$ such that the $\eps$-neighborhood of $x$ is mapped isometrically onto the $\eps$-neighborhood of $\Phi_k(x)$. In particular, $|y-x| < \eps$ if and only if $d(\Phi_k(x),\Phi_k(y)) = |x-y|< \eps$.  

Note that the assumption of the existence of the geodesic metric for a local resistance form is natural because of the results in~\cite{HKT13}. With this assumption, we have the following theorem. 

\begin{theorem}\label{thm:pgfqg}
Any proper geodesic fractal quantum graph satisfies conditions of Theorem~\ref{thm:limit metric}(\ref{i1}) and so $\eng_*$ extends to the resistance form $\eng$ on $X$ with resistance metric $R(x,y)=\lim\limits_{n\to\infty}R_n(x,y)$, which induces the same topology as $d$.
\end{theorem}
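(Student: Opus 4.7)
The plan is to verify condition~(\ref{i1}) of Theorem~\ref{thm:limit metric}: the nondecreasing sequence $R_n$ converges to a metric $R$ on $X$ that induces the same topology as $d$. First I would prove the geodesic analog of Lemma~\ref{engestimate}: for any $u\in\F_n$ and $p,q\in X$,
\[
|u(p)-u(q)|^2\leq d(p,q)\,\eng_n(u,u).
\]
Pick a unit-speed geodesic $\gamma\colon[0,L]\to X$ with $L=d(p,q)$. Because each $\Phi_k$ is an open local isometry, $\gamma$ coincides locally with any embedded arc it enters, and because it is minimizing it cannot backtrack; thus the connected components of $A:=\gamma^{-1}(J_n)$ embed into pairwise disjoint sub-arcs of the $\Phi_k(I_k)$, $k\leq n$. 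Writing $f=u\circ\gamma$, continuity of $\gamma$ together with local constancy of $u$ on $J_n^c$ gives, for each $t_0\in B:=[0,L]\setminus A$, a $[0,L]$-neighborhood on which $f|_B\equiv f(t_0)$; a finite subcover from compactness of $B$ then forces $f|_B$ to take only finitely many values. In particular the continuous BV function $f$ has no singular part, so $f(L)-f(0)=\int_0^L f'(t)\mathbf{1}_A(t)\,dt$, and Cauchy--Schwarz together with the disjointness of the sub-arcs traversed yields the claimed inequality. Passing to the supremum in $u$ gives $R_n\leq d$, hence $R(x,y)\leq d(x,y)<\infty$.

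For strict positivity of $R$ off the diagonal, fix $x\neq y$. If $x\in J_{n_0}$ then $x=\Phi_k(s)$ for some $k\leq n_0$ and $s\in(0,\ell_k)$, and a bump $u\in\F_{n_0}$ supported in a sub-arc of $\Phi_k$ excluding $y$ separates them with finite energy. Otherwise $x\in J_\infty^c$; let $C_n$ be the connected component of $J_n^c$ containing $x$. Since $J_{n+1}^c\subseteq J_n^c$ we have $C_{n+1}\subseteq C_n$, and $\bigcap_n C_n$ is a connected subset of the totally disconnected set $J_\infty^c$ containing $x$, hence equals $\{x\}$; a standard nested-compactness argument then forces $\diam(C_n)\to 0$. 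Choose $n$ so large that $y\notin C_n$ and define $u$ on $J_n^c$ by $u=1$ on $C_n$ and $u=0$ on every other component of $J_n^c$, extending across each of the $n$ arcs $\Phi_k((0,\ell_k))$, $k\leq n$, by linear interpolation between the already-determined boundary values $u(\Phi_k(0)),u(\Phi_k(\ell_k))$. Then $u\in\F_n$, $u(x)=1$, $u(y)=0$, and $\eng_n(u,u)\leq\sum_{k=1}^n \ell_k^{-1}<\infty$, so $R(x,y)\geq R_n(x,y)>0$.

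Combined with the triangle inequality inherited from each pseudo-metric $R_n$, the first two steps show that $R$ is a metric. For the topological equivalence, Step~1 makes $\mathrm{id}\colon(X,d)\to(X,R)$ a $1$-Lipschitz continuous bijection from a compact space to a Hausdorff space, hence a homeomorphism; thus $R$ and $d$ generate the same topology. Condition~(\ref{i1}) of Theorem~\ref{thm:limit metric} is therefore satisfied, and its implication $(\ref{i1})\Rightarrow(2)$ delivers the resistance form $\eng$ extending $\eng_*$. The main obstacle is the geodesic Cauchy--Schwarz bound: $B$ may be totally disconnected with accumulation, and the real work is ruling out singular (Devil's-staircase-like) variations of $f=u\circ\gamma$ on $B$. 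The crucial ingredient is that local constancy of $u$ on $J_n^c$ together with compactness of $B$ forces $f|_B$ to have finite range, which eliminates any singular contribution and reduces the estimate to classical Cauchy--Schwarz on the $H^1$-components of $A$.
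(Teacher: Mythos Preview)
Your approach is essentially the paper's: bound $R_n\leq d$ via a Cauchy--Schwarz estimate along geodesics, show $R>0$ off the diagonal by separating points with an element of $\F_*$, and deduce topological equivalence. Your treatment of the Cauchy--Schwarz step is more explicit than the paper's (which simply refers back to Lemma~\ref{engestimate}), and your compact-to-Hausdorff argument for the topology is cleaner than the paper's description of a common base of open sets.

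There is one genuine gap in your positivity step (Case~2). You define $u=1$ on the connected component $C_n$ of $J_n^c$ containing $x$ and $u=0$ on every other component, then claim $u\in\F_n$. Membership in $\F_n$ requires $u$ to be \emph{locally constant} on $J_n^c$, which forces $C_n$ to be open in $J_n^c$; nothing in the hypotheses guarantees this (components of a compact metric space need not be open --- think of a convergent sequence together with its limit). The repair is immediate: since $J_n^c$ is compact metric, components coincide with quasi-components, so once you know $y\notin C_n$ there is a \emph{clopen} set $A_n\subseteq J_n^c$ with $x\in A_n$ and $y\notin A_n$. Setting $u=\mathbf 1_{A_n}$ on $J_n^c$ and interpolating linearly on the arcs gives a genuine element of $\F_n$ separating $x$ and $y$. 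This is exactly the content of the paper's terse remark that $x$ and $y$ ``can be separated by two disjoint compact subsets by removing finitely many open edges.''
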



\begin{proof}
It is easy to show that $R_n(x,y)\leqslant d(x,y)$ by the same method as in Lemma~\ref{engestimate}. Thus, since $R_n(x,y)$ is an increasing and bounded sequence, it must converge to $R(x,y)\leqslant d(x,y)$. Since $R_n$ satisfies the triangle inequality, so must $R$, and all that is left to establish that $R$ is a metric is to show that $R(x,y) >0$ when $x\neq y$.
In fact, $J_n^c$ are compact subsets with totally disconnected intersection, and so any distinct $x,y\in X$ can be separated by two disjoint compact subsets by removing finitely many open edges. Thus there is $n\in\N$ such that $R_n(x,y) >0$. This settles all topological questions. For instance, the common base of open sets, both for $d$ and $R$, can be defined as follows:  all open subsets of the open edges $\Phi_k((0,\ell_k))$; for $\eps_n$ small enough, connected components of $\eps_n$-neighborhoods of $J_n^c$ with respect to the  metric $d$.  
The notion of small enough $\eps_n$ is understood in the sense that all these connected components of $\eps_n$-neighborhoods of $J_n^c$ should have either no intersection, or be contained one in another.  
\end{proof}

\begin{remark}Note that  the closed edges $\Phi_k(I_k)$ together with the complements  $J_n^c$ will define a finitely ramified cell structure, in the sense of \cite{Tep08}.
\end{remark}

\begin{remark}
Note that we do {\bf not} claim in Theorems~\ref{thm:limit metric} and \ref{thm:pgfqg} that the domain of $\eng$ can be identified with an analogue of $H^1(X)$. This was one of the main results for the Hanoi fractal quantum graph, which may require extra assumptions in a more general situation. 
\end{remark}

\begin{example}
Since the Hanoi quantum graph provides a good example of a proper geodesic quantum graph, we also would like to present as standard counterexample the infinite broom (see, for instance, \cite{SS95}): Let $X:= \cup_{k=0}^\infty \Phi_k([0,\ell_k])$, where $\Phi_k:[0,\ell_k] \to \R^2$ are defined by $\ell_0 = 1$, $\Phi_0(t) = (t,0)$, $\ell_k = \sqrt{1+k^{-2}}$ and 
\[
\Phi_k(t) = \frac{t}{\sqrt{1+k^{-2}}}(1,k^{-1}).
\]
If we equip $X$ with the Euclidean distance, $X$ along with the maps $\set{\Phi_k}_k$ form a compact fractal quantum graph that is not proper. In particular, the functions in $\F_*$ are not necessarily continuous, for example the function such that $f\circ\Phi_1(t)  = t$ and $f(x) = 0$ for $x\notin \Phi((0,1])$. Thus $R_n$ cannot converge to a metric which induces the same topology. However, $R_n$ does converge to a geodesic metric $R$ on $X$. With this metric, $X$ is isometric to the space $\sqcup_{k=1}^\infty I_k/\sim$ where $\sim$ is the equivalence relation that identifies the $0$ element in each $I$, and $R$ is the length metric induced by Euclidean distance. Thus $\eng_*$ induces a resistance form on this metric space. However this space is not compact in the effective resistance topology, and not even locally compact. Many related questions are discussed in \cite{Kig95,Kig12}. 
\end{example}

\section{Generalized Hanoi-type quantum graphs}
In this section we briefly present a multidimensional version of the 
Hanoi quantum graphs. 
Let $N_0>2$ be a natural number and let $\alpha>0$ be fixed. Further, consider the alphabet $\A_{N_0}:=\{1,\ldots ,N_0\}$ and the contractions $F_i\colon\R^{N_0-1}\to\R^{N_0-1}$, $i\in\A_{N_0}$. Each mapping $F_i$ has contraction ratio $r_i=\frac{1-\alpha}{2}$($<1$) and fixed point $p_i$. We also set $V_{N_0}:=\{p_1,\ldots ,p_{N_0}\}$. 

The \textit{generalized Hanoi attractor of parameters} $N_0$ and $\alpha$ is the unique non-empty compact subset of $\R^{N_0-1}$ such that 
\[K_{\alpha,N_0}=\bigcup_{i=1}^{N_0}F_i(K_{\alpha,N_0})\cup\bigcup_{\{i,j\}\subset\A_{N_0}^2}[i,j],\]
where $[i,j]$ denotes the straight line joining the points $F_i(p_j)$ and $F_j(p_i)$ (note that $i~\neq~j$). It is easy to see that the Hausdorff dimension of this set is given by 
\[
\max\left\{1,\frac{\ln N_0}{\ln 2-\ln(1-\alpha)}\right\}.
\]
If we choose $\alpha$ in the interval $(0,\frac{N_0-2}{N_0})$, then $\dim K_{\alpha,N_0}>1$ and we obtain a fractal. In the following, we will only consider $\alpha$ belonging to this interval.

\begin{remark}
The case $N_0=3$ corresponds to the Hanoi attractor treated in Sections 3-5. In the case $N_0=4$, $K_{\alpha,N_0}$  fits into a tetrahedron of side length $1$.
\end{remark}

Let us now consider the generalized Hanoi attractor of parameter $N_0$ for a fixed $\alpha$ and denote it by $X_{N_0}$. This set may be approximated by the sequence of metric graphs $(X_{N_0,n})_{n\in\N}$, where {$X_{N_0,n}:=(V_{N_0,n},E_{N_0,n},\partial,r)$} is defined analogously to Definition~\ref{def ApproxMetricGraphs} just substituting $\A$ by $\A_{N_0}$.

By doing the obvious substitutions in Definition~\ref{def LevelEnergy}, we define the energy of the $n-$th approximation of $X_{N_0}$, $\eng_{N_0,n}\colon\mathcal{F}_{N_0,n}\times\mathcal{F}_{N_0,n}\to\R$ by
\[
\eng_{N_0,n}(u,v):=\int_{X_{N_0}}u'v'\,dx
\]
for all $u,v\in\F_{N_0,n}$, i.e. functions everywhere constant out of finitely many segments corresponding to ``joining-type'' edges of $X_{N_0,n}$.
By the same arguments as in Section 5 we get a suitable domain $\Dom\eng_{N_0}$ on $X_{N_0}$ such that


\begin{proposition}
$(\eng_{N_0},\F_{N_0})$ is a resistance form.
\end{proposition}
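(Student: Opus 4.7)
The plan is to verify that the generalized Hanoi attractor $X_{N_0}$, equipped with its intrinsic path metric, satisfies the hypotheses of Theorem \ref{thm:pgfqg}; one can then read off the resistance form $\eng_{N_0}$ as the extension furnished by Theorem \ref{thm:limit metric}. An equivalent but longer route would be to repeat the proof of Theorem \ref{thm resistance form} from Section \ref{resistance} verbatim, substituting $\mathcal{A}_{N_0}$ for $\mathcal{A}$ and replacing the Delta-Y reduction of the 3-vertex network by the analogous reduction of a fully connected $N_0$-vertex network. Both routes lead to the same conclusion, but the abstract route is shorter once the quantum graph structure on $X_{N_0}$ has been set up.

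For the abstract route, I would first enumerate the countable family of straight line segments in $X_{N_0}$ --- namely the images $F_w([i,j])$ for $w\in\bigcup_n\mathcal{A}_{N_0}^n$ and $\{i,j\}\subset\mathcal{A}_{N_0}$ --- in some order $(\Phi_k,\ell_k)_{k\ge 1}$, with each $\Phi_k$ parametrizing a Euclidean segment by arc length. These $\Phi_k$ are local isometries with pairwise disjoint open images, so $X_{N_0}$ is a proper fractal quantum graph in the sense of Section \ref{sec-frg}. Equipping $X_{N_0}$ with the intrinsic path metric $d$ makes it a compact geodesic space, and one verifies as in Remark \ref{Eu bilip R} that $d$ is bi-Lipschitz equivalent to the ambient Euclidean metric on $\R^{N_0-1}$. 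Theorem \ref{thm:pgfqg} then yields a resistance form $\eng$ on $X_{N_0}$ with $\F_*:=\bigcup_n \F_{N_0,n}\subset\dom\eng$ and $\eng|_{\F_*}$ given by the same edgewise integral formula that defines $\eng_{N_0}$; so the identification with $(\eng_{N_0},\F_{N_0})$ is automatic.

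The main obstacle is verifying the total disconnectedness of $\bigcap_n J_n^c$, which is the crucial hypothesis in the definition of a fractal quantum graph. The parameter restriction $\alpha\in(0,\tfrac{N_0-2}{N_0})$ guarantees that the $\binom{N_0}{2}$ joining segments inserted at the first stage genuinely separate the $N_0$ subcells $F_i(X_{N_0})$ in the Euclidean topology, and an open-set-condition-type analysis applied recursively shows that the residual set is a topological Cantor set --- essentially homeomorphic to $\mathcal{A}_{N_0}^{\N}$ after identifying pairs of symbolic sequences that correspond to the same junction point --- and hence totally disconnected. A secondary computation needed along the way, for the lower bound in the analogue of Theorem \ref{lemR}(\ref{i3}), is the $N_0$-vertex generalization of the Delta-Y recursion of Figure \ref{deltay-}: this produces a contraction $\mathcal{R}_{n+1}=a(N_0,\alpha)\,r\,\mathcal{R}_n+\alpha$ with $a(N_0,\alpha)<1$ whose unique fixed point gives the effective resistance between the corner points of $X_{N_0}$ and powers the bi-Lipschitz comparison with Euclidean distance used above.
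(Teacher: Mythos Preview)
Your proposal is correct but takes a different route from the paper. The paper's ``proof'' is simply the clause preceding the proposition: repeat the construction of Section~\ref{resistance} verbatim (the metric observations, the finite traces $\eng_S$, the compatible-sequence argument via Kigami's theorem, and the domain characterization) with $\mathcal{A}$ replaced by $\mathcal{A}_{N_0}$. You instead invoke the abstract framework of Section~\ref{sec-frg}, casting $X_{N_0}$ with its intrinsic path metric as a proper geodesic fractal quantum graph and applying Theorem~\ref{thm:pgfqg}. This is shorter and is precisely what Section~\ref{sec-frg} was designed for; its only cost is that, per the Remark following Theorem~\ref{thm:pgfqg}, it does not automatically identify $\dom\eng_{N_0}$ with an explicit $H^1(X_{N_0})$, whereas the paper's full replay of Section~\ref{resistance} would.

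Two small corrections. First, the restriction $\alpha<\tfrac{N_0-2}{N_0}$ is a Hausdorff-dimension constraint, not what guarantees that the joining segments separate the first-level subcells; any $\alpha>0$ already does the latter. Second, in your Delta-Y remark the recursion coefficient is not $a(N_0,\alpha)<1$ --- already for $N_0=3$ one has $a=5/3$ --- rather one needs $a(N_0,\alpha)\,r<1$. But that whole computation is unnecessary for your abstract route: Theorem~\ref{thm:pgfqg} requires only that $(X_{N_0},d_G)$ be a compact geodesic proper fractal quantum graph, and the comparison of $d_G$ with Euclidean distance from Remark~\ref{Eu bilip R} is a purely geometric argument that does not use effective resistances at all.
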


From this resistance form, we obtain a Dirichlet form $(\eng_{N_0},\dom\eng_{N_0})$ by considering a measure $\mu_{N_0}$ on $X_{N_0}$ following the construction of $\mu$ in Section~\ref{sec-asymp}. We thus introduce the parameter $\beta~>~0$ that measures the lines of length $\alpha$. This parameter needs to belong to the interval $\big(0,\frac{2}{N_0(N_0-1)}\big)$ because otherwise, since
\[
s:=\mu_{N_0}(X'_{N_0})=\frac{2-N_0(N_0-1)\beta}{2N_0},
\]
where $X'_{N_0}$ denotes any first-level copy of $X$, would be zero or negative. 

The definition of $s$ comes from the fact that we want the measure $\mu_{N_0}$ to satisfy
\[1=\mu(X)=\frac{N_0(N_0-1)}{2}\beta+N_0\mu_{N_0} (X_{N_0}'),\]
where $\frac{N_0(N_0-1)}{2}$ is the number of straight lines joining the different copies $X'_{N_0}$.

Following the proofs of Section \ref{sec-asymp} just replacing $X$ by $X_{N_0}$ and $(\eng,\Dom\eng)$ by $(\eng_{N_0},\Dom\eng_{N_0})$, one obtains Theorem~\ref{theorem: d_S X_N_0} on the spectral asymptotics of the corresponding eigenvalue counting function of the associated Laplacian, leading to the spectral dimension of $X_{N_0}$. In this more general case, it follows directly from the choice of $\alpha$ and $\beta$ that 
\[
rs={(1-\alpha)[2-\beta N_0(N_0-1)]}/{2N_0}<{1}/{2N_0}.
\]
Finally, using the techniques from Section~\ref{sec HKEs}, if $\alpha \in \big(\frac{N_0-2}{N_0},1\big)$, then the Dirichlet form with respect to the 1-Hausdorff measure has a jointly continuous heat kernel which satisfies Gaussian estimates of the form given in Theorem~\ref{thm:H1 HKEs} with respect to either the geodesic metric or the Euclidean metric. 
%
\bibliography{Fractals}{}
\bibliographystyle{amsalpha}
%
\end{document}